\documentclass[
aps,
prl,
10pt,
amssymb,
amsmath,
superscriptaddress,
tightenlines,
twocolumn,
notitlepage,
]{revtex4-2}

\usepackage{graphicx}
\usepackage{amsthm}
\usepackage{dcolumn}
\usepackage{bm}
\usepackage{xcolor}
\usepackage{comment}
\usepackage{bbold}
\newtheorem{lemma}{Lemma}
\newtheorem{theorem}{Theorem}

\newtheorem{proposition}{Proposition}
\newtheorem{remark}{Remark}
\begin{document}

\title{Exact Magnetic Confinement of a Doped Hole by Kinetic Frustration}

\author{Nikolai Peshcherenko}
\altaffiliation{These authors contributed equally to this work.}
\affiliation{Max Planck Institute for Chemical Physics of Solids, 01187, Dresden, Germany}

\author{Yixin Zhang}
\altaffiliation{These authors contributed equally to this work.}
\affiliation{Department of Physics and Astronomy, University of Tennessee, Knoxville, TN 37996, USA}
\affiliation{School of Physics, Peking University, Beijing 100871, China}

\author{Cristian D. Batista}
\affiliation{Department of Physics and Astronomy, University of Tennessee, Knoxville, TN 37996, USA}

\author{Yang Zhang}
\affiliation{Department of Physics and Astronomy, University of Tennessee, Knoxville, TN 37996, USA}
\affiliation{Min H. Kao Department of Electrical Engineering and Computer Science, University of Tennessee, Knoxville, Tennessee 37996, USA}

\begin{abstract}
We establish an exact mechanism of magnetic confinement driven purely by kinetic frustration. In a finite open sawtooth $U=\infty$ Hubbard chain of $L$ unit cells doped with one hole, the ground state at magnetization $M$ contains $M$ polarized triplet dimers surrounding a connected singlet region of length $\ell_{\rm s}=L-M$. A magnetic field selects $M$ and hence the confinement length. The hole forms a standing wave in the singlet region with exponential tails in the triplet background. Confinement requires neither disorder nor an exchange-generated string tension: destructive interference between direct and apex-mediated hopping makes the triplet bonds reflecting. At $t_2/t_1=[2+4\cos(\pi/(\ell_{\rm s}+2))]^{-1/2}$, they become perfectly reflecting, producing a position-degenerate compact localized state.
At every fixed field $0<h<h_{\rm sat}$, the singlet region remains finite as $L\to\infty$ and shrinks stepwise with increasing field, whereas at $h=0$ it spans the system.
\end{abstract}

\maketitle
\textit{Introduction.---}
Since Anderson's seminal work~\cite{Anderson1958}, disorder has provided the canonical route by which extended electronic wavefunctions break down~\cite{Lee1985,Evers2008,abrahams1979,50years}. Several mechanisms can suppress transport without quenched disorder. Stark localization relies on an imposed spatial gradient~\cite{Schulz2019,vanNieuwenburg2019,Scherg2021}, while Hilbert-space fragmentation~\cite{Sala2020,Khemani2020,Moudgalya2022,Rakovszky2020}, lattice-gauge constructions with emergent disorder~\cite{Smith2017,Brenes2018}, and interaction-generated energy mismatches~\cite{Schiulaz2015,DeRoeck2014,Yao2016} primarily constrain many-body dynamics in configurational space~\cite{Basko2006,Gornyi2005,Nandkishore2015,Abanin2019}. Real-space confinement can instead arise from destructive interference, most prominently through compact localized states in flat-band lattices~\cite{flat_band_1,Leykam2018,Danieli2021}, where the quantum cage is fixed by the single-particle geometry. Here we establish a different possibility: interactions organize the surrounding spins into reflecting boundaries whose separation is controlled by a uniform magnetic field. The field produces no spatial potential; it selects a spin sector whose texture cages the charge.

Kinetically frustrated doped Mott insulators naturally provide this mechanism. A hole permutes the spins along its trajectory~\cite{Nagaoka_FM,tasaki1989extension,tasaki1998nagaoka}, so alternative paths around a frustrated loop interfere constructively or destructively according to the spin state they traverse~\cite{shastry_kinetic,Pollmann2008,Sposetti2014}. On corner-sharing lattices, this constraint can select exact ground states: chains and trees of triangles stabilize a hole in a valence-bond-solid background of singlet dimers~\cite{kim2023exact,sawtooth_zhang}, while the dilute pyrochlore realizes a resonating-valence-bond liquid bound by hole kinetics rather than superexchange~\cite{kinetic_frustration_pyrochlore}. For the uniformly coupled sawtooth chain, Zhang and Glittum~\cite{sawtooth_zhang} obtained the thermodynamic-limit energy $E_\infty=-(1+\sqrt{5})t$ and a closed-form wavefunction for the unpolarized ground state, in which the singlet background spans the system and the hole remains extended. This raises two questions. How does the interference change when the ratio $t_2/t_1$ between direct basal and apex-mediated hopping is varied? More importantly, can a magnetic field organize polarized dimers into boundaries that confine the hole to a region that remains finite in the thermodynamic limit?

We answer both by solving exactly the two-parameter $(t_1,t_2)$ sawtooth $U=\infty$ Hubbard chain with one hole in every magnetization sector. For a finite open chain of $L$ unit cells, a unique dimer-covering mapping~\cite{kim2023exact} reduces the many-body problem to a single-particle tight-binding chain whose bond amplitudes depend on the singlet or triplet character of each dimer. We prove that the ground state at magnetization $M$ contains exactly $M$ polarized triplet dimers and one connected singlet region of length $\ell_{\rm s}=L-M$, in which the hole forms a standing wave with exponential tails of closed-form penetration depth. A Zeeman field $h$ selects $M$ and thereby the size of this interaction-generated quantum cage. The cage size remains finite as $L\to\infty$ at every fixed field $0<h<h_{\rm sat}$ and shrinks in discrete steps with increasing field. The cage walls arise because direct and apex-mediated hopping interfere destructively across triplet bonds; at $t_2/t_1=t_{\ell_{\rm s}}$ [Eq.~\eqref{eq:tells_main}] they become perfectly reflecting and the hole forms an exact compact localized state whose degeneracy grows linearly with system size.

\textit{Model.---}
We consider the one-hole-doped, infinite-$U$ Hubbard model on the
sawtooth lattice~\cite{Hubbard1963,Imada1998,LiebWu1968,
Sen1996,Blundell2003,Heinze2021}
[Fig.~\ref{fig:saw_tooth}(a)]:
\begin{equation}
H=-t_1\sum_{\langle i,j\rangle,\sigma}
c^\dagger_{i\sigma}c_{j\sigma}
-t_2\sum_{\langle\!\langle i,j\rangle\!\rangle,\sigma}
c^\dagger_{i\sigma}c_{j\sigma}
-hS^z_{\rm tot}.
\label{eq:hubbard_ham}
\end{equation}
Here $\langle i,j\rangle$ and
$\langle\!\langle i,j\rangle\!\rangle$ denote, respectively,
the sides and bases of the triangles, with each bond included in
both orientations, $i\to j$ and $j\to i$. The hopping amplitudes
satisfy $t_1,t_2>0$, the $U=\infty$ projection enforces no
double occupation, and $h=g\mu_{\rm B}B$ is Zeeman field in energy units.
Because the hopping commutes with $S^z_{\rm tot}$,
the field only shifts the energies of each magnetization sector,
$E_M(h)=E_M-hM$ with $M\equiv S^z_{\rm tot}$, so the
field-dependent ground state follows by minimizing $E_M-hM$
over $M$. By spin-inversion symmetry, it suffices to
consider $M\geq0$.

The open chain contains $L$ corner-sharing triangles and $2L+1$
sites; one hole leaves $2L$ electrons and $M_{\rm sat}=L$.
For every hole position, the remaining sites admit a
unique perfect matching into $L$ dimers
[Fig.~\ref{fig:saw_tooth}(b)], each in a singlet ($S=0$) or
triplet ($S=1$) state. In the basis $|i,\{S\},\{S^z\}\rangle$,
with $i$ the hole position, decomposing the hopping over the
corner-sharing triangles gives the exact single-particle model
\begin{equation}
H_{\rm TB}
=-t_1\sum_{\langle i,j\rangle}b_i^\dagger b_j
-t_2\sum_{\langle\!\langle i,j\rangle\!\rangle}
\eta_{ij}b_i^\dagger b_j-hM,
\label{eq:H_TB}
\end{equation}
where $\eta_{ij}=+1$ for a singlet dimer and $-1$ for a triplet
dimer. As in Eq.~\eqref{eq:hubbard_ham}, both bond sums include the two orientations. The hopping depends on the dimer spin only through $S$,
not $S^z$. Since each triplet contributes at most one unit to
$S^z_{\rm tot}$, the number $k$ of triplet dimers satisfies
$k\geq M$, and the fixed-$M$ ground state follows by minimizing
the lowest eigenvalue of Eq.~\eqref{eq:H_TB} over $k\geq M$ and
over the placement of the triplets.

\begin{figure}[t]
\centering
\includegraphics[width=0.75\linewidth]{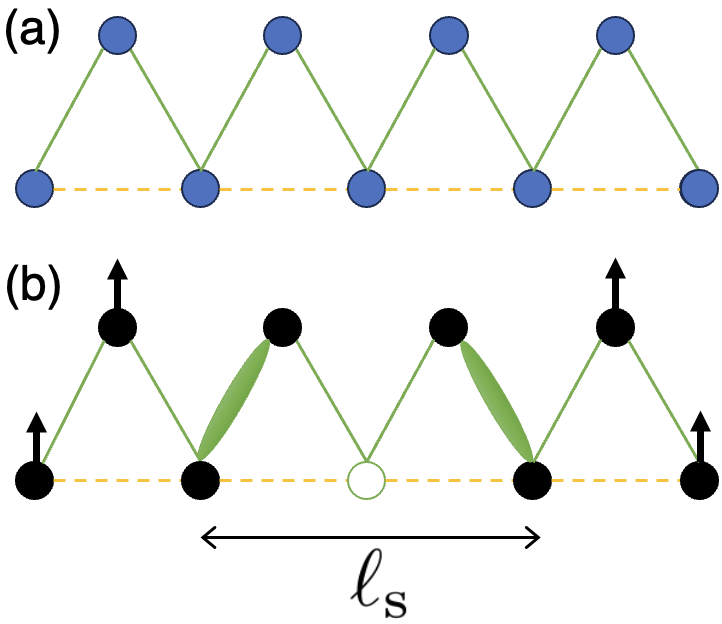}
\caption{
(a) Sawtooth lattice with hopping amplitudes $-t_1$ on solid
bonds and $-t_2$ on dashed bonds.
(b) Unique dimer covering for a fixed hole position (open circle).
Singlet dimers are enclosed by ovals, triplet dimers are
indicated by parallel spins. In the resulting tight-binding
model, the basal hopping is $-t_2$ across
a singlet and $+t_2$ across a triplet dimer. The singlet region of length $\ell_{\rm s}$ is indicated.}
\label{fig:saw_tooth}
\end{figure}

\textit{Sector ordering.---}
We first prove that the ground-state energy $E_{\rm GS}(k)$ at fixed $k$ (and $h=0$) strictly increases with $k$. The key step is an exact mapping of Eq.~\eqref{eq:H_TB} to a strictly one-dimensional problem [Fig.~\ref{fig:proof}(a)].
\begin{lemma}[1D mapping]
    \label{lemm:1D_mapping}
    For any fixed dimer configuration $\mathcal{C}$, eliminating the apex amplitudes via $Ea_{2n}=-t_1(a_{2n-1}+a_{2n+1})$ yields an energy-dependent matrix $\mathcal{K}_\mathcal{C}(E)$ on the $L+1$ basal sites, with end-site potentials $U_{\rm b}=t_1^2/|E|$ and effective singlet and triplet hoppings
    \begin{align}
        \tilde t_s=\frac{t_1^2}{|E|}+t_2,\qquad\tilde t_t=\left|\frac{t_1^2}{|E|}-t_2\right|,\label{eq:eff_hoppings}
    \end{align}
    satisfying $\tilde t_s>\tilde t_t\geq0$. The ground-state energy $E_\mathcal{C}$ of Eq.~\eqref{eq:H_TB} is the unique negative solution of $\lambda_{\min}[\mathcal{K}_\mathcal{C}(E)]=E$.
\end{lemma}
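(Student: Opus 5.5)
We treat $H_{\rm TB}$ for a fixed configuration $\mathcal{C}$ as a Hermitian matrix on the $2L+1$ hole positions, split into basal sites $a_{2n+1}$ ($n=0,\dots,L$) and apex sites $a_{2n}$ ($n=1,\dots,L$). Apex sites couple only to their two basal neighbours, with amplitude $-t_1$, and have zero on-site energy; the $hM$ term is a constant shift and is dropped. The plan is a Schur-complement (Feshbach) reduction onto the basal subspace, followed by a monotonicity argument for the self-consistency equation.

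\emph{Step 1: the effective matrix.} For $E\neq0$ the apex rows of $H_{\rm TB}a=Ea$ give $Ea_{2n}=-t_1(a_{2n-1}+a_{2n+1})$. Substituting into the basal rows gives $\mathcal{K}_\mathcal{C}(E)\,a_{\rm b}=E\,a_{\rm b}$, where
\begin{equation*}
\mathcal{K}_\mathcal{C}(E)=H_{\rm bb}+\frac{t_1^2}{E}\,B B^{T}.
\end{equation*}
Here $B$ is the basal--apex incidence matrix, and $H_{\rm bb}$ has entries $-t_2\eta$ on each base. Each interior basal site touches two apices and each end site touches one, so $BB^T$ has diagonal entries $2$ on interior sites and $1$ on the two ends, and off-diagonal entries $1$ between neighbours.

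For $E<0$ we have $t_1^2/E=-t_1^2/|E|$. The interior diagonal is then a uniform shift $-2t_1^2/|E|$, and we absorb it into a redefined $E$-dependent constant so that only the end-site potential $U_{\rm b}=t_1^2/|E|$ remains relative to the bulk. (We must fix the paper's convention carefully, i.e.\ whether the constant is kept inside $\mathcal{K}$.) The off-diagonal entry across a base is then $-(t_1^2/|E|+t_2\eta)$. For a singlet this is $-\tilde t_s$. For a triplet it is $-(t_1^2/|E|-t_2)$, and because the reduced chain is a tree (a path) a diagonal gauge $\pm1$ makes every bond nonpositive, so only the modulus $\tilde t_t$ matters. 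The inequality $\tilde t_s>\tilde t_t\ge0$ follows from $t_2>0$ and $t_1^2/|E|>0$.

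\emph{Step 2: ground state lies at $E<0$ and corresponds to $\lambda_{\min}=E$.} The hopping graph of $H_{\rm TB}$ is bipartite-free but has a gauge-invariant structure. With trial states, e.g.\ one localized on a triangle, we show $E_\mathcal{C}<0$. Since $E_\mathcal{C}<0$, every eigenvector at that energy is generated by the basal components, and conversely, so the negative eigenvalues of $H_{\rm TB}$ are exactly the negative fixed points $E=\lambda_j[\mathcal{K}(E)]$.

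\emph{Step 3: uniqueness and identification.} On $E<0$, $\mathcal{K}(E)=H_{\rm bb}-(t_1^2/|E|)\,BB^T$. Because $BB^T\succeq0$ and $t_1^2/|E|$ increases as $E\to0^-$, the map $E\mapsto\lambda_{\min}[\mathcal{K}(E)]$ is non-increasing on $(-\infty,0)$. It is bounded as $E\to-\infty$, tending to $\lambda_{\min}(H_{\rm bb})$, and tends to $-\infty$ as $E\to0^-$ provided $BB^T$ has a positive-weight projection on the ground vector. The function $f(E)=\lambda_{\min}[\mathcal{K}(E)]-E$ is therefore strictly decreasing: it is positive for $E\to-\infty$ and negative near $0^-$, so there is exactly one root. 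The standard Feshbach min-max argument identifies this root as the lowest eigenvalue of $H_{\rm TB}$: for $E$ below the root, $H_{\rm TB}-E$ is positive definite via the Schur complement $\mathcal{K}(E)-E\succ0$ together with the apex block $-E>0$.

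The main obstacle is bookkeeping. We must make the paper's diagonal convention (the $2t_1^2/|E|$ shift and the end-site $U_{\rm b}$) consistent with the claimed fixed-point equation, and verify the gauge transformation on the triplet bonds.
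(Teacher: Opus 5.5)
Your proposal is correct. The first half is the paper's: the Schur complement onto the basal sites, then an $E$-dependent $\pm1$ gauge on the open chain. You prove uniqueness and the identification with $E_{\mathcal C}$ by a different key step.

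\textbf{The paper's route.} It proves a level-counting lemma. A congruence block-diagonalizes $H_{\mathcal C}-E$ into $\mathcal K_{\mathcal C}(E)-E$ and the positive-definite apex block $-E$. Sylvester's law of inertia then shows that, for every $E<0$, $H_{\mathcal C}$ and $\mathcal K_{\mathcal C}(E)$ have the same number of eigenvalues below $E$. From this the paper reads off the sign of $\lambda_{\min}[\mathcal K_{\mathcal C}(E)]-E$ on either side of $E_{\mathcal C}$. It settles the point $E_{\mathcal C}$ itself with the eigenvector bijection and continuity.

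\textbf{Your route.} You use that $E\mapsto H_{bb}+(t_1^2/E)BB^{T}$ is Loewner-decreasing on $(-\infty,0)$. Hence $f(E)=\lambda_{\min}[\mathcal K(E)]-E$ is continuous and strictly decreasing, which gives at most one root at once. The two limits give existence. Only the positive-definite case of the Schur criterion, plus the bijection, is needed to identify the root with $E_{\mathcal C}$.

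\textbf{What each buys.} Your argument avoids the case analysis and continuity step. It also yields directly the sign rule $f_{\mathcal C}(E)<0\Leftrightarrow E>E_{\mathcal C}$, which is exactly how the lemma is used later (sector ordering, connectivity, center positioning). The paper's inertia count is stronger, since it tracks every level of $H_{\mathcal C}$ below $E$, not only the lowest.

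\textbf{Small repairs.}
\begin{itemize}
\item Resolve your Step~1 hedge by keeping the uniform $2t_1^2/E$ inside $\mathcal K$. Your Step~3 already does this, and it is the paper's convention: $\mathcal K^{\rm sgn}=H_{\rm 1D}+U_{\rm b}+(2t_1^2/E)$ times the identity, which equals your $H_{bb}+(t_1^2/E)BB^{T}$. Absorbing the shift would turn the fixed-point equation into $\lambda_{\min}=E-2t_1^2/E$.
\item State the monotonicity for the signed (ungauged) matrix, as you effectively do. The gauge on triplet bonds flips when $t_1^2/|E|$ crosses $t_2$, so the gauged matrix is not manifestly monotone; spectral invariance then transfers the result.
\item For $E_{\mathcal C}<0$, do not use a uniform state on a whole triangle. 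On a triplet triangle its energy is $(2t_2-4t_1)/3$, which is positive for $t_2>2t_1$. Use the two-site state on one side (energy $-t_1$), or simply $\operatorname{tr}H_{\mathcal C}=0$ with $H_{\mathcal C}\neq0$.
\item The proviso you attach to $f\to-\infty$ as $E\to0^-$ is automatic: every basal site touches an apex, so $\langle e_i|BB^{T}|e_i\rangle\geq1$ for each basal site.
\item The ``bipartite-free'' sentence carries no content and can be dropped.
\end{itemize}
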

\begin{proof}
Since the basal graph is a tree, all hopping signs can be gauged away. Sylvester's law of inertia applied to the Schur complement identifies $E_{\mathcal C}$ with the unique negative solution; see Sec.~\ref{app:1D_mapping} of the Supplemental Material (SM)~\cite{SM}.
\end{proof}

\begin{theorem}[Sector ordering]
\label{thm:sector}
$E_{\rm GS}(k_1)< E_{\rm GS}(k_2)$ whenever $k_1<k_2$.
Consequently, at $h=0$ the global ground state lies
in the all-singlet sector, $k=M=0$.
\end{theorem}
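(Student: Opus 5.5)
Using Lemma~\ref{lemm:1D_mapping}, I will reduce the problem to monotonicity of the lowest eigenvalue of the basal matrix $\mathcal{K}_\mathcal{C}(E)$ under the replacement of one singlet bond by a triplet bond.

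\emph{Sign convention.} After gauging away signs on the basal tree, $\mathcal{K}_\mathcal{C}(E)$ is a tridiagonal matrix on $L+1$ sites. Its off-diagonal entries are $-\tilde t_s$ or $-\tilde t_t$ according to the dimer type. Its diagonal carries the potentials from apex elimination: $-t_1^2/|E|$ per adjacent apex, which I take to be the content of the end-site correction $U_{\rm b}$. Crucially, the diagonal does not depend on $\mathcal{C}$.

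\emph{Fixed-$E$ comparison.} Take any configuration $\mathcal{C}_2$ with $k_2$ triplets that realizes $E_{\rm GS}(k_2)$. Flip one of its triplets to a singlet to get $\mathcal{C}_1$ with $k_2-1$ triplets. Fix $E<0$. The matrices $\mathcal{K}_{\mathcal{C}_1}(E)$ and $\mathcal{K}_{\mathcal{C}_2}(E)$ differ only on one bond, where the hopping magnitude increases from $\tilde t_t$ to $\tilde t_s>\tilde t_t$.
\begin{itemize}
\item The lowest eigenvector of a tridiagonal matrix with nonpositive off-diagonals can be chosen positive, by Perron--Frobenius applied to $c-\mathcal{K}$. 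If $\tilde t_t=0$ the chain splits, and I use a nonnegative eigenvector supported on the relevant piece.
\item Use this eigenvector $v$ of $\mathcal{K}_{\mathcal{C}_2}(E)$ as a variational state for $\mathcal{K}_{\mathcal{C}_1}(E)$. This gives
\[
\lambda_{\min}[\mathcal{K}_{\mathcal{C}_1}(E)]\le \lambda_{\min}[\mathcal{K}_{\mathcal{C}_2}(E)]-2(\tilde t_s-\tilde t_t)v_jv_{j+1}.
\]
\item The correction is strictly negative when $v_jv_{j+1}>0$. In the connected case this holds automatically by positivity. In the disconnected case I will instead use the eigenvector of the component with the lowest eigenvalue. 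If both components' ground states vanish at the bond endpoints, I will handle it by a direct argument: coupling two disconnected blocks with positive boundary amplitudes strictly lowers the energy. For a tridiagonal Jacobi-type block the Perron vector is positive everywhere, so its boundary entries are nonzero.
\end{itemize}
This gives the strict inequality $\lambda_{\min}[\mathcal{K}_{\mathcal{C}_1}(E)]<\lambda_{\min}[\mathcal{K}_{\mathcal{C}_2}(E)]$ for every $E<0$.

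\emph{Transfer to self-consistent energies.} Define $f_\mathcal{C}(E)=\lambda_{\min}[\mathcal{K}_\mathcal{C}(E)]-E$ on $E<0$. By Lemma~\ref{lemm:1D_mapping}, $E_\mathcal{C}$ is its unique zero. I will establish the sign structure: $f_\mathcal{C}<0$ for $E>E_\mathcal{C}$ and $f_\mathcal{C}>0$ for $E<E_\mathcal{C}$.
\begin{itemize}
\item As $E\to0^-$, $\lambda_{\min}\to-\infty$ because of the $t_1^2/|E|$ terms, so $f_\mathcal{C}<0$ there.
\item Uniqueness of the zero then fixes the sign on each side.
\item Alternatively, the Schur-complement/Sylvester inertia argument quoted in the lemma gives this sign structure directly.
\end{itemize}
Evaluate at $E=E_{\mathcal{C}_2}$. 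There $f_{\mathcal{C}_2}=0$, and the fixed-$E$ comparison gives $f_{\mathcal{C}_1}(E_{\mathcal{C}_2})<0$. By the sign structure, $E_{\mathcal{C}_1}<E_{\mathcal{C}_2}$.

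Hence $E_{\rm GS}(k_2-1)\le E_{\mathcal{C}_1}<E_{\rm GS}(k_2)$. Iterating gives strict monotonicity in $k$.

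\emph{Consequence at $h=0$.} The minimum over all $k\ge M$ is attained at $k=0$. The all-singlet sector has $M=0$, so the global ground state lies in $k=M=0$.

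\emph{Main obstacle.} I expect the hard step to be the strictness and careful handling of the disconnected case $\tilde t_t=0$, together with rigorously justifying the sign structure of $f_\mathcal{C}$. I would rely on the lemma's inertia statement: the number of eigenvalues of $H_{\rm TB}$ below $E$ equals the number of eigenvalues of $\mathcal{K}_\mathcal{C}(E)-E$ below zero. This is the cleanest route to the sign structure.
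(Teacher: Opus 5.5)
Your route is the paper's: a Perron--Frobenius trial vector at fixed $E$, followed by the ground-state correspondence of Lemma~\ref{lemm:1D_mapping} evaluated at $E=E_{\mathcal C_2}$. The step that fails is strictness in the degenerate case $\tilde t_t(E_{\mathcal C_2})=0$, i.e.\ $E_{\mathcal C_2}=E_\ast=-t_1^2/t_2$.

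There you flip an \emph{arbitrary} triplet and assert $\lambda_{\min}[\mathcal K_{\mathcal C_1}(E)]<\lambda_{\min}[\mathcal K_{\mathcal C_2}(E)]$. At $\tilde t_t=0$ the flipped bond separates two blocks, so any eigenvector supported on a single block has $v_jv_{j+1}=0$ and the variational correction vanishes. Your coupling argument then only shows that the merged block drops below the smaller of \emph{those two} block minima. If $\lambda_{\min}[\mathcal K_{\mathcal C_2}(E_\ast)]$ is carried by a third block, nothing forces it to decrease.

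Concretely, at $t_2/t_1=t_{\ell_{\rm s}}$ with $L-\ell_{\rm s}\ge3$, take $\mathcal C_2$ with an interior singlet island of length $\ell_{\rm s}\ge1$. This is a fixed-$k$ minimizer with $E_{\mathcal C_2}=E_\ast$. Flip a triplet not adjacent to the island. The new singlet dimer forms its own block of $\mathcal K_{\mathcal C_1}(E_\ast)-E_\ast$, with hopping $-2t_2$ and diagonal entries $\ge 4t_2\cos\theta_{\ell_{\rm s}}$. Its lowest eigenvalue is therefore $\ge 2t_2(2\cos\theta_{\ell_{\rm s}}-1)\ge0$. The island block still has minimum $0$, so $\lambda_{\min}$ stays at $E_\ast$ and $E_{\mathcal C_1}=E_{\mathcal C_2}$.

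The fix is to use your freedom in choosing the flip, since $E_{\rm GS}(k_2-1)\le E_{\mathcal C_1}$ holds for every choice. Let $\mathcal B$ be a block carrying a ground state $\phi$ of $\mathcal K_{\mathcal C_2}(E_\ast)$, strictly positive on $\mathcal B$ and zero elsewhere. Because $k_2\ge1$, $\mathcal B$ is a proper sub-chain, so some triplet bond $(i,i+1)$ has exactly one endpoint $i\in\mathcal B$; flip that one. Then $\phi$ has Rayleigh quotient $E_\ast$ for $\mathcal K_{\mathcal C_1}(E_\ast)$ but is not an eigenvector, because the new bond couples $\phi_i>0$ to $\phi_{i+1}=0$. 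Hence $\lambda_{\min}[\mathcal K_{\mathcal C_1}(E_\ast)]<E_\ast$. This is exactly the paper's Remark~\ref{rem:t_tzero}.

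A minor point: the limit $E\to0^-$ plus uniqueness of the zero only fixes the sign of $f_{\mathcal C}$ on $(E_{\mathcal C},0)$. To conclude $E_{\mathcal C_1}<E_{\mathcal C_2}$ from $f_{\mathcal C_1}(E_{\mathcal C_2})<0$ you need $f_{\mathcal C_1}>0$ on $(-\infty,E_{\mathcal C_1})$. That follows from $f\to+\infty$ as $E\to-\infty$, or from the inertia count you mention, which is what the paper uses.
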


\begin{proof}
It suffices to compare two dimer configurations of the mapped 1D problem from Lemma~\ref{lemm:1D_mapping} that differ by a single bond. Let $\mathcal{C}$ contain a triplet on the bond $(i,i+1)$ and let $\mathcal{C}'$ be obtained by replacing it by a singlet. With $\phi\ge0$ the ground state of $\mathcal{K}_\mathcal{C}(E)$, the variational principle gives
\begin{align}
&\lambda_{\min}[\mathcal K_{\mathcal C'}(E)]
\leq
\langle\boldsymbol{\phi}|
\mathcal K_{\mathcal C'}(E)
|\boldsymbol{\phi}\rangle=
\nonumber\\
&=
\lambda_{\min}[\mathcal K_{\mathcal C}(E)]
-2\bigl(\tilde t_s-\tilde t_t\bigr)
\phi_i\phi_{i+1}\le
\lambda_{\min}[\mathcal K_{\mathcal C}(E)].
\label{eq:variational_order}
\end{align}
For $\tilde t_t>0$, the Perron--Frobenius theorem gives $\phi_i\phi_{i+1}>0$, so the inequality is strict; the case $\tilde t_t=0$ is treated in Sec.~\ref{app:ordering} of the SM~\cite{SM}. Evaluating the strict inequality at $E=E_\mathcal{C}$ and applying Lemma~\ref{lemm:1D_mapping} gives $E_{\mathcal{C}'}<E_\mathcal{C}$; iterating proves the theorem.
\end{proof}

\begin{figure}[t]
\centering
\includegraphics[width=0.95\linewidth]{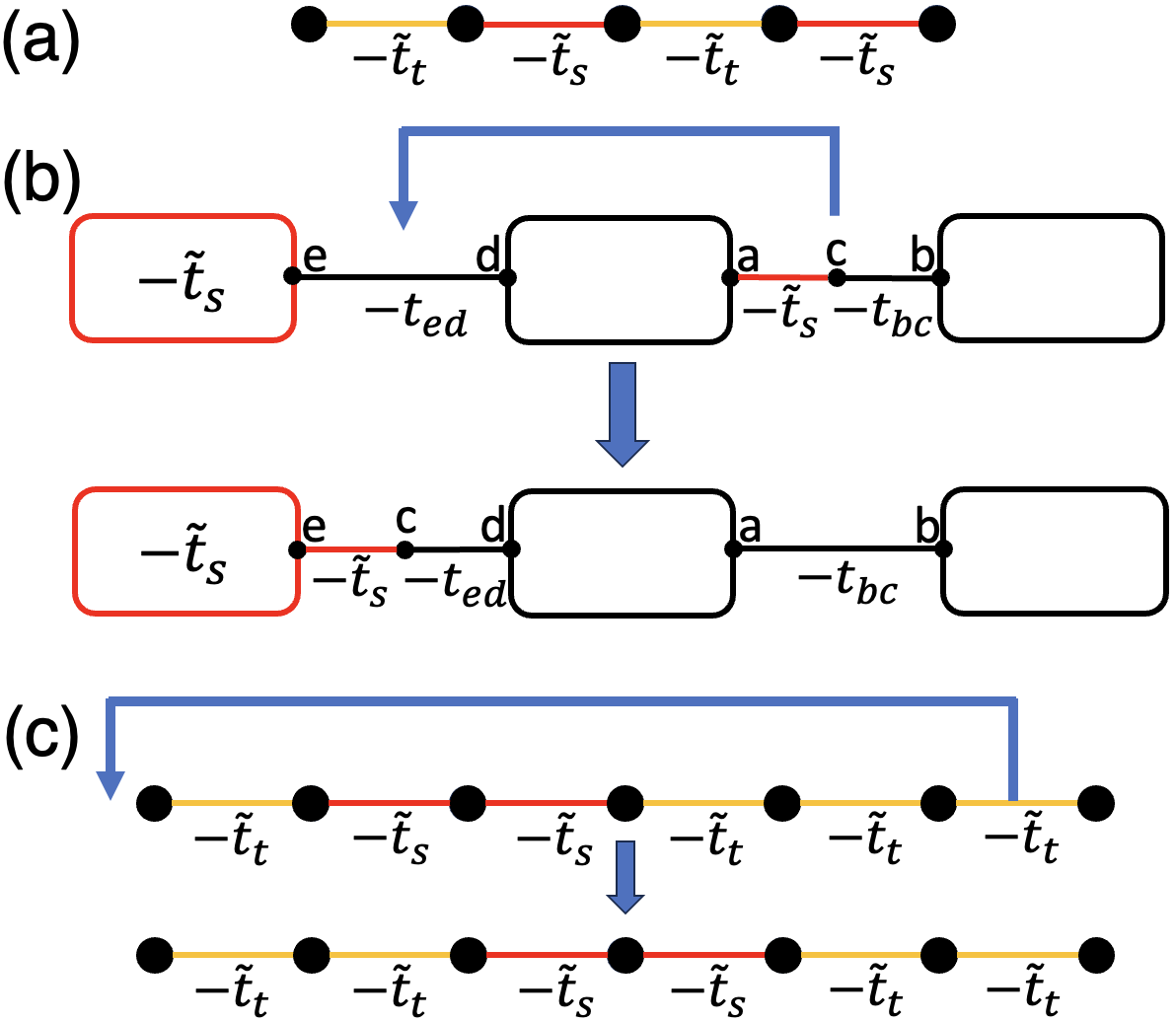}
\caption{Construction of the exact ground state in
each magnetization sector. (a) Mapping onto the basal one-dimensional chain;
singlet and triplet dimers generate the effective hoppings
$\widetilde t_{\rm s}$ and $\widetilde t_{\rm t}$ of
Lemma~\ref{lemm:1D_mapping} (red and yellow) [Eq.~\eqref{eq:eff_hoppings}].
(b) Connectivity step: the singlet bond $(a,c)$ is moved to
$(e,c)$, with $t_{ed},t_{bc}\in\{\widetilde t_{\rm t},
\widetilde t_{\rm s}\}$, preserving the number of singlets and
strictly lowering the energy [Eq.~\eqref{eq:connectivity_proof_main}].
(c) Center-positioning step: an off-center singlet
region is shifted toward the center by exchanging its two
tails.}
\label{fig:proof}
\end{figure}

\textit{Exact ground state in each magnetization sector.---}
Our central analytical result is:

\begin{theorem}[Exact ground state]
\label{thm:exact_gs}
Within every magnetization sector $M\geq0$, the minimum energy is attained by a configuration
with exactly $k=M$ triplet dimers and
$\ell_{\rm s}=L-M$ singlet dimers forming a single contiguous region.
For generic $t_2/t_1$, the triplet dimers are divided as equally
as possible between the two ends of the chain, centering the
singlet region to within one bond. At $t_2/t_1=t_{\ell_{\rm s}}$ and for $M\geq2$, all
positions of the singlet region separated from each physical end by at least one triplet dimer are degenerate; end-touching configurations lie higher in energy. At $h=0$, the singlet region fills the entire chain.
\end{theorem}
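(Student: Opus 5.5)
The proof splits into four steps. First, $k=M$ is forced by Theorem~\ref{thm:sector}. Next comes a connectivity step, then a centering step, and finally the special ratio $t_2/t_1=t_{\ell_{\rm s}}$. Throughout we work with the mapped basal chain of Lemma~\ref{lemm:1D_mapping}. It is a path of $L+1$ sites with $L$ bonds, each carrying $\tilde t_s$ or $\tilde t_t$, and end potentials $U_{\rm b}$.

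\emph{Step 1 ($k=M$).} Any configuration with $k>M$ triplets can be compared with one in which a triplet is replaced by a singlet. The new configuration stays in sector $M$, because the remaining $k-1\ge M$ triplets can still carry $S^z_{\rm tot}=M$. The single-bond comparison in the proof of Theorem~\ref{thm:sector} then strictly lowers the energy, so the minimum has $k=M$. At $h=0$ and $M=0$ all dimers are singlets, which gives the last claim of the theorem.

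\emph{Step 2 (connectivity).} Fix $E$ and a configuration with $\ell_{\rm s}$ singlet bonds that do not form one block. We move a singlet bond as in Fig.~\ref{fig:proof}(b). Take a singlet bond at the edge of one singlet block, adjacent to a triplet bond $(e,c)$ that separates it from another block. Swap the labels, so that $(e,c)$ becomes singlet and the old bond becomes triplet; this reduces the number of blocks.

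To compare energies, use the Perron vector $\phi\ge0$ of $\mathcal K_{\mathcal C}(E)$. Choose the trial vector by reflecting $\phi$ across the moved site, i.e.\ permute the amplitudes on one side of the chain so that the large-amplitude part now sits next to the newly placed singlet. The energy difference then reduces to the form
\begin{equation}
-(\tilde t_s-\tilde t_t)\,(\phi_a-\phi_e)(\phi_b-\phi_d)\cdot 2 .
\end{equation}
Its sign is fixed by the monotonicity of $\phi$ along the chain on the relevant side. That monotonicity follows from a discrete Sturm comparison: away from the ends the chain has zero on-site potential, so $\phi$ is convex or concave according to the bond strengths. As in Theorem~\ref{thm:sector}, evaluate at $E=E_{\mathcal C}$ and invoke the uniqueness of the negative root from Lemma~\ref{lemm:1D_mapping}. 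Iterating this merging yields a single contiguous singlet region. I expect this step to be the main obstacle. The choice of trial vector, reflection versus transplant, must be made so that the cross terms have a definite sign, and the end potential $U_{\rm b}$ must be handled when the moved bond touches a chain end.

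\emph{Step 3 (centering).} Once connected, the configuration is labelled by $(m_L,m_R)$ with $m_L+m_R=M$, the numbers of triplets on the left and right tails. Assume $m_L\ge m_R+2$. Exchange the two tails as in Fig.~\ref{fig:proof}(c) and use the mirror-image of $\phi$ restricted to the tails as a trial state.

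A more concrete route is a transfer-matrix argument. The tails are uniform $\tilde t_t$ chains ending in $U_{\rm b}$, so each tail contributes a Green's-function self-energy $\Sigma_m(E)$ at the edge of the singlet block. The eigenvalue condition becomes a secular equation on the $\ell_{\rm s}$ singlet block with edge potentials $\Sigma_{m_L}$ and $\Sigma_{m_R}$. One then shows that $m\mapsto\Sigma_m(E)$ is monotone and convex or concave in $m$, so the symmetric split minimizes the lowest eigenvalue; this gives centering to within one bond for generic $t_2/t_1$. The convexity claim needs the sign of $t_1^2/|E|-t_2$, which by Eq.~\eqref{eq:eff_hoppings} enters only through $\tilde t_t$ and the end potential.

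\emph{Step 4 (degenerate ratio).} Perfect reflection means $\Sigma_m(E)$ is independent of $m\ge1$. Solve the self-energy recursion
\begin{equation}
\Sigma_{m+1}=\frac{\tilde t_t^2}{E'-\Sigma_m}, \qquad \Sigma_0=U_{\rm b}\text{-term},
\end{equation}
and find its fixed point. Requiring it to coincide with the value for one triplet gives a condition on $E$. Combined with the open-block ground state of energy $\propto\cos(\pi/(\ell_{\rm s}+2))$, this yields Eq.~\eqref{eq:tells_main}. At that ratio all positions with $m_L,m_R\ge1$ share the same secular equation and are therefore degenerate. A configuration with $m=0$ on one side has edge potential $U_{\rm b}$ instead of the fixed point, and a direct comparison shows it lies higher in energy.
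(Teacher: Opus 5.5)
Your proposal has a genuine gap in Steps~2 and~3, which carry the substance of the theorem.

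\textbf{Step 2 (connectivity).} As written, this step does not go through.

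\emph{The move does not merge blocks.} Swapping an edge singlet of one block with the adjacent triplet bond reduces the number of blocks only when the block and the gap are both single bonds. Otherwise it shifts the block by one bond or splits off an isolated singlet. So ``iterating this merging'' is not a descent procedure. You would instead have to show that some such swap strictly lowers the energy from \emph{every} multi-block configuration, and nothing in your sketch does that.

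\emph{The sign argument is unfounded.} The recurrence controls curvature, not global monotonicity: $\phi$ is concave on singlet blocks because $E_{\rm 1D}>-2\tilde t_s$. With two or more blocks, the Perron vector generically has a local maximum in each block and a dip across the gap, so there is no monotone ``relevant side.'' The trial vector is also not specified well enough to check your displayed difference.

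\emph{The missing idea is the one the paper uses.} Argue by contradiction from a global minimizer at fixed $k$. Use the exchange inequality of Lemma~\ref{lem:rearr}: at a minimizer, a stronger bond carries a larger $\psi_i\psi_{i+1}$. In place of monotonicity, use the level-crossing bond of Lemma~\ref{lem:straddle}: a bond $(e,d)$ with $e\in\mathcal I_2\cup\mathcal G$ and $\psi_e\ge\psi_c\ge\psi_d$, where $c$ is the peak of the lower island. Then transplant site $c$, with its amplitude unchanged, from between $a$ and $b$ to between $d$ and $e$. This nonlocal move need not merge blocks. It does produce the sum of non-negative terms in Eq.~\eqref{eq:connectivity_proof_main}, and strictness follows from $\varepsilon(E)>-2\tilde t_s$. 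The case $\tilde t_t(E_{\mathcal C})=0$, which includes the compact point, needs a separate block argument, because the Perron vector can then vanish on whole blocks.

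\textbf{Step 3 (centering).} This step has a similar gap.

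\emph{Convexity of $\Sigma_m$ is not enough.} Convexity of $m\mapsto\Sigma_m$ would suffice only if $\lambda_{\min}$ depended on the two edge potentials through their sum. It is instead a concave function of the pair, which by itself favors unequal edge potentials. Concretely, move one triplet from the longer tail ($q$) to the shorter ($p$). This lowers one edge potential by $\delta_1=\Sigma_p-\Sigma_{p+1}$ and raises the other by $\delta_2=\Sigma_{q-1}-\Sigma_q\le\delta_1$. With $\phi_{\rm L},\phi_{\rm R}$ the block-edge amplitudes, the variational change is $-\phi_{\rm L}^2\delta_1+\phi_{\rm R}^2\delta_2$. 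Its weights are in the unfavorable order, since the state leans toward the longer, more attractive tail. So the sign is undetermined without quantitative input, and $\Sigma_m$ also depends on the spectral parameter.

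\emph{The mirror-image alternative is undefined.} Exchanging the two tails wholesale only produces the reflected configuration $(q,p)$. The paper instead takes the smallest $m$ with $\psi_{\alpha_m}\ge\psi_{\beta_{m+1}}$ and exchanges the unequal segments $(\alpha_1,\ldots,\alpha_{m-1})\leftrightarrow(\beta_1,\ldots,\beta_m)$. This transfers exactly one site between the tails and leaves the boundary term unchanged for $m\ge2$. It yields the factorized non-negative difference~\eqref{eq:center_energy_difference}.

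\textbf{Steps 1 and 4.} These are essentially fine. Requiring $\Sigma_m$ to be independent of $m$ forces $\tilde t_t(E)=0$, i.e.\ $E=E_\ast$. You should also check that the decoupled tail sites of $\mathcal K_{\mathcal C}(E_\ast)$ have diagonal entries above $E_\ast$. Then $\lambda_{\min}=E_\ast$, and Lemma~\ref{lemm:1D_mapping} identifies it as the physical ground-state energy.
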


\begin{proof}
By Theorem~\ref{thm:sector}, the minimum over $k\geq M$ is attained at $k=M$. Using Lemma~\ref{lemm:1D_mapping}, the remaining proof has two steps
[Fig.~\ref{fig:proof}(b,c)].

\textit{Step 1: Connectivity.}
Two lemmas show that separated singlet regions cannot minimize
the energy.

\begin{lemma}[Rearrangement]
\label{lem:rearr}
Let $\mathcal C$ minimize the ground-state energy at fixed $k$,
with $\tilde t_t>0$ at $E=E_{\mathcal C}$, and let $\psi>0$ be the
ground state of its reduced chain
$\mathcal K_{\mathcal C}(E_{\mathcal C})$. If bonds $(i,i+1)$ and
$(j,j+1)$ carry hoppings $t_i<t_j$, then
$\psi_i\psi_{i+1}<\psi_j\psi_{j+1}$.
\end{lemma}
Otherwise, exchanging the two bond types at fixed $\psi$ would
lower the energy (SM~\cite{SM}, Sec.~\ref{app:connectivity}).

\begin{lemma}[Descent bond]
\label{lem:straddle}
Let $\psi\geq0$ be the gauged ground-state amplitudes of the reduced
chain, and let $\mathcal I_1$
and $\mathcal I_2$ be singlet regions separated by a triplet
region $\mathcal G$ (disjoint sets of basal sites),
with peak amplitudes
$\psi_j=\max_{\mathcal I_1}\psi
\leq\max_{\mathcal I_2}\psi=\psi_{j'}$.
Then there exists an oriented bond $(e,d)$ with
$e\in\mathcal I_2\cup\mathcal G$ such that
$\psi_e\geq\psi_j\geq\psi_d$.
\end{lemma}

These lemmas reduce the proof to the five-site rearrangement in
Fig.~\ref{fig:proof}(b), where the singlet bond $a\!-\!c$ is
replaced by $e\!-\!c$. Denoting the physical ground-state energies before and  after the rearrangement by $E_1$ and $E_2$, respectively, the rearrangement strictly lowers the total energy,
\begin{equation}
\Delta E_{\rm phys}\equiv E_1-E_2>0.
\end{equation}
This follows from the fixed-$E_1$ variational inequality 
\begin{align}
&\frac{\mathcal R_1-\mathcal R_2}{2}
=(\psi_e-\psi_c)\psi_d(\tilde t_s-t_{ed})+\nonumber\\
&(\psi_c-\psi_a)\psi_b(\tilde t_s-t_{bc})+
\tilde t_s(\psi_c-\psi_a)(\psi_c-\psi_b)\nonumber\\&-
\tilde t_s(\psi_e-\psi_c)(\psi_d-\psi_c)+\frac{\Delta U_{\rm b}}{2}
\geq0,
\label{eq:connectivity_proof_main}
\end{align}
together with the ground-state correspondence of Lemma~\ref{lemm:1D_mapping}. 
Here $\mathcal R_1$ and $\mathcal R_2$ denote the Rayleigh quotients of
the original and rearranged reduced 1D Hamiltonians,
both evaluated at $E=E_1$.
Each term in Eq. \eqref{eq:connectivity_proof_main} is non-negative: Lemma~\ref{lem:straddle} gives $\psi_e\ge\psi_c\ge\psi_d$, the peak
property gives $\psi_c\ge\psi_a$, Lemma~\ref{lem:rearr} gives $\psi_c\ge\psi_b$, and
$\tilde t_{s}\ge t_{ed},t_{bc}$ by construction. If $c$ is an end
site, the boundary potential contributes
$\Delta U_{\rm b}=U_{\rm b}(\psi_c^2-\psi_a^2)\geq0$; otherwise
$\Delta U_{\rm b}=0$. Equality is excluded (SM~\cite{SM}, Sec.~\ref{app:connectivity}):
it would require the reduced-chain eigenvalue
$\varepsilon(E)\equiv E-2t_1^2/E$ to equal
$-2\widetilde t_{\rm s}$, whereas the
open-chain spectral bound gives
\begin{equation}
\varepsilon(E)\geq
-2\widetilde t_{\rm s}
\cos\left(\frac{\pi}{L+2}\right)
>
-2\widetilde t_{\rm s}.
\end{equation}

\textit{Step 2: Center positioning.}
A tail-exchange argument lowers the energy whenever the connected
singlet region is off center. Iterating this operation divides the
triplet dimers as equally as possible between the two boundaries (SM~\cite{SM}, Sec.~\ref{app:center_pos}).
The exception occurs when
$\tilde t_t=0$: the hole has no tails in the triplet regions, the
tail exchange costs no energy, and all positions of the singlet
region that are separated from each physical end by at least one triplet dimer are degenerate.
\end{proof}

\textit{Magnetic confinement and penetration depth.---}
As $\ell_{\rm s}\to\infty$, the ground-state energy approaches
the bottom of the singlet band,
\begin{align}
E_\infty=-t_2-\sqrt{t_2^2+4t_1^2},
\label{eq:E_infty}
\end{align}
fixed by $E-2t_1^2/E=-2\tilde t_{\rm s}(E)$; matching at the
singlet--triplet boundary yields the finite-$\ell_{\rm s}$
correction and the tails (SM~\cite{SM}, Sec.~\ref{app:thermo_limit}). The hole wavefunction
is sinusoidal inside the singlet region and decays
outside it as $|\psi(x)|\propto e^{-x/L_{\rm loc}}$ [Fig.~\ref{fig:loc}(a--c)].
The exact penetration depth follows from
Eq.~\eqref{eq:Lloc_finite_energy} of the SM at the
finite-cage energy $E_\ell$. In the long-cage limit, with
$t\equiv t_2/t_1$ and
$\varepsilon\equiv|E_\infty|/t_1=t+\sqrt{t^2+4}$,
\begin{align}
L_{\rm loc}^{-1}(t)=
\mathrm{arccosh}\!\left[
\frac{\varepsilon(\varepsilon+2t)}
{4|t\varepsilon-1|}
\right],
\label{eq:Lloc_exact}
\end{align}
which for $L_{\rm loc}\gg1$ reduces to
\begin{align}
L_{\rm loc}(t)=
\begin{cases}
\sqrt{\dfrac{1-t\varepsilon}{4t\varepsilon}},
& t<1/\sqrt{6},\\[6pt]
\sqrt{\dfrac{t\varepsilon-1}{4}},
& t>1/\sqrt{6}.
\end{cases}
\label{eq:Lloc}
\end{align}
Figure~\ref{fig:loc}(d) compares Eq.~\eqref{eq:Lloc_exact} with the penetration depth
extracted from exact diagonalization of Eq.~\eqref{eq:H_TB} for $L=100$
and a centered singlet region with $\ell_{\rm s}=30$.
The agreement is within $1.7\,\%$ over $t_2/t_1\in[0.01,\,1]$ (99 of 100 sampled couplings have at least $5$ basal amplitudes above the numerical noise floor). Eq.~\eqref{eq:Lloc} is exact only for $t\to0$ and $t\to\infty$, where $L_{\rm loc}\to(8t)^{-1/2}$ and $L_{\rm loc}\to\sqrt{2}\,t/2$, respectively, and underestimates
$L_{\rm loc}$ near $t=1/\sqrt6$.

\begin{figure}[t]
\centering
\includegraphics[width=0.95\linewidth]{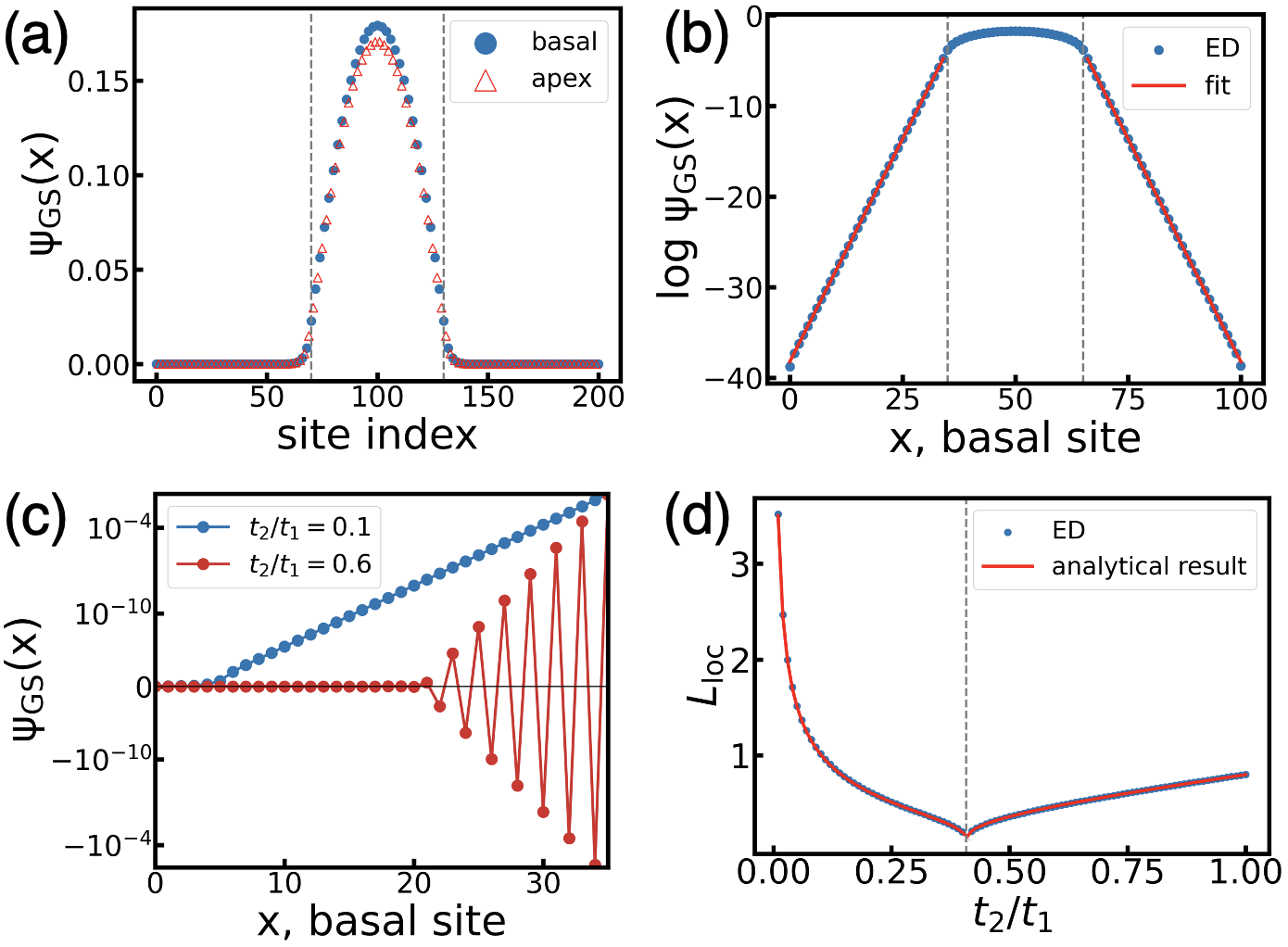}
\caption{(a) Ground-state wavefunction $\psi_\mathrm{GS}(x)$ on basal and apex sites for $L=100$ ($201$ sites), a singlet region of $\ell_{\rm s}=30$ bonds (dashed lines), and $t_2/t_1 = 0.1$. (b) $\log\psi_\mathrm{GS}(x)$ on the basal sublattice; red lines are exponential fits to the tails. (c) Left tail (basal sites $0$--$35$) on a signed log scale for $t_2/t_1=0.1$ and $0.6$, i.e., below and above the crossover $t_{30}\simeq0.409$ [Eq.~\eqref{eq:tells_main}]; above $t_{30}$ the decay is staggered.
(d) $L_{\rm loc}$ versus $t=t_2/t_1$: exact diagonalization for
$\ell_{\rm s}=30$ compared with Eq.~\eqref{eq:Lloc_exact}.
The dashed line marks $t=1/\sqrt6$.
}
\label{fig:loc}
\end{figure}

Both long-cage expressions vanish at $t\varepsilon=1$, i.e.,
$t=1/\sqrt6\simeq0.408$. This has a direct physical
meaning. Each triangle provides two paths between neighboring
basal sites: a direct basal hop and an apex-mediated hop. Across
a triplet dimer they cancel at
$E_\ast=-t_1^2/t_2$, making
the triplet region perfectly reflecting, whereas across a singlet dimer
they interfere constructively. For a finite cage, the tails therefore vanish
when the cage energy equals $E_\ast$, which occurs at the shifted coupling
$t_{\ell_{\rm s}}$ of Eq.~\eqref{eq:tells_main} ($t_{\ell_{\rm s}}\to1/\sqrt6$ as $\ell_{\rm s}\to\infty$).
For $t>t_{\ell_{\rm s}}$, the decay wavevector acquires a real part, producing an
alternating sign within the exponential envelope [Fig.~\ref{fig:loc}(c)].

While $L_{\rm loc}$ is the penetration depth
into the triplet region, the overall extent of the
state is set by $\ell_{\rm s}=L-M$. With $E_{\ell}$ the zero-field
ground-state energy at singlet length $\ell$ and
Zeeman contribution $-h(L-\ell)$, the field selects
\begin{equation}
\ell_{\rm s}(h)
\in
\underset{0\leq\ell\leq L}{\operatorname{argmin}}
\left[E_{\ell}+h\ell\right],
\qquad
\ell_{\rm s}(h)
\leq
\frac{E_0-E_\infty}{h}.
\label{eq:field_control}
\end{equation}
The bound follows from $E_{\ell}\geq E_\infty$ and shows
that $\ell_{\rm s}=O(1)$ at every fixed $h>0$ as
$L\to\infty$. Below the saturation field
$h_{\rm sat}$, the selected singlet region is nonempty
and confines the hole; above it,
$\ell_{\rm s}=0$ and the
hole moves in the uniform polarized background, where
its ground state is generally extended.
Comparing the minima at two fields
shows that $\ell_{\rm s}(h)$ is non-increasing in
$h$ (SM~\cite{SM}, Sec.~\ref{app:thermo_limit}): increasing the field shrinks the cage in discrete
steps, possibly skipping some values of $\ell_{\rm s}$
[Fig.~\ref{fig:confinement}].

\begin{figure}[t]
\centering
\includegraphics[width=\linewidth]{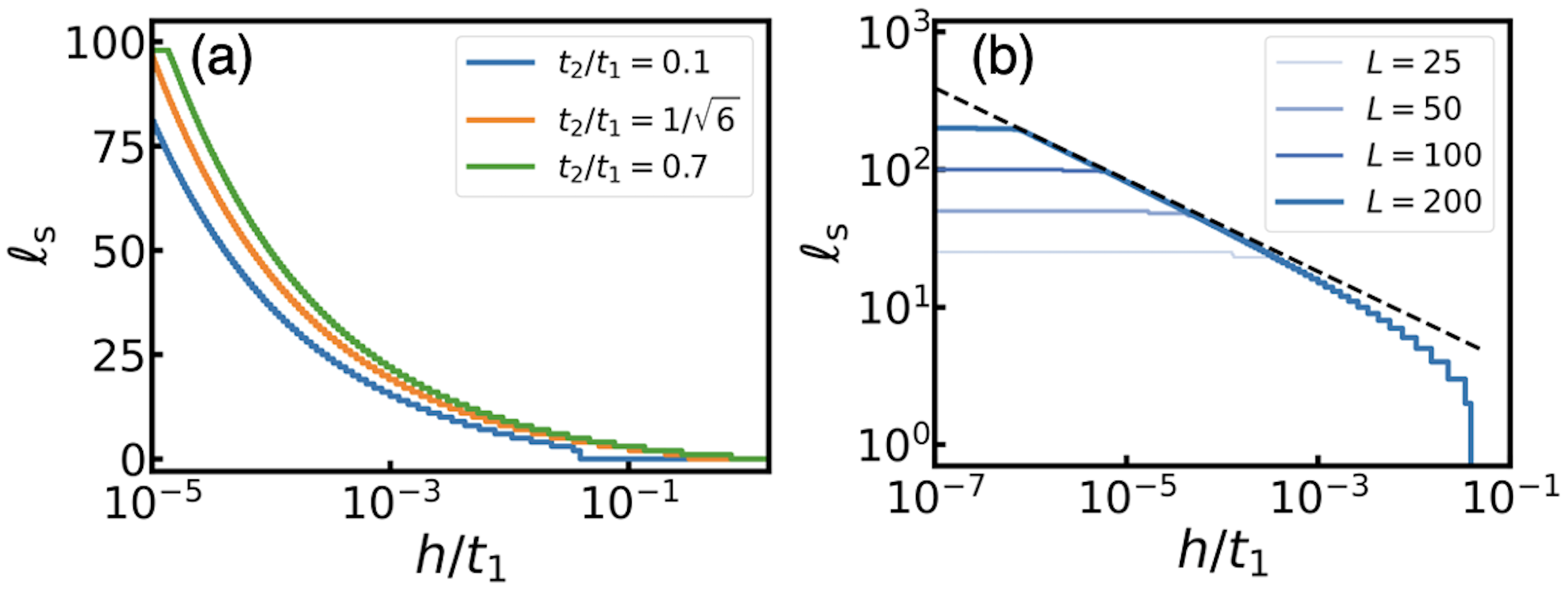}
\caption{Field-controlled confinement length.
(a) Singlet-region length $\ell_{\rm s}(h)$ from minimizing
$E_\ell+h\ell$, for $L=100$ and three couplings.
Each plateau fixes the magnetization deficit
$L-M=\ell_{\rm s}$ over a finite field interval; the corresponding
step $|\Delta m|=|\Delta\ell_{\rm s}|/L$ in magnetization per triangle ($m=M/L$)
vanishes as $L\to\infty$, while the cage length retains its discrete field dependence.
(b) Same at $t_2/t_1=0.1$ for several $L$: the
curves are $L$-independent except for the cutoff
$\ell_{\rm s}\leq L$ at $h=O(L^{-3})$. Dashed:
$(2t_1C/h)^{1/3}$ from minimizing
$E_\infty+t_1C/\ell^2+h\ell$, with $C$ from
Eq.~\eqref{eq:finite_size_bound} of the SM.}
    \label{fig:confinement}
\end{figure}

\textit{Compact localized states.---} At every finite singlet length
$\ell_{\rm s}$, an exact compact localized state exists at
\begin{equation}
\frac{t_2}{t_1}=t_{\ell_{\rm s}}
\equiv
\left[
2+4\cos\left(\frac{\pi}{\ell_{\rm s}+2}\right)
\right]^{-1/2}
\xrightarrow[\ell_{\rm s}\to\infty]{}
\frac{1}{\sqrt6},
\label{eq:tells_main}
\end{equation}
with basal amplitudes
\begin{equation}
\psi_i\propto
\sin\left[
\frac{\pi(i-a+1)}{\ell_{\rm s}+2}
\right]
\end{equation}
inside the singlet region $a\le i \le a+ \ell_s$, vanishing outside it. The
only additional support is on the $\ell_{\rm s}+2$ apex sites of
the singlet triangles and of the two flanking triplet triangles
(SM~\cite{SM}, Sec.~\ref{app:cls_exact}).
Its energy is independent of the position of the singlet region
as long as it is separated from each physical end by at least one triplet dimer.
The singlet region and its adjacent apex sites thus form a rigid quantum cage, generated by the
spin texture rather than by the bare lattice geometry~\cite{flat_band_1,flat_band_2,flat_band_3}, that can be shifted
along the chain without energy cost, giving a ground-state degeneracy that grows linearly
with system size.

\textit{Discussion.---}
A uniform magnetic field thus resizes an exact quantum cage for the doped hole: the field selects
the length of the singlet region, while destructive interference across the
surrounding triplet dimers forms its walls. The field-selected singlet region acts as an emergent,
magnetically tunable quantum dot whose walls are
generated by spin-dependent interference rather than by an
electrostatic potential.
DMRG calculations for two holes~\cite{White1992,TenPy} (SM~\cite{SM}, Sec.~\ref{app:two_hole}) indicate that the cage persists beyond a single hole: over most of the field range below saturation, both holes share one field-selected singlet region and their separation distribution is independent of system size, whereas at $h=0$ they are unbound.

Tunable sawtooth optical lattices and quantum-dot arrays provide
natural platforms for imaging this confinement and controlling
its size through spin imbalance or a Zeeman field. At finite $U$, superexchange
$J_{1,2}=4t_{1,2}^2/U$ modifies the kinetic problem.
At $J_1=J_2$, the zero-field
ground state of the projected $t$--$J$ model retains
its exact all-singlet dimer structure and its mapping
to a single spinless hole~\cite{kim2023exact};
the density term in
$J(\mathbf S_i\cdot\mathbf S_j-n_in_j/4)$ only adds
an effective on-site potential for the hole.
At finite magnetization, however,
exchange generally mixes singlet and triplet
configurations and makes the cage boundaries
dynamical. We therefore expect the cage
to evolve into a mobile magnetic polaron whose
bandwidth may be strongly suppressed for a large
cage, provided the field maintains a polarized
exterior. Establishing its dispersion and stability
at finite exchange remains an open question.

\textit{Acknowledgements.---} This work is supported by the Max Planck partner lab for quantum materials. Support for  C.D.B. came through Award DE-SC0018660 from the U.S. Department of Energy’s Office of Science, Basic Energy Sciences, Materials Sciences and Engineering Division. N.~P. acknowledges financial support from the Deutsche Forschungsgemeinschaft (DFG, German Research Foundation) through the Würzburg--Dresden Cluster of Excellence ctd.qmat -- Complexity, Topology and Dynamics in Quantum Matter (EXC 2147, project-id 390858490).

\nocite{White1992,Schollwock2011,TenPy,ogata-shiba}

\bibliography{references}

\begin{thebibliography}{46}%
\makeatletter
\providecommand \@ifxundefined [1]{%
 \@ifx{#1\undefined}
}%
\providecommand \@ifnum [1]{%
 \ifnum #1\expandafter \@firstoftwo
 \else \expandafter \@secondoftwo
 \fi
}%
\providecommand \@ifx [1]{%
 \ifx #1\expandafter \@firstoftwo
 \else \expandafter \@secondoftwo
 \fi
}%
\providecommand \natexlab [1]{#1}%
\providecommand \enquote  [1]{``#1''}%
\providecommand \bibnamefont  [1]{#1}%
\providecommand \bibfnamefont [1]{#1}%
\providecommand \citenamefont [1]{#1}%
\providecommand \href@noop [0]{\@secondoftwo}%
\providecommand \href [0]{\begingroup \@sanitize@url \@href}%
\providecommand \@href[1]{\@@startlink{#1}\@@href}%
\providecommand \@@href[1]{\endgroup#1\@@endlink}%
\providecommand \@sanitize@url [0]{\catcode `\\12\catcode `\$12\catcode
  `\&12\catcode `\#12\catcode `\^12\catcode `\_12\catcode `\%12\relax}%
\providecommand \@@startlink[1]{}%
\providecommand \@@endlink[0]{}%
\providecommand \url  [0]{\begingroup\@sanitize@url \@url }%
\providecommand \@url [1]{\endgroup\@href {#1}{\urlprefix }}%
\providecommand \urlprefix  [0]{URL }%
\providecommand \Eprint [0]{\href }%
\providecommand \doibase [0]{https://doi.org/}%
\providecommand \selectlanguage [0]{\@gobble}%
\providecommand \bibinfo  [0]{\@secondoftwo}%
\providecommand \bibfield  [0]{\@secondoftwo}%
\providecommand \translation [1]{[#1]}%
\providecommand \BibitemOpen [0]{}%
\providecommand \bibitemStop [0]{}%
\providecommand \bibitemNoStop [0]{.\EOS\space}%
\providecommand \EOS [0]{\spacefactor3000\relax}%
\providecommand \BibitemShut  [1]{\csname bibitem#1\endcsname}%
\let\auto@bib@innerbib\@empty
\bibitem [{\citenamefont {Anderson}\ \emph {et~al.}(1958)\citenamefont
  {Anderson} \emph {et~al.}}]{Anderson1958}%
  \BibitemOpen
  \bibfield  {author} {\bibinfo {author} {\bibfnamefont {P.~W.}\ \bibnamefont
  {Anderson}} \emph {et~al.},\ }\bibfield  {title} {\bibinfo {title} {Absence
  of diffusion in certain random lattices},\ }\href@noop {} {\bibfield
  {journal} {\bibinfo  {journal} {Physical review}\ }\textbf {\bibinfo {volume}
  {109}},\ \bibinfo {pages} {1492} (\bibinfo {year} {1958})}\BibitemShut
  {NoStop}%
\bibitem [{\citenamefont {Lee}\ and\ \citenamefont
  {Ramakrishnan}(1985)}]{Lee1985}%
  \BibitemOpen
  \bibfield  {author} {\bibinfo {author} {\bibfnamefont {P.~A.}\ \bibnamefont
  {Lee}}\ and\ \bibinfo {author} {\bibfnamefont {T.~V.}\ \bibnamefont
  {Ramakrishnan}},\ }\bibfield  {title} {\bibinfo {title} {Disordered
  electronic systems},\ }\href@noop {} {\bibfield  {journal} {\bibinfo
  {journal} {Reviews of modern physics}\ }\textbf {\bibinfo {volume} {57}},\
  \bibinfo {pages} {287} (\bibinfo {year} {1985})}\BibitemShut {NoStop}%
\bibitem [{\citenamefont {Evers}\ and\ \citenamefont
  {Mirlin}(2008)}]{Evers2008}%
  \BibitemOpen
  \bibfield  {author} {\bibinfo {author} {\bibfnamefont {F.}~\bibnamefont
  {Evers}}\ and\ \bibinfo {author} {\bibfnamefont {A.~D.}\ \bibnamefont
  {Mirlin}},\ }\bibfield  {title} {\bibinfo {title} {Anderson transitions},\
  }\href@noop {} {\bibfield  {journal} {\bibinfo  {journal} {Reviews of Modern
  Physics}\ }\textbf {\bibinfo {volume} {80}},\ \bibinfo {pages} {1355}
  (\bibinfo {year} {2008})}\BibitemShut {NoStop}%
\bibitem [{\citenamefont {Abrahams}\ \emph {et~al.}(1979)\citenamefont
  {Abrahams}, \citenamefont {Anderson}, \citenamefont {Licciardello},\ and\
  \citenamefont {Ramakrishnan}}]{abrahams1979}%
  \BibitemOpen
  \bibfield  {author} {\bibinfo {author} {\bibfnamefont {E.}~\bibnamefont
  {Abrahams}}, \bibinfo {author} {\bibfnamefont {P.~W.}\ \bibnamefont
  {Anderson}}, \bibinfo {author} {\bibfnamefont {D.~C.}\ \bibnamefont
  {Licciardello}},\ and\ \bibinfo {author} {\bibfnamefont {T.~V.}\ \bibnamefont
  {Ramakrishnan}},\ }\bibfield  {title} {\bibinfo {title} {Scaling theory of
  localization: Absence of quantum diffusion in two dimensions},\ }\href@noop
  {} {\bibfield  {journal} {\bibinfo  {journal} {Physical Review Letters}\
  }\textbf {\bibinfo {volume} {42}},\ \bibinfo {pages} {673} (\bibinfo {year}
  {1979})}\BibitemShut {NoStop}%
\bibitem [{\citenamefont {Anderson}(2010)}]{50years}%
  \BibitemOpen
  \bibfield  {author} {\bibinfo {author} {\bibfnamefont {P.}~\bibnamefont
  {Anderson}},\ }\bibfield  {title} {\bibinfo {title} {50 years of anderson
  localization},\ }\href@noop {} {\bibfield  {journal} {\bibinfo  {journal}
  {Edited by PW Anderson. Published by World Scientific Publishing Co. Pte.
  Ltd}\ } (\bibinfo {year} {2010})}\BibitemShut {NoStop}%
\bibitem [{\citenamefont {Schulz}\ \emph {et~al.}(2019)\citenamefont {Schulz},
  \citenamefont {Hooley}, \citenamefont {Moessner},\ and\ \citenamefont
  {Pollmann}}]{Schulz2019}%
  \BibitemOpen
  \bibfield  {author} {\bibinfo {author} {\bibfnamefont {M.}~\bibnamefont
  {Schulz}}, \bibinfo {author} {\bibfnamefont {C.}~\bibnamefont {Hooley}},
  \bibinfo {author} {\bibfnamefont {R.}~\bibnamefont {Moessner}},\ and\
  \bibinfo {author} {\bibfnamefont {F.}~\bibnamefont {Pollmann}},\ }\bibfield
  {title} {\bibinfo {title} {Stark many-body localization},\ }\href@noop {}
  {\bibfield  {journal} {\bibinfo  {journal} {Physical review letters}\
  }\textbf {\bibinfo {volume} {122}},\ \bibinfo {pages} {040606} (\bibinfo
  {year} {2019})}\BibitemShut {NoStop}%
\bibitem [{\citenamefont {van Nieuwenburg}\ \emph {et~al.}(2019)\citenamefont
  {van Nieuwenburg}, \citenamefont {Baum},\ and\ \citenamefont
  {Refael}}]{vanNieuwenburg2019}%
  \BibitemOpen
  \bibfield  {author} {\bibinfo {author} {\bibfnamefont {E.}~\bibnamefont {van
  Nieuwenburg}}, \bibinfo {author} {\bibfnamefont {Y.}~\bibnamefont {Baum}},\
  and\ \bibinfo {author} {\bibfnamefont {G.}~\bibnamefont {Refael}},\
  }\bibfield  {title} {\bibinfo {title} {From bloch oscillations to many-body
  localization in clean interacting systems},\ }\href@noop {} {\bibfield
  {journal} {\bibinfo  {journal} {Proceedings of the National Academy of
  Sciences}\ }\textbf {\bibinfo {volume} {116}},\ \bibinfo {pages} {9269}
  (\bibinfo {year} {2019})}\BibitemShut {NoStop}%
\bibitem [{\citenamefont {Scherg}\ \emph {et~al.}(2021)\citenamefont {Scherg},
  \citenamefont {Kohlert}, \citenamefont {Sala}, \citenamefont {Pollmann},
  \citenamefont {Hebbe~Madhusudhana}, \citenamefont {Bloch},\ and\
  \citenamefont {Aidelsburger}}]{Scherg2021}%
  \BibitemOpen
  \bibfield  {author} {\bibinfo {author} {\bibfnamefont {S.}~\bibnamefont
  {Scherg}}, \bibinfo {author} {\bibfnamefont {T.}~\bibnamefont {Kohlert}},
  \bibinfo {author} {\bibfnamefont {P.}~\bibnamefont {Sala}}, \bibinfo {author}
  {\bibfnamefont {F.}~\bibnamefont {Pollmann}}, \bibinfo {author}
  {\bibfnamefont {B.}~\bibnamefont {Hebbe~Madhusudhana}}, \bibinfo {author}
  {\bibfnamefont {I.}~\bibnamefont {Bloch}},\ and\ \bibinfo {author}
  {\bibfnamefont {M.}~\bibnamefont {Aidelsburger}},\ }\bibfield  {title}
  {\bibinfo {title} {Observing non-ergodicity due to kinetic constraints in
  tilted fermi-hubbard chains},\ }\href@noop {} {\bibfield  {journal} {\bibinfo
   {journal} {Nature Communications}\ }\textbf {\bibinfo {volume} {12}},\
  \bibinfo {pages} {4490} (\bibinfo {year} {2021})}\BibitemShut {NoStop}%
\bibitem [{\citenamefont {Sala}\ \emph {et~al.}(2020)\citenamefont {Sala},
  \citenamefont {Rakovszky}, \citenamefont {Verresen}, \citenamefont {Knap},\
  and\ \citenamefont {Pollmann}}]{Sala2020}%
  \BibitemOpen
  \bibfield  {author} {\bibinfo {author} {\bibfnamefont {P.}~\bibnamefont
  {Sala}}, \bibinfo {author} {\bibfnamefont {T.}~\bibnamefont {Rakovszky}},
  \bibinfo {author} {\bibfnamefont {R.}~\bibnamefont {Verresen}}, \bibinfo
  {author} {\bibfnamefont {M.}~\bibnamefont {Knap}},\ and\ \bibinfo {author}
  {\bibfnamefont {F.}~\bibnamefont {Pollmann}},\ }\bibfield  {title} {\bibinfo
  {title} {Ergodicity breaking arising from hilbert space fragmentation in
  dipole-conserving hamiltonians},\ }\href@noop {} {\bibfield  {journal}
  {\bibinfo  {journal} {Physical Review X}\ }\textbf {\bibinfo {volume} {10}},\
  \bibinfo {pages} {011047} (\bibinfo {year} {2020})}\BibitemShut {NoStop}%
\bibitem [{\citenamefont {Khemani}\ \emph {et~al.}(2020)\citenamefont
  {Khemani}, \citenamefont {Hermele},\ and\ \citenamefont
  {Nandkishore}}]{Khemani2020}%
  \BibitemOpen
  \bibfield  {author} {\bibinfo {author} {\bibfnamefont {V.}~\bibnamefont
  {Khemani}}, \bibinfo {author} {\bibfnamefont {M.}~\bibnamefont {Hermele}},\
  and\ \bibinfo {author} {\bibfnamefont {R.}~\bibnamefont {Nandkishore}},\
  }\bibfield  {title} {\bibinfo {title} {Localization from hilbert space
  shattering: From theory to physical realizations},\ }\href@noop {} {\bibfield
   {journal} {\bibinfo  {journal} {Phys. Rev. B}\ }\textbf {\bibinfo {volume}
  {101}},\ \bibinfo {pages} {174204} (\bibinfo {year} {2020})}\BibitemShut
  {NoStop}%
\bibitem [{\citenamefont {Moudgalya}\ \emph {et~al.}(2022)\citenamefont
  {Moudgalya}, \citenamefont {Bernevig},\ and\ \citenamefont
  {Regnault}}]{Moudgalya2022}%
  \BibitemOpen
  \bibfield  {author} {\bibinfo {author} {\bibfnamefont {S.}~\bibnamefont
  {Moudgalya}}, \bibinfo {author} {\bibfnamefont {B.~A.}\ \bibnamefont
  {Bernevig}},\ and\ \bibinfo {author} {\bibfnamefont {N.}~\bibnamefont
  {Regnault}},\ }\bibfield  {title} {\bibinfo {title} {Quantum many-body scars
  and hilbert space fragmentation: A review of exact results},\ }\href@noop {}
  {\bibfield  {journal} {\bibinfo  {journal} {Reports on Progress in Physics}\
  }\textbf {\bibinfo {volume} {85}},\ \bibinfo {pages} {086501} (\bibinfo
  {year} {2022})}\BibitemShut {NoStop}%
\bibitem [{\citenamefont {Rakovszky}\ \emph {et~al.}(2020)\citenamefont
  {Rakovszky}, \citenamefont {Sala}, \citenamefont {Verresen}, \citenamefont
  {Knap},\ and\ \citenamefont {Pollmann}}]{Rakovszky2020}%
  \BibitemOpen
  \bibfield  {author} {\bibinfo {author} {\bibfnamefont {T.}~\bibnamefont
  {Rakovszky}}, \bibinfo {author} {\bibfnamefont {P.}~\bibnamefont {Sala}},
  \bibinfo {author} {\bibfnamefont {R.}~\bibnamefont {Verresen}}, \bibinfo
  {author} {\bibfnamefont {M.}~\bibnamefont {Knap}},\ and\ \bibinfo {author}
  {\bibfnamefont {F.}~\bibnamefont {Pollmann}},\ }\bibfield  {title} {\bibinfo
  {title} {Statistical localization: From strong fragmentation to strong edge
  modes},\ }\href@noop {} {\bibfield  {journal} {\bibinfo  {journal} {Physical
  Review B}\ }\textbf {\bibinfo {volume} {101}},\ \bibinfo {pages} {125126}
  (\bibinfo {year} {2020})}\BibitemShut {NoStop}%
\bibitem [{\citenamefont {Smith}\ \emph {et~al.}(2017)\citenamefont {Smith},
  \citenamefont {Knolle}, \citenamefont {Kovrizhin},\ and\ \citenamefont
  {Moessner}}]{Smith2017}%
  \BibitemOpen
  \bibfield  {author} {\bibinfo {author} {\bibfnamefont {A.}~\bibnamefont
  {Smith}}, \bibinfo {author} {\bibfnamefont {J.}~\bibnamefont {Knolle}},
  \bibinfo {author} {\bibfnamefont {D.~L.}\ \bibnamefont {Kovrizhin}},\ and\
  \bibinfo {author} {\bibfnamefont {R.}~\bibnamefont {Moessner}},\ }\bibfield
  {title} {\bibinfo {title} {Disorder-free localization},\ }\href@noop {}
  {\bibfield  {journal} {\bibinfo  {journal} {Physical review letters}\
  }\textbf {\bibinfo {volume} {118}},\ \bibinfo {pages} {266601} (\bibinfo
  {year} {2017})}\BibitemShut {NoStop}%
\bibitem [{\citenamefont {Brenes}\ \emph {et~al.}(2018)\citenamefont {Brenes},
  \citenamefont {Dalmonte}, \citenamefont {Heyl},\ and\ \citenamefont
  {Scardicchio}}]{Brenes2018}%
  \BibitemOpen
  \bibfield  {author} {\bibinfo {author} {\bibfnamefont {M.}~\bibnamefont
  {Brenes}}, \bibinfo {author} {\bibfnamefont {M.}~\bibnamefont {Dalmonte}},
  \bibinfo {author} {\bibfnamefont {M.}~\bibnamefont {Heyl}},\ and\ \bibinfo
  {author} {\bibfnamefont {A.}~\bibnamefont {Scardicchio}},\ }\bibfield
  {title} {\bibinfo {title} {Many-body localization dynamics from gauge
  invariance},\ }\href@noop {} {\bibfield  {journal} {\bibinfo  {journal}
  {Physical review letters}\ }\textbf {\bibinfo {volume} {120}},\ \bibinfo
  {pages} {030601} (\bibinfo {year} {2018})}\BibitemShut {NoStop}%
\bibitem [{\citenamefont {Schiulaz}\ \emph {et~al.}(2015)\citenamefont
  {Schiulaz}, \citenamefont {Silva},\ and\ \citenamefont
  {M{\"u}ller}}]{Schiulaz2015}%
  \BibitemOpen
  \bibfield  {author} {\bibinfo {author} {\bibfnamefont {M.}~\bibnamefont
  {Schiulaz}}, \bibinfo {author} {\bibfnamefont {A.}~\bibnamefont {Silva}},\
  and\ \bibinfo {author} {\bibfnamefont {M.}~\bibnamefont {M{\"u}ller}},\
  }\bibfield  {title} {\bibinfo {title} {Dynamics in many-body localized
  quantum systems without disorder},\ }\href@noop {} {\bibfield  {journal}
  {\bibinfo  {journal} {Physical Review B}\ }\textbf {\bibinfo {volume} {91}},\
  \bibinfo {pages} {184202} (\bibinfo {year} {2015})}\BibitemShut {NoStop}%
\bibitem [{\citenamefont {De~Roeck}\ and\ \citenamefont
  {Huveneers}(2014)}]{DeRoeck2014}%
  \BibitemOpen
  \bibfield  {author} {\bibinfo {author} {\bibfnamefont {W.}~\bibnamefont
  {De~Roeck}}\ and\ \bibinfo {author} {\bibfnamefont {F.}~\bibnamefont
  {Huveneers}},\ }\bibfield  {title} {\bibinfo {title} {Scenario for
  delocalization in translation-invariant systems},\ }\href@noop {} {\bibfield
  {journal} {\bibinfo  {journal} {Physical Review B}\ }\textbf {\bibinfo
  {volume} {90}},\ \bibinfo {pages} {165137} (\bibinfo {year}
  {2014})}\BibitemShut {NoStop}%
\bibitem [{\citenamefont {Yao}\ \emph {et~al.}(2016)\citenamefont {Yao},
  \citenamefont {Laumann}, \citenamefont {Cirac}, \citenamefont {Lukin},\ and\
  \citenamefont {Moore}}]{Yao2016}%
  \BibitemOpen
  \bibfield  {author} {\bibinfo {author} {\bibfnamefont {N.}~\bibnamefont
  {Yao}}, \bibinfo {author} {\bibfnamefont {C.}~\bibnamefont {Laumann}},
  \bibinfo {author} {\bibfnamefont {J.~I.}\ \bibnamefont {Cirac}}, \bibinfo
  {author} {\bibfnamefont {M.~D.}\ \bibnamefont {Lukin}},\ and\ \bibinfo
  {author} {\bibfnamefont {J.}~\bibnamefont {Moore}},\ }\bibfield  {title}
  {\bibinfo {title} {Quasi-many-body localization in translation-invariant
  systems},\ }\href@noop {} {\bibfield  {journal} {\bibinfo  {journal}
  {Physical review letters}\ }\textbf {\bibinfo {volume} {117}},\ \bibinfo
  {pages} {240601} (\bibinfo {year} {2016})}\BibitemShut {NoStop}%
\bibitem [{\citenamefont {Basko}\ \emph {et~al.}(2006)\citenamefont {Basko},
  \citenamefont {Aleiner},\ and\ \citenamefont {Altshuler}}]{Basko2006}%
  \BibitemOpen
  \bibfield  {author} {\bibinfo {author} {\bibfnamefont {D.~M.}\ \bibnamefont
  {Basko}}, \bibinfo {author} {\bibfnamefont {I.~L.}\ \bibnamefont {Aleiner}},\
  and\ \bibinfo {author} {\bibfnamefont {B.~L.}\ \bibnamefont {Altshuler}},\
  }\bibfield  {title} {\bibinfo {title} {Metal--insulator transition in a
  weakly interacting many-electron system with localized single-particle
  states},\ }\href@noop {} {\bibfield  {journal} {\bibinfo  {journal} {Annals
  of physics}\ }\textbf {\bibinfo {volume} {321}},\ \bibinfo {pages} {1126}
  (\bibinfo {year} {2006})}\BibitemShut {NoStop}%
\bibitem [{\citenamefont {Gornyi}\ \emph {et~al.}(2005)\citenamefont {Gornyi},
  \citenamefont {Mirlin},\ and\ \citenamefont {Polyakov}}]{Gornyi2005}%
  \BibitemOpen
  \bibfield  {author} {\bibinfo {author} {\bibfnamefont {I.~V.}\ \bibnamefont
  {Gornyi}}, \bibinfo {author} {\bibfnamefont {A.~D.}\ \bibnamefont {Mirlin}},\
  and\ \bibinfo {author} {\bibfnamefont {D.~G.}\ \bibnamefont {Polyakov}},\
  }\bibfield  {title} {\bibinfo {title} {Interacting electrons in disordered
  wires: Anderson localization and low-t transport},\ }\href@noop {} {\bibfield
   {journal} {\bibinfo  {journal} {Physical review letters}\ }\textbf {\bibinfo
  {volume} {95}},\ \bibinfo {pages} {206603} (\bibinfo {year}
  {2005})}\BibitemShut {NoStop}%
\bibitem [{\citenamefont {Nandkishore}\ and\ \citenamefont
  {Huse}(2015)}]{Nandkishore2015}%
  \BibitemOpen
  \bibfield  {author} {\bibinfo {author} {\bibfnamefont {R.}~\bibnamefont
  {Nandkishore}}\ and\ \bibinfo {author} {\bibfnamefont {D.~A.}\ \bibnamefont
  {Huse}},\ }\bibfield  {title} {\bibinfo {title} {Many-body localization and
  thermalization in quantum statistical mechanics},\ }\href@noop {} {\bibfield
  {journal} {\bibinfo  {journal} {Annu. Rev. Condens. Matter Phys.}\ }\textbf
  {\bibinfo {volume} {6}},\ \bibinfo {pages} {15} (\bibinfo {year}
  {2015})}\BibitemShut {NoStop}%
\bibitem [{\citenamefont {Abanin}\ \emph {et~al.}(2019)\citenamefont {Abanin},
  \citenamefont {Altman}, \citenamefont {Bloch},\ and\ \citenamefont
  {Serbyn}}]{Abanin2019}%
  \BibitemOpen
  \bibfield  {author} {\bibinfo {author} {\bibfnamefont {D.~A.}\ \bibnamefont
  {Abanin}}, \bibinfo {author} {\bibfnamefont {E.}~\bibnamefont {Altman}},
  \bibinfo {author} {\bibfnamefont {I.}~\bibnamefont {Bloch}},\ and\ \bibinfo
  {author} {\bibfnamefont {M.}~\bibnamefont {Serbyn}},\ }\bibfield  {title}
  {\bibinfo {title} {Colloquium: Many-body localization, thermalization, and
  entanglement},\ }\href@noop {} {\bibfield  {journal} {\bibinfo  {journal}
  {Reviews of Modern Physics}\ }\textbf {\bibinfo {volume} {91}},\ \bibinfo
  {pages} {021001} (\bibinfo {year} {2019})}\BibitemShut {NoStop}%
\bibitem [{\citenamefont {Sutherland}(1986)}]{flat_band_1}%
  \BibitemOpen
  \bibfield  {author} {\bibinfo {author} {\bibfnamefont {B.}~\bibnamefont
  {Sutherland}},\ }\bibfield  {title} {\bibinfo {title} {Localization of
  electronic wave functions due to local topology},\ }\href@noop {} {\bibfield
  {journal} {\bibinfo  {journal} {Physical Review B}\ }\textbf {\bibinfo
  {volume} {34}},\ \bibinfo {pages} {5208} (\bibinfo {year}
  {1986})}\BibitemShut {NoStop}%
\bibitem [{\citenamefont {Leykam}\ \emph {et~al.}(2018)\citenamefont {Leykam},
  \citenamefont {Andreanov},\ and\ \citenamefont {Flach}}]{Leykam2018}%
  \BibitemOpen
  \bibfield  {author} {\bibinfo {author} {\bibfnamefont {D.}~\bibnamefont
  {Leykam}}, \bibinfo {author} {\bibfnamefont {A.}~\bibnamefont {Andreanov}},\
  and\ \bibinfo {author} {\bibfnamefont {S.}~\bibnamefont {Flach}},\ }\bibfield
   {title} {\bibinfo {title} {Artificial flat band systems: from lattice models
  to experiments},\ }\href@noop {} {\bibfield  {journal} {\bibinfo  {journal}
  {Advances in Physics: X}\ }\textbf {\bibinfo {volume} {3}},\ \bibinfo {pages}
  {1473052} (\bibinfo {year} {2018})}\BibitemShut {NoStop}%
\bibitem [{\citenamefont {Danieli}\ \emph {et~al.}(2021)\citenamefont
  {Danieli}, \citenamefont {Andreanov}, \citenamefont {Mithun},\ and\
  \citenamefont {Flach}}]{Danieli2021}%
  \BibitemOpen
  \bibfield  {author} {\bibinfo {author} {\bibfnamefont {C.}~\bibnamefont
  {Danieli}}, \bibinfo {author} {\bibfnamefont {A.}~\bibnamefont {Andreanov}},
  \bibinfo {author} {\bibfnamefont {T.}~\bibnamefont {Mithun}},\ and\ \bibinfo
  {author} {\bibfnamefont {S.}~\bibnamefont {Flach}},\ }\bibfield  {title}
  {\bibinfo {title} {Quantum caging in interacting many-body all-bands-flat
  lattices},\ }\href {https://doi.org/10.1103/PhysRevB.104.085132} {\bibfield
  {journal} {\bibinfo  {journal} {Phys. Rev. B}\ }\textbf {\bibinfo {volume}
  {104}},\ \bibinfo {pages} {085132} (\bibinfo {year} {2021})}\BibitemShut
  {NoStop}%
\bibitem [{\citenamefont {Nagaoka}(1966)}]{Nagaoka_FM}%
  \BibitemOpen
  \bibfield  {author} {\bibinfo {author} {\bibfnamefont {Y.}~\bibnamefont
  {Nagaoka}},\ }\bibfield  {title} {\bibinfo {title} {Ferromagnetism in a
  narrow, almost half-filled s band},\ }\href@noop {} {\bibfield  {journal}
  {\bibinfo  {journal} {Physical Review}\ }\textbf {\bibinfo {volume} {147}},\
  \bibinfo {pages} {392} (\bibinfo {year} {1966})}\BibitemShut {NoStop}%
\bibitem [{\citenamefont {Tasaki}(1989)}]{tasaki1989extension}%
  \BibitemOpen
  \bibfield  {author} {\bibinfo {author} {\bibfnamefont {H.}~\bibnamefont
  {Tasaki}},\ }\bibfield  {title} {\bibinfo {title} {Extension of nagaoka’s
  theorem on the large-u hubbard model},\ }\href@noop {} {\bibfield  {journal}
  {\bibinfo  {journal} {Physical Review B}\ }\textbf {\bibinfo {volume} {40}},\
  \bibinfo {pages} {9192} (\bibinfo {year} {1989})}\BibitemShut {NoStop}%
\bibitem [{\citenamefont {Tasaki}(1998)}]{tasaki1998nagaoka}%
  \BibitemOpen
  \bibfield  {author} {\bibinfo {author} {\bibfnamefont {H.}~\bibnamefont
  {Tasaki}},\ }\bibfield  {title} {\bibinfo {title} {From nagaoka's
  ferromagnetism to flat-band ferromagnetism and beyond: An introduction to
  ferromagnetism in the hubbard model},\ }\href@noop {} {\bibfield  {journal}
  {\bibinfo  {journal} {Progress of theoretical physics}\ }\textbf {\bibinfo
  {volume} {99}},\ \bibinfo {pages} {489} (\bibinfo {year} {1998})}\BibitemShut
  {NoStop}%
\bibitem [{\citenamefont {Haerter}\ and\ \citenamefont
  {Shastry}(2005)}]{shastry_kinetic}%
  \BibitemOpen
  \bibfield  {author} {\bibinfo {author} {\bibfnamefont {J.~O.}\ \bibnamefont
  {Haerter}}\ and\ \bibinfo {author} {\bibfnamefont {B.~S.}\ \bibnamefont
  {Shastry}},\ }\bibfield  {title} {\bibinfo {title} {Kinetic
  antiferromagnetism in the triangular lattice},\ }\href@noop {} {\bibfield
  {journal} {\bibinfo  {journal} {Physical review letters}\ }\textbf {\bibinfo
  {volume} {95}},\ \bibinfo {pages} {087202} (\bibinfo {year}
  {2005})}\BibitemShut {NoStop}%
\bibitem [{\citenamefont {Pollmann}\ \emph {et~al.}(2008)\citenamefont
  {Pollmann}, \citenamefont {Fulde},\ and\ \citenamefont
  {Shtengel}}]{Pollmann2008}%
  \BibitemOpen
  \bibfield  {author} {\bibinfo {author} {\bibfnamefont {F.}~\bibnamefont
  {Pollmann}}, \bibinfo {author} {\bibfnamefont {P.}~\bibnamefont {Fulde}},\
  and\ \bibinfo {author} {\bibfnamefont {K.}~\bibnamefont {Shtengel}},\
  }\bibfield  {title} {\bibinfo {title} {Kinetic ferromagnetism on a kagome
  lattice},\ }\href@noop {} {\bibfield  {journal} {\bibinfo  {journal}
  {Physical review letters}\ }\textbf {\bibinfo {volume} {100}},\ \bibinfo
  {pages} {136404} (\bibinfo {year} {2008})}\BibitemShut {NoStop}%
\bibitem [{\citenamefont {Sposetti}\ \emph {et~al.}(2014)\citenamefont
  {Sposetti}, \citenamefont {Bravo}, \citenamefont {Trumper}, \citenamefont
  {Gazza},\ and\ \citenamefont {Manuel}}]{Sposetti2014}%
  \BibitemOpen
  \bibfield  {author} {\bibinfo {author} {\bibfnamefont {C.~N.}\ \bibnamefont
  {Sposetti}}, \bibinfo {author} {\bibfnamefont {B.}~\bibnamefont {Bravo}},
  \bibinfo {author} {\bibfnamefont {A.~E.}\ \bibnamefont {Trumper}}, \bibinfo
  {author} {\bibfnamefont {C.~J.}\ \bibnamefont {Gazza}},\ and\ \bibinfo
  {author} {\bibfnamefont {L.~O.}\ \bibnamefont {Manuel}},\ }\bibfield  {title}
  {\bibinfo {title} {Classical antiferromagnetism in kinetically frustrated
  electronic models},\ }\href {https://doi.org/10.1103/PhysRevLett.112.187204}
  {\bibfield  {journal} {\bibinfo  {journal} {Phys. Rev. Lett.}\ }\textbf
  {\bibinfo {volume} {112}},\ \bibinfo {pages} {187204} (\bibinfo {year}
  {2014})}\BibitemShut {NoStop}%
\bibitem [{\citenamefont {Kim}(2023)}]{kim2023exact}%
  \BibitemOpen
  \bibfield  {author} {\bibinfo {author} {\bibfnamefont {K.-S.}\ \bibnamefont
  {Kim}},\ }\bibfield  {title} {\bibinfo {title} {Exact hole-induced
  resonating-valence-bond ground state in certain $u=\infty$ hubbard models},\
  }\href@noop {} {\bibfield  {journal} {\bibinfo  {journal} {Physical Review
  B}\ }\textbf {\bibinfo {volume} {107}},\ \bibinfo {pages} {L140401} (\bibinfo
  {year} {2023})}\BibitemShut {NoStop}%
\bibitem [{\citenamefont {Zhang}\ and\ \citenamefont
  {Glittum}(2026)}]{sawtooth_zhang}%
  \BibitemOpen
  \bibfield  {author} {\bibinfo {author} {\bibfnamefont {Z.}~\bibnamefont
  {Zhang}}\ and\ \bibinfo {author} {\bibfnamefont {C.}~\bibnamefont
  {Glittum}},\ }\bibfield  {title} {\bibinfo {title} {Resonating valence bond
  ground states on corner-sharing simplices},\ }\href@noop {} {\bibfield
  {journal} {\bibinfo  {journal} {SciPost Physics}\ }\textbf {\bibinfo {volume}
  {20}},\ \bibinfo {pages} {121} (\bibinfo {year} {2026})}\BibitemShut
  {NoStop}%
\bibitem [{\citenamefont {Glittum}\ \emph {et~al.}(2025)\citenamefont
  {Glittum}, \citenamefont {{\v{S}}trkalj}, \citenamefont {Prabhakaran},
  \citenamefont {Goddard}, \citenamefont {Batista},\ and\ \citenamefont
  {Castelnovo}}]{kinetic_frustration_pyrochlore}%
  \BibitemOpen
  \bibfield  {author} {\bibinfo {author} {\bibfnamefont {C.}~\bibnamefont
  {Glittum}}, \bibinfo {author} {\bibfnamefont {A.}~\bibnamefont
  {{\v{S}}trkalj}}, \bibinfo {author} {\bibfnamefont {D.}~\bibnamefont
  {Prabhakaran}}, \bibinfo {author} {\bibfnamefont {P.~A.}\ \bibnamefont
  {Goddard}}, \bibinfo {author} {\bibfnamefont {C.~D.}\ \bibnamefont
  {Batista}},\ and\ \bibinfo {author} {\bibfnamefont {C.}~\bibnamefont
  {Castelnovo}},\ }\bibfield  {title} {\bibinfo {title} {A resonant valence
  bond spin liquid in the dilute limit of doped frustrated mott insulators},\
  }\href@noop {} {\bibfield  {journal} {\bibinfo  {journal} {Nature Physics}\
  }\textbf {\bibinfo {volume} {21}},\ \bibinfo {pages} {1211} (\bibinfo {year}
  {2025})}\BibitemShut {NoStop}%
\bibitem [{\citenamefont {Hubbard}(1963)}]{Hubbard1963}%
  \BibitemOpen
  \bibfield  {author} {\bibinfo {author} {\bibfnamefont {J.}~\bibnamefont
  {Hubbard}},\ }\bibfield  {title} {\bibinfo {title} {Electron correlations in
  narrow energy bands},\ }\href@noop {} {\bibfield  {journal} {\bibinfo
  {journal} {Proceedings of the Royal Society of London. Series A. Mathematical
  and Physical Sciences}\ }\textbf {\bibinfo {volume} {276}},\ \bibinfo {pages}
  {238} (\bibinfo {year} {1963})}\BibitemShut {NoStop}%
\bibitem [{\citenamefont {Imada}\ \emph {et~al.}(1998)\citenamefont {Imada},
  \citenamefont {Fujimori},\ and\ \citenamefont {Tokura}}]{Imada1998}%
  \BibitemOpen
  \bibfield  {author} {\bibinfo {author} {\bibfnamefont {M.}~\bibnamefont
  {Imada}}, \bibinfo {author} {\bibfnamefont {A.}~\bibnamefont {Fujimori}},\
  and\ \bibinfo {author} {\bibfnamefont {Y.}~\bibnamefont {Tokura}},\
  }\bibfield  {title} {\bibinfo {title} {Metal-insulator transitions},\
  }\href@noop {} {\bibfield  {journal} {\bibinfo  {journal} {Reviews of modern
  physics}\ }\textbf {\bibinfo {volume} {70}},\ \bibinfo {pages} {1039}
  (\bibinfo {year} {1998})}\BibitemShut {NoStop}%
\bibitem [{\citenamefont {Lieb}\ and\ \citenamefont {Wu}(1968)}]{LiebWu1968}%
  \BibitemOpen
  \bibfield  {author} {\bibinfo {author} {\bibfnamefont {E.~H.}\ \bibnamefont
  {Lieb}}\ and\ \bibinfo {author} {\bibfnamefont {F.-Y.}\ \bibnamefont {Wu}},\
  }\bibfield  {title} {\bibinfo {title} {Absence of mott transition in an exact
  solution of the short-range, one-band model in one dimension},\ }\href@noop
  {} {\bibfield  {journal} {\bibinfo  {journal} {Physical Review Letters}\
  }\textbf {\bibinfo {volume} {20}},\ \bibinfo {pages} {1445} (\bibinfo {year}
  {1968})}\BibitemShut {NoStop}%
\bibitem [{\citenamefont {Sen}\ \emph {et~al.}(1996)\citenamefont {Sen},
  \citenamefont {Shastry}, \citenamefont {Walstedt},\ and\ \citenamefont
  {Cava}}]{Sen1996}%
  \BibitemOpen
  \bibfield  {author} {\bibinfo {author} {\bibfnamefont {D.}~\bibnamefont
  {Sen}}, \bibinfo {author} {\bibfnamefont {B.~S.}\ \bibnamefont {Shastry}},
  \bibinfo {author} {\bibfnamefont {R.~E.}\ \bibnamefont {Walstedt}},\ and\
  \bibinfo {author} {\bibfnamefont {R.}~\bibnamefont {Cava}},\ }\bibfield
  {title} {\bibinfo {title} {Quantum solitons in the sawtooth lattice},\
  }\href@noop {} {\bibfield  {journal} {\bibinfo  {journal} {Phys. Rev. B}\
  }\textbf {\bibinfo {volume} {53}},\ \bibinfo {pages} {6401} (\bibinfo {year}
  {1996})}\BibitemShut {NoStop}%
\bibitem [{\citenamefont {Blundell}\ and\ \citenamefont
  {N{\'u}{\~n}ez-Regueiro}(2003)}]{Blundell2003}%
  \BibitemOpen
  \bibfield  {author} {\bibinfo {author} {\bibfnamefont {S.}~\bibnamefont
  {Blundell}}\ and\ \bibinfo {author} {\bibfnamefont {M.}~\bibnamefont
  {N{\'u}{\~n}ez-Regueiro}},\ }\bibfield  {title} {\bibinfo {title} {Quantum
  topological excitations: from the sawtooth lattice to the heisenberg chain},\
  }\href@noop {} {\bibfield  {journal} {\bibinfo  {journal} {The European
  Physical Journal B-Condensed Matter and Complex Systems}\ }\textbf {\bibinfo
  {volume} {31}},\ \bibinfo {pages} {453} (\bibinfo {year} {2003})}\BibitemShut
  {NoStop}%
\bibitem [{\citenamefont {Heinze}\ \emph {et~al.}(2021)\citenamefont {Heinze},
  \citenamefont {Jeschke}, \citenamefont {Mazin}, \citenamefont
  {Metavitsiadis}, \citenamefont {Reehuis}, \citenamefont {Feyerherm},
  \citenamefont {Hoffmann}, \citenamefont {Bartkowiak}, \citenamefont
  {Prokhnenko}, \citenamefont {Wolter} \emph {et~al.}}]{Heinze2021}%
  \BibitemOpen
  \bibfield  {author} {\bibinfo {author} {\bibfnamefont {L.}~\bibnamefont
  {Heinze}}, \bibinfo {author} {\bibfnamefont {H.}~\bibnamefont {Jeschke}},
  \bibinfo {author} {\bibfnamefont {I.}~\bibnamefont {Mazin}}, \bibinfo
  {author} {\bibfnamefont {A.}~\bibnamefont {Metavitsiadis}}, \bibinfo {author}
  {\bibfnamefont {M.}~\bibnamefont {Reehuis}}, \bibinfo {author} {\bibfnamefont
  {R.}~\bibnamefont {Feyerherm}}, \bibinfo {author} {\bibfnamefont {J.-U.}\
  \bibnamefont {Hoffmann}}, \bibinfo {author} {\bibfnamefont {M.}~\bibnamefont
  {Bartkowiak}}, \bibinfo {author} {\bibfnamefont {O.}~\bibnamefont
  {Prokhnenko}}, \bibinfo {author} {\bibfnamefont {A.}~\bibnamefont {Wolter}},
  \emph {et~al.},\ }\bibfield  {title} {\bibinfo {title} {Magnetization process
  of atacamite: A case of weakly coupled s= 1/2 sawtooth chains},\ }\href@noop
  {} {\bibfield  {journal} {\bibinfo  {journal} {Physical Review Letters}\
  }\textbf {\bibinfo {volume} {126}},\ \bibinfo {pages} {207201} (\bibinfo
  {year} {2021})}\BibitemShut {NoStop}%
\bibitem [{SM()}]{SM}%
  \BibitemOpen
  \href@noop {} {}\bibinfo {note} {See Supplemental Material at [URL will be
  inserted by publisher] for the exact one-dimensional mapping, the proofs of
  sector ordering, connectivity, and center positioning, the derivation of the
  penetration depth and the field dependence of $\ell_{\rm s}$, the exact
  compact localized state, and DMRG results for two holes, which includes
  Refs.~\cite{White1992,Schollwock2011,TenPy,ogata-shiba}}\BibitemShut
  {NoStop}%
\bibitem [{\citenamefont {Mielke}(1991)}]{flat_band_2}%
  \BibitemOpen
  \bibfield  {author} {\bibinfo {author} {\bibfnamefont {A.}~\bibnamefont
  {Mielke}},\ }\bibfield  {title} {\bibinfo {title} {Ferromagnetic ground
  states for the hubbard model on line graphs},\ }\href@noop {} {\bibfield
  {journal} {\bibinfo  {journal} {Journal of Physics A: Mathematical and
  General}\ }\textbf {\bibinfo {volume} {24}},\ \bibinfo {pages} {L73}
  (\bibinfo {year} {1991})}\BibitemShut {NoStop}%
\bibitem [{\citenamefont {Tasaki}(1992)}]{flat_band_3}%
  \BibitemOpen
  \bibfield  {author} {\bibinfo {author} {\bibfnamefont {H.}~\bibnamefont
  {Tasaki}},\ }\bibfield  {title} {\bibinfo {title} {Ferromagnetism in the
  hubbard models with degenerate single-electron ground states},\ }\href@noop
  {} {\bibfield  {journal} {\bibinfo  {journal} {Physical review letters}\
  }\textbf {\bibinfo {volume} {69}},\ \bibinfo {pages} {1608} (\bibinfo {year}
  {1992})}\BibitemShut {NoStop}%
\bibitem [{\citenamefont {White}(1992)}]{White1992}%
  \BibitemOpen
  \bibfield  {author} {\bibinfo {author} {\bibfnamefont {S.~R.}\ \bibnamefont
  {White}},\ }\bibfield  {title} {\bibinfo {title} {Density matrix formulation
  for quantum renormalization groups},\ }\href@noop {} {\bibfield  {journal}
  {\bibinfo  {journal} {Physical review letters}\ }\textbf {\bibinfo {volume}
  {69}},\ \bibinfo {pages} {2863} (\bibinfo {year} {1992})}\BibitemShut
  {NoStop}%
\bibitem [{\citenamefont {Hauschild}\ \emph {et~al.}(2024)\citenamefont
  {Hauschild}, \citenamefont {Unfried}, \citenamefont {Anand}, \citenamefont
  {Andrews}, \citenamefont {Bintz}, \citenamefont {Borla}, \citenamefont
  {Divic}, \citenamefont {Drescher}, \citenamefont {Geiger}, \citenamefont
  {Hefel} \emph {et~al.}}]{TenPy}%
  \BibitemOpen
  \bibfield  {author} {\bibinfo {author} {\bibfnamefont {J.}~\bibnamefont
  {Hauschild}}, \bibinfo {author} {\bibfnamefont {J.}~\bibnamefont {Unfried}},
  \bibinfo {author} {\bibfnamefont {S.}~\bibnamefont {Anand}}, \bibinfo
  {author} {\bibfnamefont {B.}~\bibnamefont {Andrews}}, \bibinfo {author}
  {\bibfnamefont {M.}~\bibnamefont {Bintz}}, \bibinfo {author} {\bibfnamefont
  {U.}~\bibnamefont {Borla}}, \bibinfo {author} {\bibfnamefont
  {S.}~\bibnamefont {Divic}}, \bibinfo {author} {\bibfnamefont
  {M.}~\bibnamefont {Drescher}}, \bibinfo {author} {\bibfnamefont
  {J.}~\bibnamefont {Geiger}}, \bibinfo {author} {\bibfnamefont
  {M.}~\bibnamefont {Hefel}}, \emph {et~al.},\ }\bibfield  {title} {\bibinfo
  {title} {Tensor network python (tenpy) version 1},\ }\href@noop {} {\bibfield
   {journal} {\bibinfo  {journal} {SciPost Physics Codebases}\ ,\ \bibinfo
  {pages} {041}} (\bibinfo {year} {2024})}\BibitemShut {NoStop}%
\bibitem [{\citenamefont {Schollw{\"o}ck}(2011)}]{Schollwock2011}%
  \BibitemOpen
  \bibfield  {author} {\bibinfo {author} {\bibfnamefont {U.}~\bibnamefont
  {Schollw{\"o}ck}},\ }\bibfield  {title} {\bibinfo {title} {The density-matrix
  renormalization group in the age of matrix product states},\ }\href@noop {}
  {\bibfield  {journal} {\bibinfo  {journal} {Annals of physics}\ }\textbf
  {\bibinfo {volume} {326}},\ \bibinfo {pages} {96} (\bibinfo {year}
  {2011})}\BibitemShut {NoStop}%
\bibitem [{\citenamefont {Ogata}\ and\ \citenamefont
  {Shiba}(1990)}]{ogata-shiba}%
  \BibitemOpen
  \bibfield  {author} {\bibinfo {author} {\bibfnamefont {M.}~\bibnamefont
  {Ogata}}\ and\ \bibinfo {author} {\bibfnamefont {H.}~\bibnamefont {Shiba}},\
  }\bibfield  {title} {\bibinfo {title} {Bethe-ansatz wave function, momentum
  distribution, and spin correlation in the one-dimensional strongly correlated
  hubbard model},\ }\href {https://doi.org/10.1103/PhysRevB.41.2326} {\bibfield
   {journal} {\bibinfo  {journal} {Phys. Rev. B}\ }\textbf {\bibinfo {volume}
  {41}},\ \bibinfo {pages} {2326} (\bibinfo {year} {1990})}\BibitemShut
  {NoStop}%
\end{thebibliography}%

\clearpage
\onecolumngrid
\setcounter{secnumdepth}{2}
\setcounter{section}{0}
\setcounter{equation}{0}
\setcounter{figure}{0}
\setcounter{table}{0}
\setcounter{lemma}{0}
\setcounter{theorem}{0}
\setcounter{corollary}{0}
\setcounter{proposition}{0}
\setcounter{remark}{0}
\renewcommand{\thesection}{S\arabic{section}}
\renewcommand{\theequation}{S\arabic{equation}}
\renewcommand{\thefigure}{S\arabic{figure}}
\renewcommand{\thetable}{S\arabic{table}}
\renewcommand{\thelemma}{S\arabic{lemma}}
\renewcommand{\thetheorem}{S\arabic{theorem}}
\renewcommand{\theproposition}{S\arabic{proposition}}
\renewcommand{\theremark}{S\arabic{remark}}
\renewcommand{\thecorollary}{S\arabic{corollary}}

\begin{center}
{\large\bfseries Supplemental Material for ``Exact Magnetic Confinement of a Doped Hole by Kinetic Frustration''}
\end{center}

\section{Exact mapping to a one-dimensional chain}
\label{app:1D_mapping}

The exact solvability of the one-hole sawtooth Hubbard model
follows from its reduction to a simple one-dimensional chain.
We label the basal sites by $2n+1$, with $n=0,\ldots,L$, and
the apex sites by $2n$, with $n=1,\ldots,L$. For a fixed dimer
configuration $\mathcal C$, the hopping amplitude on basal bond
$n$ is $-\tau_n$, where
\begin{equation}
\tau_n=
\begin{cases}
+t_2, & \text{singlet dimer},\\
-t_2, & \text{triplet dimer}.
\end{cases}
\label{eq:tau_definition}
\end{equation}
The hopping amplitude along the sides of every triangle is
$-t_1$.

Within a fixed magnetization sector $M$, the Zeeman term $-hM$
is constant. Throughout this appendix, $H_{\mathcal C}$ denotes
the zero-field hopping Hamiltonian for configuration $\mathcal C$,
and $E$ denotes one of its kinetic eigenenergies.

Let $a_i$ be the amplitudes of an eigenstate of
$H_{\mathcal C}$ with energy $E$. The Schr\"odinger equation on
apex site $2n$ is
\begin{equation}
E a_{2n}=-t_1(a_{2n-1}+a_{2n+1}).
\label{eq:apex_amplitude}
\end{equation}
For an interior basal site $2n+1$, substitution of
Eq.~\eqref{eq:apex_amplitude} gives
\begin{align}
E a_{2n+1}
={}&-\left(\tau_{n-1}-\frac{t_1^2}{E}\right)a_{2n-1}
-\left(\tau_n-\frac{t_1^2}{E}\right)a_{2n+3}+\frac{2t_1^2}{E}a_{2n+1},
\label{eq:schr_eq_mapped}
\end{align}
with the corresponding one-sided equations at the two ends.

Introducing the basal-state vector $|a_{\rm b}\rangle$, these
equations can be written as
\begin{equation}
E|a_{\rm b}\rangle
=\mathcal K_{\mathcal C}^{\rm sgn}(E)|a_{\rm b}\rangle,
\label{eq:K_signed}
\end{equation}
where
\begin{equation}
\mathcal K_{\mathcal C}^{\rm sgn}(E)
=
H_{\rm 1D}\!\left(\{\bar t_n(E)\}\right)
+U_{\rm b}(E)
+\frac{2t_1^2}{E}\,\mathbb 1_{\rm b},
\label{eq:K_definition}
\end{equation}
and
\begin{equation}
\bar t_n(E)=\tau_n-\frac{t_1^2}{E}.
\end{equation}
Here $H_{\rm 1D}$ is an open chain on the $L+1$ basal sites
with off-diagonal matrix elements $-\bar t_n(E)$, while
\begin{equation}
U_{\rm b}(E)
=
\frac{t_1^2}{|E|}
\left(
|1\rangle\langle1|
+|2L+1\rangle\langle2L+1|
\right)
\label{eq:boundary_potential}
\end{equation}
is a positive potential on the two end sites. This term reduces
the magnitude of their diagonal contribution because each
boundary basal site has only one apex neighbor.

For $E<0$, the singlet and triplet bond magnitudes are
\begin{equation}
\tilde t_s(E)=\frac{t_1^2}{|E|}+t_2,
\qquad
\tilde t_t(E)=
\left|\frac{t_1^2}{|E|}-t_2\right|,
\label{eq:effective_hoppings}
\end{equation}
and therefore
\begin{equation}
\tilde t_s(E)>\tilde t_t(E)\geq0.
\label{eq:effective_hopping_order}
\end{equation}
The signed triplet hopping
$\bar t_t(E)=t_1^2/|E|-t_2$ can change sign and vanishes at the
destructive-interference point. Because the basal lattice is an
open chain, all nonzero hopping signs can be removed by a gauge
transformation. Only the magnitudes in
Eq.~\eqref{eq:effective_hoppings} affect the spectrum.

We denote the resulting gauge-equivalent matrix by
$\mathcal K_{\mathcal C}(E)$.

For $E\neq0$, the mapping is bijective. Restricting an eigenvector
of $H_{\mathcal C}$ to the basal sites gives an eigenvector of
$\mathcal K_{\mathcal C}(E)$ with eigenvalue $E$. Conversely, a
basal eigenvector satisfying Eq.~\eqref{eq:K_signed} uniquely
determines the apex amplitudes through
\begin{equation}
a_{2n}
=-\frac{t_1}{E}(a_{2n-1}+a_{2n+1}).
\label{eq:inverse_mapping}
\end{equation}

\subsection*{Equivalence of the ground states}

\begin{lemma}[Level counting]
\label{lem:count}
For every $E<0$, the number of eigenvalues of
$H_{\mathcal C}$ below $E$ equals the number of eigenvalues of
$\mathcal K_{\mathcal C}(E)$ below $E$.
\end{lemma}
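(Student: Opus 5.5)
The plan is to apply Sylvester's law of inertia to the block decomposition of $H_{\mathcal C}-E$ with respect to apex and basal sites. Order the basis as (apex, basal) and write
\begin{equation}
H_{\mathcal C}-E\mathbb 1=
\begin{pmatrix}
-E\,\mathbb 1_{\rm a} & B\\
B^{T} & D-E\,\mathbb 1_{\rm b}
\end{pmatrix}.
\end{equation}
Here the apex block is diagonal, because apex sites have zero on-site energy and no mutual hopping. $B$ contains the $-t_1$ apex--basal links, and $D$ is the basal hopping matrix built from the $-\tau_n$. For $E<0$ the apex block $A=-E\,\mathbb 1_{\rm a}$ is positive definite, so it is invertible.

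The first step is the congruence
\begin{equation}
H_{\mathcal C}-E\mathbb 1 = P^{T}
\begin{pmatrix} A & 0\\ 0 & S(E)\end{pmatrix} P,
\end{equation}
with
\begin{equation}
P=\begin{pmatrix}\mathbb 1 & A^{-1}B\\ 0&\mathbb 1\end{pmatrix},
\qquad
S(E)=D-E\,\mathbb 1_{\rm b}-B^{T}A^{-1}B .
\end{equation}
The second step is to check directly that
\begin{equation}
B^{T}A^{-1}B=-\frac{t_1^2}{E}B_0^{T}B_0 .
\end{equation}
Here $B_0^{T}B_0$ has diagonal entry $2$ at interior basal sites and $1$ at the two end sites, and entry $1$ between basal neighbours sharing an apex. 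It then follows that $S(E)=\mathcal K^{\rm sgn}_{\mathcal C}(E)-E\,\mathbb 1_{\rm b}$ as given by Eq.~\eqref{eq:K_definition}. The two pieces of bookkeeping to verify are that the off-diagonal shift produces $\bar t_n=\tau_n-t_1^2/E$, and that the end sites get $2t_1^2/E+t_1^2/|E|=t_1^2/E$, which is the role of $U_{\rm b}$. Because the gauge transformation to $\mathcal K_{\mathcal C}(E)$ is a diagonal $\pm1$ similarity, and hence also a congruence, $S(E)$ may be replaced by $\mathcal K_{\mathcal C}(E)-E\,\mathbb 1_{\rm b}$.

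The third step is Sylvester's law. Congruence preserves the number of negative eigenvalues, so
\begin{equation}
n_-(H_{\mathcal C}-E)=n_-(A)+n_-\bigl(\mathcal K_{\mathcal C}(E)-E\bigr).
\end{equation}
Since $A=|E|\mathbb 1_{\rm a}>0$, we have $n_-(A)=0$. The left side counts eigenvalues of $H_{\mathcal C}$ below $E$, and the last term counts eigenvalues of $\mathcal K_{\mathcal C}(E)$ below $E$, which is the claim.

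There is no real obstacle here. The only delicate points are the sign conventions, namely that $E<0$ makes the apex block positive rather than negative, and the correct end-site diagonal entries. Both are checked by writing out $B_0^{T}B_0$ explicitly.
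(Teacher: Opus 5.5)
Your proposal is correct and follows essentially the same route as the paper: take the Schur complement over the positive-definite apex block $-E\,\mathbb 1$ (for $E<0$), identify it with $\mathcal K^{\rm sgn}_{\mathcal C}(E)-E\,\mathbb 1_{\rm b}$, note that the $\pm1$ gauge transformation preserves inertia, and apply Sylvester's law of inertia. The differences are cosmetic. You order the blocks apex-first, and you explicitly check the off-diagonal entries $-\bar t_n$ and the end-site diagonal $t_1^2/E$ coming from $B^{\mathsf T}A^{-1}B$, which the paper states without computation.
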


\begin{proof}
Decompose $H_{\mathcal C}-E\mathbb 1$ into basal and apex blocks:
\begin{equation}
H_{\mathcal C}-E\mathbb 1
=
\begin{pmatrix}
A & B\\
B^{\mathsf T} & D
\end{pmatrix},
\qquad
D=-E\mathbb 1_{\rm apex}.
\end{equation}
For $E<0$, $D$ is positive definite. The congruence
transformation
\begin{align}
&
\begin{pmatrix}
\mathbb 1_{\rm b} & -BD^{-1}\\
0 & \mathbb 1_{\rm apex}
\end{pmatrix}
(H_{\mathcal C}-E\mathbb 1)
\begin{pmatrix}
\mathbb 1_{\rm b} & 0\\
-D^{-1}B^{\mathsf T} & \mathbb 1_{\rm apex}
\end{pmatrix}=
\begin{pmatrix}
A-BD^{-1}B^{\mathsf T} & 0\\
0 & D
\end{pmatrix}
\label{eq:schur_congruence}
\end{align}
gives
\begin{equation}
A-BD^{-1}B^{\mathsf T}
=
\mathcal K_{\mathcal C}^{\rm sgn}(E)
-E\mathbb 1_{\rm b}.
\end{equation}
The gauge transformation relating
$\mathcal K_{\mathcal C}^{\rm sgn}(E)$ and
$\mathcal K_{\mathcal C}(E)$ does not change their spectra.
Sylvester's law of inertia therefore implies that
$H_{\mathcal C}-E\mathbb 1$ and
$\mathcal K_{\mathcal C}(E)-E\mathbb 1_{\rm b}$ have the same
number of negative eigenvalues, proving the claim.
\end{proof}

Define
\begin{equation}
g_{\mathcal C}(E)
\equiv
\lambda_{\min}\!\left[\mathcal K_{\mathcal C}(E)\right].
\end{equation}

\begin{theorem}[Ground-state correspondence]
\label{thm:unique}
For $E<0$,
\begin{equation}
g_{\mathcal C}(E)=E
\quad\Longleftrightarrow\quad
E=E_{\mathcal C},
\end{equation}
where $E_{\mathcal C}$ is the ground-state energy of
$H_{\mathcal C}$.
\end{theorem}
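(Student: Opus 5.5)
The plan is to derive the correspondence from the level-counting Lemma~\ref{lem:count}, combined with the monotonicity of $E\mapsto g_{\mathcal C}(E)$. Let $N_H(E)$ be the number of eigenvalues of $H_{\mathcal C}$ strictly below $E$, and $N_K(E)$ the number of eigenvalues of $\mathcal K_{\mathcal C}(E)$ strictly below $E$. Lemma~\ref{lem:count} gives $N_H(E)=N_K(E)$ for all $E<0$. First I will check that $E_{\mathcal C}<0$. Since $H_{\mathcal C}$ is traceless and nonzero (because $t_1>0$), its lowest eigenvalue is negative. So the interval $E<0$ contains the ground-state energy.

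Next I will show that $g_{\mathcal C}(E)>E$ for $E_{\mathcal C}>E$ (with $E<0$) and $g_{\mathcal C}(E)<E$ for $E_{\mathcal C}<E<0$. For $E\le E_{\mathcal C}$ we have $N_H(E)=0$, hence $N_K(E)=0$ and $g_{\mathcal C}(E)\ge E$. For $E_{\mathcal C}<E<0$ we have $N_H(E)\ge1$, hence $g_{\mathcal C}(E)<E$.

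It then remains to exclude $g_{\mathcal C}(E)=E$ for $E<E_{\mathcal C}$. At such an $E$, $\mathcal K_{\mathcal C}(E)$ has eigenvalue $E$ with a basal eigenvector. By the bijective correspondence, Eq.~\eqref{eq:inverse_mapping} lifts it to an eigenvector of $H_{\mathcal C}$ with eigenvalue $E<E_{\mathcal C}$, which is a contradiction. Concretely, the Schur complement $A-BD^{-1}B^{\mathsf T}$ is singular exactly when $H_{\mathcal C}-E$ is singular, since $D$ is invertible. Hence $\det(\mathcal K_{\mathcal C}(E)-E)=0$ if and only if $E\in\mathrm{spec}(H_{\mathcal C})$.

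At $E=E_{\mathcal C}$ the same determinant identity makes $E$ an eigenvalue of $\mathcal K_{\mathcal C}(E)$. Since $N_K(E_{\mathcal C})=0$, it is the lowest one, so $g_{\mathcal C}(E_{\mathcal C})=E_{\mathcal C}$. Conversely, suppose $g_{\mathcal C}(E)=E$ with $E<0$. Then $E$ lies in the spectrum of $H_{\mathcal C}$, and $N_H(E)=N_K(E)=0$, so $E$ is the lowest eigenvalue of $H_{\mathcal C}$, i.e.\ $E=E_{\mathcal C}$. This gives uniqueness directly. The argument uses only inertia and the determinant of the Schur complement; no monotonicity is needed.

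The main point needing care is the determinant step, namely that eigenvalues of $H_{\mathcal C}$ and zero modes of $\mathcal K^{\rm sgn}_{\mathcal C}(E)-E$ coincide. This follows from $\det(H_{\mathcal C}-E)=\det D\cdot\det(A-BD^{-1}B^{\mathsf T})$ with $\det D=(-E)^{L}\ne0$, together with gauge invariance of the spectrum. A minor point to verify is that the gauge transformation is well defined when some triplet hopping vanishes. This is harmless: on an open chain, a zero bond simply decouples the chain, so the spectrum is unaffected.
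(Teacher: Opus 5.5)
Your argument is correct and follows essentially the paper's route: the strict inertia count of Lemma~\ref{lem:count}, combined with the fact that $E$ is an eigenvalue of $\mathcal K_{\mathcal C}(E)$ exactly when it is one of $H_{\mathcal C}$. Your Schur-determinant identity carries the same content as the paper's eigenvector bijection.

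The only difference is a mild streamlining. You apply the count directly at $E=E_{\mathcal C}$, which is legitimate because your tracelessness argument gives $E_{\mathcal C}<0$ (the paper uses a trial state for this). That removes the need for the paper's continuity limit $E\to E_{\mathcal C}^-$. Your direct converse also makes the case split unnecessary, along with the monotonicity mentioned in your opening.
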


\begin{proof}
Let $N_{\mathcal C}(E)$ denote the number of eigenvalues of
$H_{\mathcal C}$ below $E$. Lemma~\ref{lem:count} implies
\begin{equation}
g_{\mathcal C}(E)<E
\quad\Longleftrightarrow\quad
N_{\mathcal C}(E)\geq1.
\end{equation}
If $E<E_{\mathcal C}$, then $N_{\mathcal C}(E)=0$ and hence
$g_{\mathcal C}(E)\geq E$. Equality would make $E$ an eigenvalue
of $H_{\mathcal C}$ below its ground-state energy, so in fact
$g_{\mathcal C}(E)>E$.

If $E>E_{\mathcal C}$, the ground state lies below $E$, giving
$N_{\mathcal C}(E)\geq1$ and therefore
$g_{\mathcal C}(E)<E$.

Finally, the bijection implies that $E_{\mathcal C}$ is an
eigenvalue of $\mathcal K_{\mathcal C}(E_{\mathcal C})$, so
$g_{\mathcal C}(E_{\mathcal C})\leq E_{\mathcal C}$. Taking
$E\to E_{\mathcal C}^{-}$ and using the continuity of
$g_{\mathcal C}(E)$ gives the reverse inequality. Hence
\begin{equation}
g_{\mathcal C}(E_{\mathcal C})=E_{\mathcal C},
\end{equation}
and this is the unique negative solution of
$g_{\mathcal C}(E)=E$.
\end{proof}

The original sawtooth ground state is therefore identified
self-consistently with the lowest eigenstate of the full
energy-dependent matrix $\mathcal K_{\mathcal C}(E)$, including
its uniform diagonal term and boundary potential.

\begin{remark}[$E_{\mathcal C}<0$]
\label{rem:Eneg}
A normalized trial state with amplitudes
$(a_{2n},a_{2n+1})=(1,1)/\sqrt2$ on a single side of one triangle
has expectation value $-t_1$. Therefore
\begin{equation}
E_{\mathcal C}\leq -t_1<0,
\end{equation}
so the restriction to $E<0$ includes the true ground state for
every dimer configuration.
\end{remark}

\section{Proof of sector ordering}
\label{app:ordering}

Within a fixed magnetization sector $M$, the Zeeman term $-hM$
is constant. We therefore define $E_{\rm GS}(k)$ as the
ground-state energy of the zero-field hopping Hamiltonian among
configurations containing $k$ triplet dimers.

Let $\mathcal C$ denote a fixed arrangement of singlet and
triplet dimers, and let $E_{\mathcal C}<0$ be its ground-state
energy. As shown in Appendix~\ref{app:1D_mapping}, eliminating
the apex amplitudes at a fixed energy $E<0$ maps the sawtooth
chain onto an open chain of basal sites described by the
energy-dependent matrix $\mathcal K_{\mathcal C}(E)$. After
gauging away the hopping signs, a singlet bond carries hopping
\begin{equation}
\tilde t_s(E)=\frac{t_1^2}{|E|}+t_2,
\end{equation}
whereas a triplet bond carries
\begin{equation}
\tilde t_t(E)=
\left|\frac{t_1^2}{|E|}-t_2\right|.
\end{equation}
Thus,
\begin{equation}
\tilde t_s(E)>\tilde t_t(E)\geq0.
\label{eq:hopping_order}
\end{equation}

Consider a configuration $\mathcal C'$ obtained from
$\mathcal C$ by replacing one triplet dimer by a singlet. At
fixed $E$, this replacement increases one hopping of the mapped
chain from $\tilde t_t(E)$ to $\tilde t_s(E)$, leaving all
diagonal terms unchanged. In the case of $\tilde t_t(E)>0$, one can choose a positive normalized eigenvector $\phi$ of $\mathcal K_{\mathcal C}(E)$ as its ground state due to Perron-Frobenius theorem. The variational principle gives
\begin{align}
\lambda_{\min}[\mathcal K_{\mathcal C'}(E)]
&\leq
\langle\boldsymbol{\phi}|
\mathcal K_{\mathcal C'}(E)
|\boldsymbol{\phi}\rangle=
\lambda_{\min}[\mathcal K_{\mathcal C}(E)]
-2\bigl(\tilde t_s-\tilde t_t\bigr)
\phi_i\phi_{i+1}<
\lambda_{\min}[\mathcal K_{\mathcal C}(E)].
\label{eq:mapped_variational_order}
\end{align}
Evaluating this inequality at $E=E_{\mathcal C}$ and applying
the level-counting result once more yields
\begin{equation}
E_{\mathcal C'}<E_{\mathcal C}.
\end{equation}
Replacing triplet dimers by singlets successively therefore
gives
\begin{equation}
E_{\rm GS}(k_1)< E_{\rm GS}(k_2),
\qquad k_1< k_2.
\end{equation}
\begin{remark}[$\tilde t_t(E)=0$]
\label{rem:t_tzero}
Let $\mathcal C$ contain $k\geq1$ triplet dimers and satisfy
$\tilde t_t(E_{\mathcal C})=0$. Then
$\mathcal K_{\mathcal C}(E_{\mathcal C})$ decomposes into blocks
connected by singlet bonds. Let $\mathcal B$ be a block that
carries a ground state $\phi$ of
$\mathcal K_{\mathcal C}(E_{\mathcal C})$. By the
Perron--Frobenius theorem applied to $\mathcal B$, $\phi$ can be
chosen strictly positive on $\mathcal B$ and zero elsewhere.
Because $k\geq1$, $\mathcal B$ is not the whole chain, so at least
one triplet bond $(i,i+1)$ has exactly one endpoint, say $i$, in
$\mathcal B$. Let $\mathcal C'$ be obtained by replacing this
triplet by a singlet. Then
$\langle\phi|\mathcal K_{\mathcal C'}(E_{\mathcal C})|\phi\rangle
=\lambda_{\min}[\mathcal K_{\mathcal C}(E_{\mathcal C})]$, but
$\phi$ is not an eigenvector of
$\mathcal K_{\mathcal C'}(E_{\mathcal C})$, because the new bond
couples $\phi_i>0$ to $\phi_{i+1}=0$. Hence
$\lambda_{\min}[\mathcal K_{\mathcal C'}(E_{\mathcal C})]<E_{\mathcal C}$,
and Theorem~\ref{thm:unique} gives
$E_{\mathcal C'}<E_{\mathcal C}$. Together with
Eq.~\eqref{eq:mapped_variational_order}, this proves
$E_{\rm GS}(k_1)<E_{\rm GS}(k_2)$ for $k_1<k_2$ in all cases and
completes the proof of Theorem~\ref{thm:sector}.
\end{remark}

\section{Connectivity proof}
\label{app:connectivity}

For a fixed number $k$ of triplet dimers, we prove that the
ground-state energy is minimized when the $L-k$ singlet dimers
form a single connected region.

At a fixed physical energy $E<0$, Appendix~\ref{app:1D_mapping}
maps a dimer configuration $\mathcal C$ onto the one-dimensional
Hamiltonian
\begin{equation}
\mathcal H_{\mathcal C}(E)
=
H_{\rm 1D}\!\left(\{\tilde t_n(E)\}\right)+U_{\rm b}(E),
\label{eq:H1D_connectivity}
\end{equation}
whose eigenvalue is
\begin{equation}
E_{\rm 1D}(E)=E-\frac{2t_1^2}{E}
\label{eq:shifted_energy}
\end{equation}
and is monotonic for $E<0$. The two possible hopping magnitudes are
\begin{equation}
\tilde t_s(E)=\frac{t_1^2}{|E|}+t_2,
\qquad
\tilde t_t(E)=
\left|\frac{t_1^2}{|E|}-t_2\right|,
\end{equation}
with $\tilde t_s>\tilde t_t\geq0$. In this appendix, all effective
hoppings are evaluated at the same physical energy $E$.

\begin{theorem}[Connectivity]
\label{thm:connectivity}
At fixed $k$, the ground-state energy is minimized when all
$L-k$ singlet dimers form a single connected region.
\end{theorem}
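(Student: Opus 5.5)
We argue by contradiction. Fix $k$ and let $\mathcal C$ be a configuration minimizing $E_{\mathcal C}$ among all configurations with $k$ triplets. Suppose its singlet bonds form at least two maximal runs. Set $E_1\equiv E_{\mathcal C}$ and work with the reduced chain $\mathcal H_{\mathcal C}(E_1)$ of Eq.~\eqref{eq:H1D_connectivity}. Its lowest eigenvalue is $E_{\rm 1D}(E_1)=E_1-2t_1^2/E_1$ by Theorem~\ref{thm:unique}, and we gauge its ground state to $\psi\ge0$.

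The case $\tilde t_t(E_1)=0$ is handled first, along the lines of Remark~\ref{rem:t_tzero}. There the chain splits into decoupled singlet blocks, and the ground state lives on a single block $\mathcal B$. Moving a singlet from another block so that it attaches to $\mathcal B$ lowers $\lambda_{\min}$ strictly: the old $\psi$ has unchanged energy but is no longer an eigenvector, since the new bond couples a positive amplitude to a zero. This contradicts minimality. From now on $\tilde t_t>0$, so Perron--Frobenius gives $\psi>0$, and Lemma~\ref{lem:rearr} becomes available.

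Pick two singlet runs $\mathcal I_1,\mathcal I_2$ separated by a triplet region $\mathcal G$, labelled so that $\max_{\mathcal I_1}\psi=\psi_j\le\max_{\mathcal I_2}\psi$.

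\begin{itemize}
\item Let $c$ be an endpoint of $\mathcal I_1$ at which $\psi$ attains its maximum over $\mathcal I_1$. I expect to have to show that the peak can be taken at an endpoint, or else to handle an interior peak by the same local move. Let $(a,c)$ be the singlet bond of $\mathcal I_1$ at $c$, and $(b,c)$ its other bond, which is a triplet or an end.
\item Lemma~\ref{lem:straddle} supplies an oriented bond $(e,d)$ in $\mathcal I_2\cup\mathcal G$ with $\psi_e\ge\psi_c\ge\psi_d$.
\item Form $\mathcal C'$ by deleting the singlet $(a,c)$, turning it into a triplet, and making $(e,d)$ a singlet. This is the five-site move of Fig.~\ref{fig:proof}(b) and preserves $k$.
\end{itemize}

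Evaluate the Rayleigh quotient of $\mathcal H_{\mathcal C'}(E_1)$ on the trial vector obtained from $\psi$ by relabelling sites so that $c$ sits between $e$ and $d$. On the line this is the permutation moving site $c$ into the bond $(e,d)$. Expanding the difference bond by bond yields Eq.~\eqref{eq:connectivity_proof_main}, and each term is non-negative:
\begin{itemize}
\item $\psi_e\ge\psi_c\ge\psi_d$ by Lemma~\ref{lem:straddle};
\item $\psi_c\ge\psi_a$ by the peak choice;
\item $\psi_c\ge\psi_b$ by Lemma~\ref{lem:rearr}, since $t_{bc}\le\tilde t_s$;
\item $\Delta U_{\rm b}\ge0$ when $c$ is an end site.
\end{itemize}
Hence $\lambda_{\min}[\mathcal H_{\mathcal C'}(E_1)]\le E_{\rm 1D}(E_1)$. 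This gives $g_{\mathcal C'}(E_1)\le E_1$, and Theorem~\ref{thm:unique} together with the monotonicity of $g-E$ gives $E_{\mathcal C'}\le E_1$.

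The main obstacle is strictness. If every term vanishes and the trial vector is exactly an eigenvector of $\mathcal H_{\mathcal C'}$, I would try to show that equalities such as $\psi_e=\psi_c=\psi_d$, $\psi_c=\psi_a$ and $\psi_c=\psi_b$ propagate through the eigenvalue equation. Carried along a run, they force $\psi$ to be constant across singlet bonds. The interior equation at a constant run then gives eigenvalue $-2\tilde t_s$, with the boundary potential handled separately. This contradicts the open-chain bound $E_{\rm 1D}\ge-2\tilde t_s\cos(\pi/(L+2))>-2\tilde t_s$, which follows because all hoppings are at most $\tilde t_s$ and $U_{\rm b}\ge0$. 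So the inequality is strict, $E_{\mathcal C'}<E_{\mathcal C}$, contradicting minimality, and the minimizer has a single connected singlet region.
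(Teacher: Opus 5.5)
Your outline follows the paper's proof closely: the block argument for $\tilde t_t=0$, the five-site move of Fig.~\ref{fig:proof}(b), Lemmas~\ref{lem:rearr} and~\ref{lem:straddle}, the expansion \eqref{eq:connectivity_proof_main}, and the open-chain bound at the end. Three smaller points come first.

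You cannot assume the peak of $\mathcal I_1$ sits at an endpoint, because a long singlet run carries a standing wave peaked in its interior. Take your second option instead, as the paper does: let $c$ be the peak wherever it lies and $b$ its other neighbour, with $t_{bc}\in\{\tilde t_t,\tilde t_s\}$. The expansion then holds verbatim. The bound $\psi_c\ge\psi_b$ comes from the peak property if $(b,c)$ is a singlet bond, and from Lemma~\ref{lem:rearr} if it is a triplet bond.

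You also need $d\neq c$ for the move to make sense. If $d=c$, then $e=b$. This is impossible if $(b,c)$ is a singlet bond, since then $b\in\mathcal I_1$. If $(b,c)$ is a triplet bond, $\psi_e\ge\psi_c$ contradicts $\psi_b<\psi_a\le\psi_c$ from Lemma~\ref{lem:rearr}.

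Finally, define $\mathcal C'$ only through the site permutation: $a$ and $b$ are joined by a bond $t_{bc}$, and $c$ is inserted between $d$ and $e$ with $(d,c)=t_{ed}$ and $(c,e)=\tilde t_s$. Read at fixed sites, ``turn $(a,c)$ into a triplet and make $(e,d)$ a singlet'' is a different move. It changes $k$ if $(e,d)$ is already a singlet. Otherwise, by Lemma~\ref{lem:rearr}, it raises the Rayleigh quotient of the unchanged $\psi$ by $2(\tilde t_s-\tilde t_t)(\psi_a\psi_c-\psi_e\psi_d)>0$.

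The genuine gap is strictness, which you correctly single out but do not close.
Vanishing of the terms in \eqref{eq:connectivity_proof_main} yields only disjunctions. For example, $\psi_a=\psi_c=\psi_e$ already kills every term, $\Delta U_{\rm b}$ included, and leaves $\psi_b,\psi_d$ unconstrained. So $\psi_e=\psi_c=\psi_d$ does not follow from it.

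Propagating constancy along a run is circular. At a site with two singlet bonds the eigenvalue equation reads $\psi_{x+1}=-(\epsilon/\tilde t_s)\psi_x-\psi_{x-1}$. Two equal consecutive amplitudes therefore stay equal only if $\epsilon=-2\tilde t_s$, which is exactly what you are trying to derive.

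The missing idea is this. In the equality case, the relabelled vector is an eigenvector of $\mathcal H_{\mathcal C'}(E_1)$ with the same eigenvalue $\epsilon=E_{\rm 1D}(E_1)$ that $\psi$ has for $\mathcal H_{\mathcal C}(E_1)$. Subtract the two eigenvalue equations at the sites whose neighbours changed:
at $d$ this gives $t_{ed}\psi_e=t_{ed}\psi_c$, and at $e$ it gives $t_{ed}\psi_d=\tilde t_s\psi_c$.
Since $0<t_{ed}\le\tilde t_s$ and $0<\psi_d\le\psi_c$, the second equation forces $t_{ed}=\tilde t_s$ and $\psi_d=\psi_c$. The first then gives $\psi_e=\psi_c$.

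Only now does the equation at $c$ in $\mathcal C'$ read $\epsilon\psi_c=-\tilde t_s(\psi_d+\psi_e)=-2\tilde t_s\psi_c$. This contradicts $\epsilon\ge-2\tilde t_s\cos(\pi/(L+2))$. The conclusion $t_{ed}=\tilde t_s$ is indispensable: without it the equation at $c$ gives only $\epsilon=-(\tilde t_s+t_{ed})$, which contradicts nothing.
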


We argue by contradiction. Since there are finitely many dimer
configurations with $k$ triplets, a globally minimizing
configuration at fixed $k$ exists. Assume that it contains at
least two singlet regions. Throughout, singlet and triplet regions
are disjoint sets of basal sites: every site incident to a singlet
bond belongs to a singlet region.
We first consider $\tilde t_t>0$, for which the mapped chain is
connected and its normalized ground-state amplitudes can be
chosen strictly positive.

The argument uses Lemmas~\ref{lem:rearr} and \ref{lem:straddle}
of the main text, which we prove first. Since
$\mathcal H_{\mathcal C}(E)$ and $\mathcal K_{\mathcal C}(E)$ differ
only by the uniform shift $2t_1^2/E$, they have the same
eigenvectors.

\begin{proof}[Proof of Lemma~\ref{lem:rearr}]
Exchange the two hopping values while keeping the amplitudes
fixed. The change in the Rayleigh quotient of the mapped
Hamiltonian is
\begin{equation}
\Delta\mathcal R
=
2(t_j-t_i)
\left(
\psi_j\psi_{j+1}-\psi_i\psi_{i+1}
\right).
\end{equation}
where $\Delta\mathcal R\equiv\mathcal R_{\rm exchanged}
-\mathcal R_{\rm original}$.
If $\psi_i\psi_{i+1}>\psi_j\psi_{j+1}$, the exchanged
configuration would have a trial eigenvalue below
$\epsilon(E)$. After restoring the uniform shift
$2t_1^2/E$, its lowest eigenvalue would lie below $E$.
Lemma~\ref{lem:count} would then imply that the corresponding
sawtooth Hamiltonian has a ground-state energy below $E$,
contradicting the assumed global minimality of the original
configuration.

If $\psi_i\psi_{i+1}=\psi_j\psi_{j+1}$, the exchanged trial state
has the same Rayleigh quotient $\epsilon(E)$. Unless the
exchanged configuration has a lower eigenvalue, which again
contradicts minimality, the trial state must be one of its
eigenvectors. Since the two bonds are distinct, bond $(i,i+1)$ has
an endpoint $x$ that is not an endpoint of $(j,j+1)$. Let $y$ be
the other endpoint of $(i,i+1)$. The eigenvalue equation at $x$
changes by $(t_i-t_j)\psi_y\neq0$, because all amplitudes are
positive. Hence the trial state is not an eigenvector, and
equality is also excluded.
\end{proof}

\begin{proof}[Proof of Lemma~\ref{lem:straddle}]
Let $p_0=j',\ldots,p_m=j$ be the path from the peak of
$\mathcal I_2$ to $j$, and let $s$ be the largest index for which
$p_s\in\mathcal I_2\cup\mathcal G$. Suppose that no bond
$(p_r,p_{r+1})$, with $r\leq s$, satisfies the claim. A
non-witnessing bond leaving a site with amplitude at least
$\psi_j$ must then enter a site with amplitude strictly greater
than $\psi_j$. Starting from
$\psi_{p_0}\geq\psi_j$ and iterating this observation gives
$\psi_{p_{s+1}}>\psi_j$. But
$p_{s+1}\in\mathcal I_1$, contradicting the definition of
$\psi_j$.
\end{proof}

\subsection*{Rearrangement step}

The five-site rearrangement is shown in
Fig.~\ref{fig:proof}(b). Let $c=j$ be the peak of
$\mathcal I_1$, let $a\in\mathcal I_1$ be its singlet neighbor,
and let $b$ be its other neighbor. The bond $(c,b)$ carries
$t_{bc}\in\{\tilde t_t,\tilde t_s\}$. Let $(e,d)$ be the bond
provided by Lemma~\ref{lem:straddle}, with hopping
$t_{ed}\in\{\tilde t_t,\tilde t_s\}$. The case $d=c$ cannot
occur. It would imply $e=b$ and $\psi_b=\psi_e\geq\psi_c$. If
$(c,b)$ is a singlet bond, then $b\in\mathcal I_1$, contradicting
$e\in\mathcal I_2\cup\mathcal G$. If $(c,b)$ is a triplet bond,
Lemma~\ref{lem:rearr} gives $\psi_c\psi_b<\psi_a\psi_c$, i.e.,
$\psi_b<\psi_a\leq\psi_c$. The two local bond
configurations are
\begin{center}
\begin{tabular}{lll}
State 1:
& $(a,c)=\tilde t_s$,
& $(c,b)=t_{bc}$,\quad $(d,e)=t_{ed}$,\\
State 2:
& $(a,b)=t_{bc}$,
& $(c,d)=t_{ed}$,\quad $(e,c)=\tilde t_s$.
\end{tabular}
\end{center}
All other bonds are unchanged. This rearrangement preserves the
numbers of singlet and triplet dimers and therefore preserves
both $k$ and the magnetization sector.

Let $\mathcal R_1=\epsilon(E)$ be the ground-state Rayleigh
quotient of State 1, and let $\mathcal R_2$ be the Rayleigh
quotient obtained by applying the rearrangement to the same
amplitudes. From Lemmas~\ref{lem:rearr} and
\ref{lem:straddle},
\begin{equation}
\psi_e\geq\psi_c\geq\psi_d,
\qquad
\psi_c\geq\psi_a,\psi_b.
\end{equation}
A direct calculation gives
\begin{align}
\frac{\mathcal R_1-\mathcal R_2}{2}
={}&
(\psi_e-\psi_c)\psi_d(\tilde t_s-t_{ed})+
(\psi_c-\psi_a)\psi_b(\tilde t_s-t_{bc})
\nonumber\\
&+
\tilde t_s(\psi_c-\psi_a)(\psi_c-\psi_b)
-
\tilde t_s(\psi_e-\psi_c)(\psi_d-\psi_c)
+\frac{\Delta U_{\rm b}}{2}
\geq0.
\label{eq:connectivity_proof}
\end{align}
Every term is non-negative because
$t_{ed},t_{bc}\leq\tilde t_s$. The boundary contribution
$\Delta U_{\rm b}$ vanishes unless $c$ is an end site of the
chain. In that case $b$ does not exist, and we set $\psi_b=0$ and
$t_{bc}=0$ in Eq.~\eqref{eq:connectivity_proof}. After the
rearrangement, $a$ becomes the end site, giving
\begin{equation}
\Delta U_{\rm b}
=
U_{\rm b}(\psi_c^2-\psi_a^2)\geq0.
\end{equation}
Thus, State 2 has a trial eigenvalue no greater than
$\epsilon(E)$.

\begin{proposition}[Strict lowering]
\label{prop:strict}
For $\tilde t_t>0$, the rearranged configuration has a strictly
lower physical ground-state energy.
\end{proposition}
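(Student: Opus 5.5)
The argument has two parts. First, we show that equality $\mathcal R_2=\mathcal R_1=\epsilon(E_1)$ is impossible. Then a strict trial bound is converted into a strict lowering of the physical energy via Theorem~\ref{thm:unique}.

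\emph{Conversion step.} Suppose $\lambda_{\min}[\mathcal H_{\mathcal C_2}(E_1)]<\epsilon(E_1)=E_1-2t_1^2/E_1$. Adding back the uniform shift $2t_1^2/E_1$ gives $g_{\mathcal C_2}(E_1)<E_1$. Lemma~\ref{lem:count} then places an eigenvalue of $H_{\mathcal C_2}$ strictly below $E_1$, so $E_2<E_1$. It therefore suffices to show that the inequality in Eq.~\eqref{eq:connectivity_proof} is strict, or, if $\mathcal R_2=\mathcal R_1$, that the fixed trial vector $\psi$ is not an eigenvector of $\mathcal H_{\mathcal C_2}(E_1)$. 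In the latter case the variational principle is strict: a non-eigenvector attaining the Rayleigh quotient $\epsilon$ implies $\lambda_{\min}<\epsilon$.

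\emph{Equality analysis.} Assume $\mathcal R_1=\mathcal R_2$ and $\psi$ is an eigenvector of State 2 with eigenvalue $\epsilon(E_1)$. Since $\tilde t_t>0$, we have $\psi>0$ everywhere. All terms in Eq.~\eqref{eq:connectivity_proof} must vanish, and in particular $\tilde t_s(\psi_c-\psi_a)(\psi_c-\psi_b)=0$. Now compare the eigenvalue equations of State 1 and State 2 at sites whose bond sets changed. At $a$, State 1 couples $a$ to $c$ with $\tilde t_s$, while State 2 couples $a$ to $b$ with $t_{bc}$. Subtracting the two equations gives $\tilde t_s\psi_c=t_{bc}\psi_b$, plus a boundary-potential difference if $a$ becomes an end site. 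Since $\tilde t_s>t_{bc}$ and $\psi_c\ge\psi_b>0$, the left side strictly exceeds the right. The boundary term has the wrong sign to compensate: it appears as $+U_{\rm b}\psi_a$ on the State-2 side, and a similar check shows this still contradicts unless all amplitudes coincide.

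\emph{Main obstacle.} The hard step is the degenerate sub-case in which the local identities might be satisfied simultaneously, roughly $\psi_a=\psi_b=\psi_c=\psi_d=\psi_e$. To exclude it I would follow the main text. Local equality propagates along the chain and forces the eigenvector condition at $c$ to read $\epsilon=-2\tilde t_s$ on a locally flat profile, i.e.\ $\varepsilon(E)=-2\tilde t_s$. This contradicts the open-chain spectral bound
\[
\varepsilon(E)\geq-2\tilde t_s\cos\!\left(\frac{\pi}{L+2}\right)>-2\tilde t_s ,
\]
which follows because $\mathcal H_{\mathcal C}(E)$ is entrywise dominated, in absolute value, by the uniform $\tilde t_s$ chain with nonnegative boundary potential. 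Concretely, I would write the difference of eigenvalue equations at $c$ and at $e$: at these sites the coupling changes from $t_{bc}\psi_b+\tilde t_s\psi_a$ to $t_{ed}\psi_d+\tilde t_s\psi_e$, and from $t_{ed}\psi_d$ to $\tilde t_s\psi_c$. These yield $\tilde t_s\psi_e=t_{ed}\psi_e+\ldots$, which forces $t_{ed}=\tilde t_s$, or $\psi_d=0$, which is excluded by positivity. Combining this with the equation at $a$ gives the contradiction. The end-site case $c$ is handled identically, with $\psi_b=t_{bc}=0$.
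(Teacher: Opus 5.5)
Your outer structure is the paper's. You convert a strict trial bound into $E_2<E_1$ through Lemma~\ref{lem:count}, reduce the equality case to $\psi$ being a ground-state eigenvector of the rearranged chain, and subtract eigenvalue equations at the altered sites. The contradiction is then $\epsilon(E)=-2\tilde t_s$ against the open-chain bound.

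As written, however, the equality analysis contains a false step and does not close. At site $a$ you use $\tilde t_s>t_{bc}$. But $t_{bc}\in\{\tilde t_t,\tilde t_s\}$, and $t_{bc}=\tilde t_s$ whenever the peak $c$ is an interior site of $\mathcal I_1$, so that $(c,b)$ is itself a singlet bond. In that case the $a$-equation only yields $\psi_b=\psi_c$, and the $b$-equation only yields $\psi_a=\psi_c$; neither is a contradiction. The $a$-equation does give an immediate contradiction in two cases: when $t_{bc}=\tilde t_t$, and when $c$ is an end site, where it reads $\tilde t_s\psi_c+U_{\rm b}\psi_a=0$ outright, with no ``unless all amplitudes coincide'' caveat.

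Your fallback computation does not repair this. The $e$-equation $t_{ed}\psi_d=\tilde t_s\psi_c$, together with $\psi_d\le\psi_c$ and $t_{ed}\le\tilde t_s$, forces both $t_{ed}=\tilde t_s$ and $\psi_d=\psi_c$, not ``$t_{ed}=\tilde t_s$ or $\psi_d=0$''. To turn the State-2 equation at $c$, $\epsilon\psi_c=-t_{ed}\psi_d-\tilde t_s\psi_e$, into $\epsilon=-2\tilde t_s$, you also need $\psi_e=\psi_c$, and you never derive it.

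The paper obtains it from the equation at $d$. The only change there is that $d$'s neighbor across a bond of strength $t_{ed}$ switches from $e$ to $c$, so $t_{ed}\psi_e=t_{ed}\psi_c$, and hence $\psi_e=\psi_c$ because $t_{ed}>0$. The $d$- and $e$-equations together give $\psi_d=\psi_c=\psi_e$ and $t_{ed}=\tilde t_s$. The State-2 equation at the now-interior site $c$ then gives the contradiction. This works uniformly in the type of $(c,b)$ and in whether $c$ was an end site; neither the equation at $a$ nor any propagation along the chain is needed.

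Alternatively, you could use your $c$-difference $\tilde t_s\psi_a+t_{bc}\psi_b=t_{ed}\psi_d+\tilde t_s\psi_e$. Combined with $t_{ed}\psi_d=\tilde t_s\psi_c$, $\psi_e\ge\psi_c\ge\psi_a,\psi_b$ and $t_{bc}\le\tilde t_s$, it also forces $\psi_e=\psi_a=\psi_c$. That chain of inequalities must be written out explicitly, though; it is not in your proposal.
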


\begin{proof}
If $\mathcal R_2<\epsilon(E)$, the result follows immediately
from Lemma~\ref{lem:count}. It remains to exclude equality.
Suppose that $\mathcal R_2=\epsilon(E)$ and that the rearranged
configuration has no lower eigenvalue. The rearranged trial
state must then be a ground-state eigenvector of
$\mathcal H_2(E)$.

Subtracting the eigenvalue equations of the two configurations
at the junction sites gives
\begin{equation}
t_{ed}\psi_e=t_{ed}\psi_c,
\qquad
t_{ed}\psi_d=\tilde t_s\psi_c.
\label{eq:junction_equalities}
\end{equation}
Since $t_{ed}>0$, $t_{ed}\leq\tilde t_s$, and
$\psi_e\geq\psi_c\geq\psi_d>0$, these equations require
\begin{equation}
t_{ed}=\tilde t_s,
\qquad
\psi_d=\psi_c=\psi_e.
\end{equation}
Under the equality assumption, the rearranged state is a
ground-state eigenvector of $\mathcal H_2(E)$ with eigenvalue
$E_{\rm 1D}(E)$. The eigenvalue equation at the interior site
$c$ therefore gives
\begin{equation}
E_{\rm 1D}(E)=-2\tilde t_s.
\end{equation}
However, all hoppings in $\mathcal H_2(E)$ are bounded by
$\tilde t_s$, and its boundary potential is non-negative.
Comparison with a uniform open chain of $L+1$ sites therefore
gives
\begin{align}
E_{\rm 1D}(E)
=\lambda_{\min}[\mathcal H_2(E)]
\geq
-2\tilde t_s\cos\left(\frac{\pi}{L+2}\right)
>
-2\tilde t_s,
\end{align}
which is a contradiction. Hence equality cannot occur, and the
rearrangement strictly lowers the physical ground-state energy.
\end{proof}

\textbf{Case $\tilde t_t=0$.} Let $\mathcal C$ be minimizing with
$\tilde t_t(E_{\mathcal C})=0$, and let $\mathcal B$ and $\phi$ be
defined as in Remark~\ref{rem:t_tzero}. Suppose that some singlet
bond lies outside $\mathcal B$. Take a triplet bond with exactly
one endpoint in $\mathcal B$ and replace it by a singlet. Then
replace one singlet bond outside $\mathcal B$ by a triplet, so
that $k$ is unchanged. At $E=E_{\mathcal C}$, the second change
only affects blocks on which $\phi=0$. The argument of
Remark~\ref{rem:t_tzero} therefore again gives a strictly lower
physical ground-state energy, contradicting minimality. Hence all
singlet bonds lie in $\mathcal B$ and form one connected region.


In both cases, a minimizing configuration with two or more
singlet regions can be rearranged into a configuration with the
same $k$ and a strictly lower physical ground-state energy,
contradicting minimality. Hence every minimizing configuration
has a single connected singlet region, which proves
Theorem~\ref{thm:connectivity}.

\section{Center-positioning proof}
\label{app:center_pos}

Having established connectivity, we now determine the position of
the singlet region. Let the left and right triplet regions contain
$p-1$ and $q-1$ triplet bonds, respectively. Their basal-site
amplitudes are denoted by
$\psi_{\alpha_1},\ldots,\psi_{\alpha_p}$ and
$\psi_{\beta_1},\ldots,\psi_{\beta_q}$, ordered from the chain
ends toward the singlet region. Thus, $\alpha_1$ and $\beta_1$
are the two end sites and carry the boundary potential
$U_{\rm b}>0$, while
$(\alpha_p,\alpha_{p+1})$ and
$(\beta_q,\beta_{q+1})$ are the outermost singlet bonds.
We set $\psi_{\alpha_0}=0$.

We first consider $\tilde t_s>\tilde t_t>0$. The mapped chain is
then connected, so its ground-state amplitudes can be chosen
strictly positive.

\begin{theorem}[Center positioning]
\label{thm:center_position}
For $\tilde t_s>\tilde t_t>0$, the ground-state configuration
satisfies
\begin{equation}
|p-q|\leq1.
\end{equation}
\end{theorem}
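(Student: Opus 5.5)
Assume for contradiction that a minimizing configuration $\mathcal C$ has $p\geq q+2$, so the left tail is at least two bonds longer than the right one. Let $E=E_{\mathcal C}$ and let $\psi>0$ be the Perron ground state of $\mathcal H_{\mathcal C}(E)$. The plan is a tail-exchange construction. Reflect the right tail, $\beta_1,\ldots,\beta_q$ together with its attached boundary site, onto the left end, and transport the last $q$ sites of the left tail to the right. Equivalently, we compare $\mathcal C$ with the configuration $\mathcal C'$ in which the singlet region is shifted by one bond toward the center, so the tails have $p-2$ and $q$ triplet bonds. We then show $\lambda_{\min}[\mathcal H_{\mathcal C'}(E)]<E_{\rm 1D}(E)$. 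Lemma~\ref{lem:count} and Theorem~\ref{thm:unique} then give $E_{\mathcal C'}<E_{\mathcal C}$, contradicting minimality. Iterating the shift gives $|p-q|\le1$.

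The first step is to describe the tails explicitly. On each triplet region the amplitudes obey the uniform recursion $\tilde t_t(\psi_{n-1}+\psi_{n+1})=-E_{\rm 1D}\psi_n$. At the end site there is the extra condition $(U_{\rm b}+E_{\rm 1D})\psi_{\alpha_1}=-\tilde t_t\psi_{\alpha_2}$. Since $E_{\rm 1D}<-2\tilde t_t$ by comparison with the singlet-bounded spectrum, the tails are combinations of $e^{\pm\kappa n}$ with $\cosh\kappa=|E_{\rm 1D}|/(2\tilde t_t)$. Positivity and the boundary condition fix the ratios $r_n=\psi_{\alpha_{n}}/\psi_{\alpha_{n+1}}$ along the tail. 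I would show that $r_n$ is monotone in the distance from the end and increases toward the fixed point $e^{-\kappa}$. The consequence is that a longer tail presents a larger effective "logarithmic derivative" load at the singlet edge. Concretely, the Schur complement of a tail of length $m$ on its junction site is $\Sigma_m(E_{\rm 1D})=\tilde t_t r_{m}$, which is an increasing sequence in $m$.

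The second step is to eliminate both tails exactly. The singlet region then becomes a fixed open chain of $\ell_{\rm s}+1$ sites with edge self-energies $-\tilde t_t^{\,}\,\Sigma$-type terms $\sigma_p$ and $\sigma_q$ at its two ends. The configuration energy is determined by the lowest eigenvalue of this chain, evaluated self-consistently, with the same monotonicity as in Theorem~\ref{thm:unique}. Moving one tail bond from the left to the right changes $(\sigma_p,\sigma_q)$ to $(\sigma_{p-1},\sigma_{q+1})$. A more negative self-energy lowers the eigenvalue, so the question reduces to whether $\psi_{\rm edge,L}^2\,\sigma_p+\psi_{\rm edge,R}^2\,\sigma_q$ decreases. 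Here the edge amplitudes come from the singlet-chain eigenvector. The chain is reflection symmetric, and $\sigma_m$ converges geometrically to $\sigma_\infty$ with $|\sigma_m-\sigma_\infty|\propto e^{-2\kappa m}$, which is a convex decreasing function. First-order perturbation in the differences then shows that balancing is strictly favorable: by convexity, $\sigma_{p-1}+\sigma_{q+1}<\sigma_p+\sigma_q$ in the relevant sense. The edge weights have to be controlled as well. The region adjacent to the shorter tail has the larger edge amplitude, which only reinforces the gain. To make this rigorous, I would use the variational trial state built from the original $\psi$ on the singlet region and glued to the exact tails of the new lengths. This avoids a perturbative argument.

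The main obstacle is this last comparison. One must show that the trial state strictly lowers the Rayleigh quotient without relying on first-order perturbation. The edge amplitudes are unequal when $p\neq q$, and the convexity of $m\mapsto\sigma_m$ must be established from the explicit hyperbolic-function formula, including the influence of $U_{\rm b}$. I would first compute $\sigma_m$ in closed form as a ratio $\sinh$-type expression and verify convexity directly. If the unequal edge weights cause trouble, the fallback is a direct two-site exchange argument in the spirit of Lemma~\ref{lem:rearr}. In that version we use that $\psi$ is larger at the singlet edge facing the shorter tail, and we exchange bond types near the two singlet edges, which gives a sign-definite change in $\mathcal R$. Strictness would follow as in Proposition~\ref{prop:strict}, since equality would force an eigenvector relation incompatible with $\tilde t_t>0$ and $\tilde t_s>\tilde t_t$.
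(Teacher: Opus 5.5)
There is a genuine gap: the trial state you commit to in step two does not lower the Rayleigh quotient in general, so the step you flag as the main obstacle is false, not merely unproven. Your outer logic is sound as a sufficient criterion: argue by contradiction, compare at fixed $\lambda=E_{\rm 1D}(E)$, Schur-eliminate the tails, and convert back via Lemma~\ref{lem:count} and Theorem~\ref{thm:unique}. The problem is the trial state itself, namely the original singlet-region amplitudes translated by one bond and glued to exact tails of the new lengths.

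Both supporting heuristics fail. Let $s_n>0$ be the self-energy magnitude that a tail with $n$ triplet bonds induces on its junction site at fixed $\lambda$, with $s_0=-U_{\rm b}$ for an end-touching region. Let $\psi_L,\psi_R$ be the amplitudes of $\psi_s$ at the two edges of the singlet region, and $s_L,s_R$ the corresponding self-energy magnitudes.
\begin{itemize}
\item \emph{Edge weights.} Since $s_n$ increases with $n$, the reduced singlet chain is \emph{more} attractive at the long-tail edge. Comparing $\psi_s$ with its mirror image gives $(s_L-s_R)(\psi_L^2-\psi_R^2)\geq0$. So the edge facing the \emph{longer} tail has the larger amplitude, and the weights oppose the concavity gain rather than reinforce it.
\item \emph{Concavity.} This needs $E_{\rm 1D}<-2\tilde t_t$, which is not guaranteed. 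As $t_2\to0^+$ one has $E_{\rm 1D}\to-2\tilde t_t\cos[\pi/(L+1)]$, so the tails oscillate and there is no fixed point $e^{-\kappa}$.
\end{itemize}

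Here is a concrete failure. Take $L=3$, $k=2$, and compare TTS with TST on the basal bonds $(1,2),(2,3),(3,4)$ as $t_2/t_1\to0^+$, where $\tilde t_s>\tilde t_t>0$ still holds.
\begin{itemize}
\item In this limit $\tilde t_s,\tilde t_t,U_{\rm b}\to u=t_1^2/|E|$ and $\lambda=-\sqrt2\,u$. The basal ground state is $\psi\propto(1,1+\sqrt2,1+\sqrt2,1)$, with $s_0=-u$, $s_1=(\sqrt2-1)u$, $s_2=u$.
\item Your trial for TST puts $(1+\sqrt2,1)$ on sites $2,3$. The Schur complement there has diagonal $-\lambda-s_1=u$ and off-diagonal $-u$, so $\langle\psi_s|S'(\lambda)|\psi_s\rangle=2u>0$.
\item By continuity this stays positive for small $t_2>0$. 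Yet first-order perturbation shows TST is lower by $\simeq t_2/2$.
\end{itemize}
The obstruction is structural. Near $t_2=0$ all placements are nearly degenerate, so centering gains only $O(t_2)$, while freezing an off-centre singlet profile costs $O(u)$.

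The missing idea is a trial state that re-profiles the amplitudes across the junction, chosen by comparing amplitudes at matched depths in the two tails. The paper does this as follows, for $q\geq p+2$:
\begin{itemize}
\item It takes the smallest $m\leq p+1$ with $\psi_{\alpha_m}\geq\psi_{\beta_{m+1}}$. Existence follows from Lemma~\ref{lem:rearr} applied to the singlet bond $(\alpha_p,\alpha_{p+1})$ and the triplet bond $(\beta_{p+1},\beta_{p+2})$.
\item It swaps the segments $(\alpha_1,\ldots,\alpha_{m-1})\leftrightarrow(\beta_1,\ldots,\beta_m)$. For $m=p+1$ this moves the junction amplitude itself into the other tail. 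In the example above it keeps the profile $(1,1+\sqrt2,1+\sqrt2,1)$ and simply moves the singlet onto the central pair.
\item The Rayleigh-quotient change then factorizes as $(\psi_{\beta_m}-\psi_{\alpha_{m-1}})[\tilde t_t(\psi_{\alpha_m}-\psi_{\beta_{m+1}})+(\tau-\tilde t_t)\psi_{\alpha_m}]$, with $\tau=\tilde t_s$ for $m=p+1$ and $\tau=\tilde t_t$ otherwise, plus a non-negative boundary term when $m=1$. Each factor is non-negative by the choice of $m$.
\item Strictness comes from the eigenvalue equation at the site whose outer neighbour changes, or at the new end site when $m=1$.
\end{itemize}
Your fallback two-site exchange would have to become this adaptive construction. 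It cannot rest on the premise that $\psi$ is larger on the short-tail side.
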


\begin{proof}
Assume, without loss of generality, that $q\geq p+2$, and suppose
that this configuration minimizes the physical ground-state
energy at fixed $k$. We construct another configuration with
tails $(p+1,q-1)$ and strictly lower energy.

Let $m$ be the smallest index in $\{1,\ldots,p+1\}$ satisfying
\begin{equation}
\psi_{\alpha_m}\geq\psi_{\beta_{m+1}}.
\label{eq:m_definition}
\end{equation}

\emph{Existence of $m$.}
Suppose that no $m\leq p$ satisfies
Eq.~\eqref{eq:m_definition}. In particular,
$\psi_{\alpha_p}<\psi_{\beta_{p+1}}$. The bond
$(\alpha_p,\alpha_{p+1})$ is a singlet bond, whereas
$(\beta_{p+1},\beta_{p+2})$ is a triplet bond because
$q\geq p+2$. Lemma~\ref{lem:rearr} therefore gives
\begin{equation}
\psi_{\alpha_p}\psi_{\alpha_{p+1}}
\geq
\psi_{\beta_{p+1}}\psi_{\beta_{p+2}}.
\end{equation}
Together with
$\psi_{\alpha_p}<\psi_{\beta_{p+1}}$, this implies
$\psi_{\alpha_{p+1}}>\psi_{\beta_{p+2}}$, so $m=p+1$
satisfies Eq.~\eqref{eq:m_definition}.

\emph{Trial state.}
Construct a trial state $\phi$ for the configuration
$(p+1,q-1)$ by exchanging the ordered segments
\begin{equation}
(\alpha_1,\ldots,\alpha_{m-1})
\quad\longleftrightarrow\quad
(\beta_1,\ldots,\beta_m).
\end{equation}
The new left and right tails are
\begin{align}
&\beta_1,\ldots,\beta_m,\alpha_m,\ldots,\alpha_p,
\nonumber\\
&\alpha_1,\ldots,\alpha_{m-1},
\beta_{m+1},\ldots,\beta_q,
\end{align}
respectively. This permutation preserves the norm. For $m\ge2$ it exchanges the two end amplitudes $\psi_{\alpha_1}$ and
$\psi_{\beta_1}$, leaving the total boundary contribution
$U_{\rm b}(\psi_{\alpha_1}^2+\psi_{\beta_1}^2)$ unchanged. For $m=1$, $\alpha_1$ becomes an interior site and $\beta_2$ becomes the right end site, so the boundary contribution changes by $U_{\rm b}(\psi_{\beta_2}^2-\psi_{\alpha_1}^2)$.

Only two bonds acquire new endpoint amplitudes. Let
\begin{equation}
\tau=
\begin{cases}
\tilde t_s, & m=p+1,\\
\tilde t_t, & m\leq p,
\end{cases}
\end{equation}
be the hopping on the bond immediately outside $\alpha_m$.
The other changed bond is a triplet bond with hopping
$\tilde t_t$.

Let $E$ be the physical ground-state energy of the assumed
minimizing configuration and let
$E_{\rm 1D}(E)=E-2t_1^2/E$ be the corresponding reduced-chain
eigenvalue. Denoting the original and trial Rayleigh quotients by
$\mathcal R_{p,q}$ and $\mathcal R_\phi$, respectively, one finds
\begin{align}
\frac{\mathcal R_{p,q}-\mathcal R_\phi}{2}
=
\bigl(\psi_{\beta_m}-\psi_{\alpha_{m-1}}\bigr)
\Big[
\tilde t_t
\bigl(\psi_{\alpha_m}-\psi_{\beta_{m+1}}\bigr)+(\tau-\tilde t_t)\psi_{\alpha_m}
\Big]+\delta_{m,1}\frac{U_b}{2}\left(\psi_{\alpha_1}^2-\psi_{\beta_2}^2\right)
\geq0.
\label{eq:center_energy_difference}
\end{align}
Here
$\psi_{\beta_m}>\psi_{\alpha_{m-1}}$ follows from the
minimality of $m$, while
$\psi_{\alpha_m}\geq\psi_{\beta_{m+1}}$ follows from its
definition. Therefore
\begin{equation}
\mathcal R_\phi
\leq
\mathcal R_{p,q}
=
E_{\rm 1D}(E).
\end{equation}
For $m=1$, the additional boundary term is non-negative because $\psi_{\alpha_1}\ge\psi_{\beta_2}$ by the definition of $m$. If the inequality is strict, the lowest eigenvalue of the
rearranged reduced Hamiltonian lies below $E_{\rm 1D}(E)$.
After restoring the uniform term $2t_1^2/E$, Lemma~\ref{lem:count}
implies that the rearranged sawtooth Hamiltonian has a physical
ground-state energy below $E$, contradicting the assumed
minimality.

It remains to exclude equality. Equality in
Eq.~\eqref{eq:center_energy_difference} can occur only when
$m\leq p$ and
$\psi_{\alpha_m}=\psi_{\beta_{m+1}}$. The trial state would
saturate the variational bound only if it were an eigenvector of
the rearranged Hamiltonian. For $m\geq2$, the site carrying
$\psi_{\alpha_m}$ changes its outer neighbor from
$\psi_{\alpha_{m-1}}$ to $\psi_{\beta_m}$. Its eigenvalue
equation could remain satisfied only if
\begin{equation}
\psi_{\alpha_{m-1}}=\psi_{\beta_m},
\end{equation}
which contradicts the minimality of $m$.

For $m=1$, the site carrying $\psi_{\beta_2}$ becomes an end
site. Equality of its original interior equation and its new
boundary equation would require
\begin{equation}
\tilde t_t\psi_{\beta_1}
+U_{\rm b}\psi_{\beta_2}=0,
\end{equation}
which is impossible because
$\tilde t_t>0$, $U_{\rm b}\geq0$, and all amplitudes are positive.
Thus, the rearranged trial state cannot be a ground-state
eigenvector, and its true ground-state energy is strictly lower.

Note that the Rayleigh-quotient comparison is performed at fixed
physical energy $E$ because the mapped hoppings depend on $E$.
Lemma~\ref{lem:count} is then required to convert the lowering of
the reduced-chain eigenvalue into a lowering of the physical
sawtooth-chain energy.

We have therefore contradicted the assumption $q\geq p+2$.
The case $p\geq q+2$ follows by reflection, proving
$|p-q|\leq1$.
\end{proof}

Since the left and right regions contain $p-1$ and $q-1$ triplet
bonds, respectively, an even number of triplet bonds is divided
equally between the two ends, while an odd number differs by one.

\begin{remark}[Compact point]
When $\tilde t_t=0$, the preceding strict-positioning argument
does not apply. At
$t_2/t_1=t_{\ell_{\rm s}}$, Appendix~\ref{app:cls_exact}
constructs an exact compact localized state whose energy is
independent of the position of the singlet region when the singlet island is separated by at least one triplet dimer from each end of the chain.  All allowed positions are therefore degenerate. End-touching configurations have a higher energy due to boundary potential $U_{\rm b}$ and do not belong to this compact-state manifold.
\end{remark}

\section{Penetration depth, compact localized states, and thermodynamic limit solution}
\label{app:thermo_limit}

In this appendix, we derive the penetration depth of the hole into the
triplet regions and identify the couplings at which the hole wavefunction
becomes strictly compact. We consider an open sawtooth chain with $L+1$
basal sites, labeled $i=1,\ldots,L+1$, and $L$ apex sites, labeled
$j=1,\ldots,L$. The amplitudes on the basal and apex sites are denoted by
$\psi_i$ and $\phi_j$, respectively. A basal bond $i$ is assigned
$\tau_i=+t_2$ when its dimer is a singlet and $\tau_i=-t_2$ when its
dimer is a triplet, so that its hopping matrix element is $-\tau_i$.

\subsection{Exact recurrence and kinetic interference}
\label{app:Teff}

For an interior basal site, the Schr\"odinger equations are
\begin{align}
E\psi_i
&=-\tau_{i-1}\psi_{i-1}-\tau_i\psi_{i+1}
-t_1(\phi_{i-1}+\phi_i),
\label{eq:basal_recurrence}
\\
E\phi_j&=-t_1(\psi_j+\psi_{j+1}),
\label{eq:apex_recurrence}
\end{align}
with the absent terms omitted at the physical boundaries. Eliminating the
apex amplitudes gives
\begin{equation}
E_{\rm 1D}(E)\psi_i
=T_{i-1}(E)\psi_{i-1}+T_i(E)\psi_{i+1},
\label{eq:effective_recurrence}
\end{equation}
where
\begin{equation}
E_{\rm 1D}(E)=E-\frac{2t_1^2}{E},
\qquad
T_i(E)=-\tau_i+\frac{t_1^2}{E}.
\label{eq:effective_parameters}
\end{equation}
The one-sided boundary equations are equivalently represented by the
non-negative boundary potential introduced in
Appendix~\ref{app:1D_mapping}.

For singlet and triplet bonds, respectively,
\begin{equation}
T_{\rm s}(E)=-t_2+\frac{t_1^2}{E},
\qquad
T_{\rm t}(E)=+t_2+\frac{t_1^2}{E}.
\label{eq:singlet_triplet_hoppings}
\end{equation}
For $E<0$, their gauge-equivalent magnitudes are
\begin{equation}
\widetilde t_{\rm s}(E)=\frac{t_1^2}{|E|}+t_2,
\qquad
\widetilde t_{\rm t}(E)
=\left|\frac{t_1^2}{|E|}-t_2\right|.
\label{eq:effective_hopping_magnitudes}
\end{equation}
The triplet hopping vanishes at
\begin{equation}
E_\ast=-\frac{t_1^2}{t_2},
\label{eq:E_star}
\end{equation}
whereas $T_{\rm s}(E_\ast)=-2t_2$. Thus, the direct basal and
apex-mediated paths interfere destructively across a triplet dimer and
constructively across a singlet dimer.

\subsection{Ground state in thermodynamic limit}

After gauging away the hopping
signs (doing so preserve the overall wavefunction normalization condition), let $q$ be the wavevector inside the singlet region and
$\kappa=L_{\rm loc}^{-1}$ the inverse penetration depth in the triplet
region. The even ground-state envelope centered at $x=0$ has the form
\begin{equation}
\psi(x)=
\begin{cases}
B\cos(qx),& |x|<\ell_{\rm s}/2,\\[2pt]
A\exp[-\kappa(|x|-\ell_{\rm s}/2)],
&|x|>\ell_{\rm s}/2.
\end{cases}
\label{eq:continuum_profile}
\end{equation}
Continuity of the wavefunction gives
\begin{equation}
A=B\cos\!\left(\frac{q\ell_{\rm s}}{2}\right).
\label{eq:continuum_continuity}
\end{equation}
Because the kinetic coefficients differ on the two sides, the second
matching condition is continuity of the current, not of the derivative. This can be derived from the exact Schrödinger equation at singlet triplet boundary:
\begin{align}
    \Tilde t_t(\psi_{\ell/2+1}-\psi_{\ell/2})-\Tilde t_s(\psi_{\ell/2}-\psi_{\ell/2-1})=-(E_{1D} + \Tilde t_t + \Tilde t_s)\psi_{\ell/2},
    \label{eq:boundary_current}
\end{align}
so that, expanding $\psi_{\ell/2\pm1}-\psi_{\ell/2}$ in continuum limit ($q,\kappa\ll1$) would give
\begin{equation}
\widetilde t_{\rm s}\,q
\tan\!\left(\frac{q\ell_{\rm s}}{2}\right)
=\widetilde t_{\rm t}\,\kappa,
\label{eq:continuum_matching}
\end{equation}
where the effective hoppings are evaluated at the eigenenergy. Together
with the bulk dispersion relations,
\begin{align}
E_{1D}=-2\Tilde{t}_s\cos q,\quad E_{1D}=-2\Tilde{t}_t\cosh \kappa,
\label{eq:bulk_dispersions}
\end{align}
Eq.~\eqref{eq:continuum_matching}
determines $q$ and $\kappa$. The asymptotic form of $q$ follows most transparently from the exact
lattice version of the matching condition. Inserting
$\psi_{\ell_{\rm s}/2+n}=\psi_{\ell_{\rm s}/2}e^{-\kappa n}$ ($n\geq0$)
and $\psi_{\ell_{\rm s}/2+n}=B\cos[q(\ell_{\rm s}/2+n)]$ ($n\leq0$) into
Eq.~\eqref{eq:boundary_current} and using
Eq.~\eqref{eq:bulk_dispersions} gives, without further approximation,
\begin{equation}
\widetilde t_{\rm s}\sin q\,\tan\!\left(\frac{q\ell_{\rm s}}{2}\right)
=\widetilde t_{\rm t}\sinh\kappa
=\sqrt{\widetilde t_{\rm s}^{\,2}\cos^2q-\widetilde t_{\rm t}^{\,2}},
\label{eq:lattice_matching}
\end{equation}
where the second equality uses
$\widetilde t_{\rm t}\cosh\kappa=\widetilde t_{\rm s}\cos q$.
Equation~\eqref{eq:continuum_matching} is its $q,\kappa\ll1$ limit.
The ground state is even and nodeless, so
$0<q\ell_{\rm s}/2<\pi/2$, i.e., $q<\pi/\ell_{\rm s}\to0$. The
right-hand side of Eq.~\eqref{eq:lattice_matching} then tends to
$\sqrt{\widetilde t_{\rm s}^{\,2}-\widetilde t_{\rm t}^{\,2}}$, which is
strictly positive because $\widetilde t_{\rm s}>\widetilde t_{\rm t}\geq0$
[Eq.~\eqref{eq:effective_hopping_order}]. Since $\sin q\to0$ on the
left-hand side, $\tan(q\ell_{\rm s}/2)$ must diverge. Writing
$q\ell_{\rm s}/2=\pi/2-\delta$ gives
$\delta\simeq q\,\widetilde t_{\rm s}/
\sqrt{\widetilde t_{\rm s}^{\,2}-\widetilde t_{\rm t}^{\,2}}
=q\coth\kappa_\infty$, where $\cosh\kappa_\infty=\widetilde t_{\rm s}/
\widetilde t_{\rm t}$ is evaluated at $E_\infty$. Hence
\begin{equation}
q=\frac{\pi}{\ell_{\rm s}+2\coth\kappa_\infty}
+O(\ell_{\rm s}^{-3})
=\frac{\pi}{\ell_{\rm s}}+O(\ell_{\rm s}^{-2}).
\label{eq:q_asymptotic}
\end{equation}
The hole thus behaves as a particle in a box whose effective length
exceeds $\ell_{\rm s}$ by the extrapolation length $\coth\kappa_\infty$
at each end, reflecting its penetration into the triplet regions. At
the compact point, $\widetilde t_{\rm t}\to0$ and
$\coth\kappa_\infty\to1$, and Eq.~\eqref{eq:q_asymptotic} reproduces
the exact result $q=\pi/(\ell_{\rm s}+2)$ of
Appendix~\ref{app:cls_exact}. In the continuum regime $\kappa\ll1$,
$\coth\kappa_\infty\simeq\widetilde t_{\rm s}/(\widetilde t_{\rm t}\kappa)$,
consistent with Eq.~\eqref{eq:continuum_matching}. Only the leading
term $\pi/\ell_{\rm s}$ enters the coefficient $C$ below; the
extrapolation length first affects the $O(\ell_{\rm s}^{-3})$ term of
the energy.

Now, recalling the definitions of $E_{1D},\,\Tilde{t}_s,\,\Tilde{t}_t$ would then provide for an asymptotic result for $\varepsilon(\ell_s)\equiv |E|/t_1$:
\begin{align}
\varepsilon(\ell_s)=\varepsilon_\infty -\frac{C}{\ell_s^2}+\mathcal{O}(\ell_s^{-3}),\quad C=\frac{\pi^2}{2}\left(t+\frac{t^2+1}{\sqrt{t^2+4}}\right),\nonumber\\\varepsilon_\infty=t+\sqrt{t^2+4},\quad t\equiv t_2/t_1.
\label{eq:finite_size_bound}
\end{align}
Let us now estimate the weight of the ground state wavefunction tails in the continuum limit. Equation~\eqref{eq:continuum_matching} then gives
\begin{equation}
A\simeq
B\frac{\widetilde t_{\rm s}q}
{\widetilde t_{\rm t}\kappa}.
\end{equation}
and $B\simeq\sqrt{2/\ell_{\rm s}}$ for $\ell_s\gg1$. 
The probability contained in either tail consequently scales as
\begin{equation}
\int_{\ell_{\rm s}/2}^{\infty}|\psi(x)|^2dx
\simeq
\frac{\pi^2\widetilde t_{\rm s}^{\,2}}
{\widetilde t_{\rm t}^{\,2}\ell_{\rm s}^{\,3}\kappa^3}.
\label{eq:continuum_tail_weight}
\end{equation}
This continuum construction becomes singular as
$\widetilde t_{\rm t}\to0$ and must not be used at the compact point;
there the exact lattice solution of
Eqs.~\eqref{eq:cls_vector}--\eqref{eq:tells_SM} applies.

\subsection{Confinement length $l_s(h)$ behavior with magnetic field}    
In a magnetic field $h$, the energy of this sector is
\begin{equation}
\mathcal E_{\ell_{\rm s}}(h)
=E_{\ell_{\rm s}}-h(L-\ell_{\rm s})
=-hL+E_{\ell_{\rm s}}+h\ell_{\rm s}.
\label{eq:E_field_SM}
\end{equation}
For a given $\ell_{\rm s}$ to be the field-dependent ground state, it
must compete with every other singlet length $r$. Its exact stability
interval is therefore
\begin{equation}
\max_{r>\ell_{\rm s}}
\frac{E_{\ell_{\rm s}}-E_r}{r-\ell_{\rm s}}
<h<
\min_{r<\ell_{\rm s}}
\frac{E_r-E_{\ell_{\rm s}}}{\ell_{\rm s}-r}.
\label{eq:field_plateau_SM}
\end{equation}
The bounded singlet-region length at fixed $h>0$
follows directly: after dropping the constant $-hL$,
the quantity to be minimized is
$E_{\ell_{\rm s}}+h\ell_{\rm s}$. Since
$E_{\ell_{\rm s}}\geq E_\infty$, this expression diverges
as $\ell_{\rm s}\to\infty$, and hence its minimizing
length remains finite as $L\to\infty$. This implies
magnetic confinement when the minimizing region is
nonempty, as occurs below saturation; above saturation,
the minimum is attained at $\ell_{\rm s}=0$ and there
is no singlet cage.

The field-selected length is also a non-increasing function of $h$.
Let $\ell_1$ and $\ell_2$ be minimizers of $E_\ell+h\ell$ at fields
$h_1<h_2$. Then $E_{\ell_1}+h_1\ell_1\le E_{\ell_2}+h_1\ell_2$ and
$E_{\ell_2}+h_2\ell_2\le E_{\ell_1}+h_2\ell_1$. Adding these
inequalities gives $(h_2-h_1)(\ell_1-\ell_2)\ge0$, i.e.,
$\ell_2\le\ell_1$.

In the limit $\ell_s\gg1$ the $l_s(h)$ can be found analytically. Minimizing $\mathcal E_{\ell_s}(h)$ of Eq.~\eqref{eq:E_field_SM} with $E_{\ell_s}$ from Eq.~\eqref{eq:finite_size_bound} would give $\ell_s(h)=(2t_1C/h)^{1/3}$, plotted against the exact $\ell_s(h)$ at Fig.~\ref{fig:confinement}.

\subsection{Lattice penetration depth}
\label{app:lattice_Lloc}

In a uniform triplet region, Eq.~\eqref{eq:effective_recurrence} becomes
\begin{equation}
E_{\rm 1D}(E)\psi_i
=T_{\rm t}(E)(\psi_{i-1}+\psi_{i+1}).
\end{equation}
Writing the decaying solution as $\psi_i\propto z^i$, with $|z|<1$, gives
\begin{equation}
z+z^{-1}=\frac{E_{\rm 1D}(E)}{T_{\rm t}(E)}.
\end{equation}
The exact inverse penetration depth at a finite-size energy $E$ is therefore
\begin{equation}
L_{\rm loc}^{-1}(E)
=\operatorname{arccosh}\!\left(
\frac{|E_{\rm 1D}(E)|}{2|T_{\rm t}(E)|}
\right).
\label{eq:Lloc_finite_energy}
\end{equation}
When $E_{\rm 1D}(E)/T_{\rm t}(E)<-2$, the decaying solution also carries
the staggered factor $(-1)^i$.

In the thermodynamic limit of a long singlet chain $\ell_s\gg1$, plugging the result of $\varepsilon=\varepsilon_\infty$ from Eq.~\eqref{eq:finite_size_bound} into Eq.~\eqref{eq:Lloc_finite_energy} gives
\begin{equation}
L_{\rm loc}^{-1}(t)
=\operatorname{arccosh}\!\left[
\frac{\varepsilon_\infty(\varepsilon_\infty+2t)}
{4|t\varepsilon_\infty-1|}
\right].
\label{eq:Lloc_exact_SM}
\end{equation}
This quantity is the penetration depth into the triplet region; it is not the total spatial extent of the state, which is controlled primarily by $\ell_{\rm s}$.

Expanding Eq.~\eqref{eq:Lloc_exact_SM} for $L_{\rm loc}\gg1$ gives
\begin{equation}
L_{\rm loc}(t)\simeq
\begin{cases}
\displaystyle
\sqrt{\frac{1-t\varepsilon_\infty}{4t\varepsilon_\infty}},
&t<1/\sqrt6,\\[8pt]
\displaystyle
\sqrt{\frac{t\varepsilon_\infty-1}{4}},
&t>1/\sqrt6.
\end{cases}
\label{eq:Lloc_continuum_SM}
\end{equation}
Equation~\eqref{eq:Lloc_continuum_SM} is only a long-length expansion of
the exact lattice expression and should not be used near the compact point.
Both formulas give $L_{\rm loc}=0$ at $t=1/\sqrt6$ in the limit
$\ell_{\rm s}\to\infty$. Crossing this point changes the sign of the
effective triplet hopping and hence introduces the staggered factor in the
decaying tail.

\section{Exact compact state at finite $\ell_{\rm s}$}
\label{app:cls_exact}

We next consider a singlet region consisting of the consecutive basal
bonds
\begin{equation}
a,a+1,\ldots,a+\ell_{\rm s}-1.
\end{equation}
For the construction below, the island is assumed to be separated from
each physical end by at least one triplet bond, so that
$2\leq a\leq L-\ell_{\rm s}$. Define
\begin{equation}
\theta_{\ell_{\rm s}}=\frac{\pi}{\ell_{\rm s}+2}.
\end{equation}
The basal amplitudes
\begin{equation}
\psi_{a+n}=\mathcal N
\sin[(n+1)\theta_{\ell_{\rm s}}],
\qquad n=0,\ldots,\ell_{\rm s},
\label{eq:cls_vector}
\end{equation}
with
\begin{equation}
\psi_{a-1}=\psi_{a+\ell_{\rm s}+1}=0
\end{equation}
and all remaining basal amplitudes zero form an exact eigenstate when
\begin{equation}
E_\ast=-\frac{t_1^2}{t_2}
=-2t_2\left[1+2\cos\theta_{\ell_{\rm s}}\right].
\label{eq:cls_consistency}
\end{equation}
The corresponding apex amplitudes are fixed by
\begin{equation}
\phi_j=-\frac{t_1}{E_\ast}(\psi_j+\psi_{j+1}).
\label{eq:cls_apex}
\end{equation}
Equation~\eqref{eq:cls_consistency} is equivalent to
\begin{equation}
t_{\ell_{\rm s}}
\equiv\frac{t_2}{t_1}
=\left[
2+4\cos\!\left(\frac{\pi}{\ell_{\rm s}+2}\right)
\right]^{-1/2}
\xrightarrow[\ell_{\rm s}\to\infty]{}\frac{1}{\sqrt6}.
\label{eq:tells_SM}
\end{equation}

To verify the construction, first note that
$T_{\rm t}(E_\ast)=0$. The node equations at
$a-1$ and $a+\ell_{\rm s}+1$ therefore receive no contribution from the
neighboring nonzero basal amplitude. Inside the island,
Eq.~\eqref{eq:effective_recurrence} becomes a uniform recurrence with
$T_{\rm s}(E_\ast)=-2t_2$, and the sine profile
Eq.~\eqref{eq:cls_vector} solves it precisely when
Eq.~\eqref{eq:cls_consistency} holds. The state consequently has strictly
zero amplitude outside the singlet region, its $\ell_{\rm s}$ apex
sites, and the apex sites of the two neighboring triplet triangles.

The construction requires one triplet bond on each side and therefore
applies when $L-\ell_{\rm s}\geq2$. An island touching a physical end does
not satisfy the same node equation and is not part of this degenerate
compact-state manifold, as shown below.

\subsection*{The compact state is a ground state}

Let $\mathcal C$ be any dimer configuration containing the interior
singlet region above, and let $\mathcal K_{\mathcal C}(E)$ denote the
energy-dependent basal matrix defined by the exact mapping in
Appendix~\ref{app:1D_mapping}. At $E=E_\ast$, all effective triplet
hoppings vanish, and $\mathcal K_{\mathcal C}(E_\ast)-E_\ast\mathbb1$
decomposes into independent blocks.

The block associated with the singlet region has diagonal
$4t_2\cos\theta_{\ell_{\rm s}}$ and off-diagonal elements $-2t_2$ after
a gauge transformation. Its eigenvalues are
\begin{equation}
\lambda_m
=4t_2\left[
\cos\theta_{\ell_{\rm s}}
-\cos\!\left(\frac{m\pi}{\ell_{\rm s}+2}\right)
\right],
\qquad m=1,\ldots,\ell_{\rm s}+1,
\label{eq:cls_block_spectrum}
\end{equation}
and are non-negative, with a single zero eigenvalue at $m=1$.
Every interior basal site outside this block is isolated and has diagonal
element
\begin{equation}
\frac{2t_1^2}{E_\ast}-E_\ast
=4t_2\cos\theta_{\ell_{\rm s}}>0,
\end{equation}
whereas either physical end site has diagonal element
\begin{equation}
\frac{t_1^2}{E_\ast}-E_\ast
=t_2\left[1+4\cos\theta_{\ell_{\rm s}}\right]>0.
\end{equation}
Hence
\begin{equation}
\mathcal K_{\mathcal C}(E_\ast)-E_\ast\mathbb1\succeq0,
\end{equation}
and its smallest eigenvalue is exactly zero. Equivalently,
\begin{equation}
\lambda_{\min}[\mathcal K_{\mathcal C}(E_\ast)]=E_\ast.
\end{equation}
By the level-counting theorem of Appendix~\ref{app:1D_mapping}, $E_\ast$
is the physical ground-state energy for every interior placement.

For comparison, if the singlet region touches a physical end, its block
in $\mathcal K_{\mathcal C}(E_\ast)-E_\ast\mathbb1$ differs from the
interior block by the positive rank-one term
$t_2|e\rangle\langle e|$ on the physical end site. The interior block is
positive semidefinite with a one-dimensional kernel, and its zero-mode
wavefunction has nonzero amplitude on $|e\rangle$. The boundary block is
therefore positive definite, so that
\begin{equation}
\lambda_{\min}[\mathcal K_{\mathcal C}(E_\ast)]>E_\ast.
\end{equation}
The ground-state correspondence of Appendix~\ref{app:1D_mapping} then
implies that the end-touching configuration has physical ground-state
energy strictly greater than $E_\ast$. Thus, the compact ground states
occur precisely at the interior positions
$2\leq a\leq L-\ell_{\rm s}$.

For fixed $\ell_{\rm s}$, the construction supplies
$L-\ell_{\rm s}-1$ interior position-degenerate compact ground states
whenever $L-\ell_{\rm s}\geq2$. Their number therefore grows linearly
with $L$ in the field-selected regime $\ell_{\rm s}=O(1)$.


\section{Two holes: one cage or two?}
\label{app:two_hole}

For one hole, Theorem~\ref{thm:exact_gs} shows that the singlet dimers
of the ground state always form a single connected region, the cage of
length $\ell_{\rm s}$, and that a field fixes $\ell_{\rm s}$
independently of $L$. The natural two-hole question is whether this
picture survives: do the two holes share one field-selected cage, or
does each hole build a cage of its own? The exact solution does not
extend to two holes since the residual graph no longer has a unique
dimer covering. We therefore answer this question numerically with the
density-matrix renormalization group
(DMRG)~\cite{White1992,Schollwock2011}, and use the one-hole results
both as exact benchmarks and to interpret the outcome.

For the hopping ratios studied,
$t_2/t_1=0.1,\,0.4,$ and $0.7$, the numerical results
support a shared singlet cage at sufficiently small
positive fields. Its size is selected by the field,
and the finite-size behavior supports a finite
hole separation as $L\to\infty$. At $t_2/t_1=0.1$,
two separate cages are favored in a narrow field
window just below saturation, where the penetration
depth $L_{\rm loc}$ is comparable to the cage length.
No such window is found at $t_2/t_1=0.4$ or $0.7$.
At $h=0$, the separation distribution instead
broadens with system size, consistent with
unbound holes.

\subsection{Shared and separate cages}

With two holes, the chain holds $2L-1$ electrons, and
$S^z_{\rm tot}=(2L-1)/2-N_\downarrow$, where $N_\downarrow$ is the
number of spins antiparallel to the applied magnetic field. Since each singlet dimer carries one antiparallel
spin, $N_\downarrow$ plays the role of the one-hole singlet length
$\ell_{\rm s}=L-M$. Let $E^{(2)}(N_\downarrow)$ denote the zero-field
two-hole ground-state energy at fixed $N_\downarrow$. As in
Eq.~\eqref{eq:field_control}, a field selects
\begin{equation}
N_\downarrow^*(h)=\operatorname*{argmin}_{N_\downarrow}
\bigl[E^{(2)}(N_\downarrow)+hN_\downarrow\bigr].
\label{eq:Ndown_field}
\end{equation}
Once the singlet regions fit in the chain, $N_\downarrow^*$ does not
depend on $L$. Following a fixed sector $N_\downarrow^*$ as $L$ grows
is therefore a scan at fixed field. At $h=0$ we use
$N_\downarrow=L-1$ ($S^z_{\rm tot}=1/2$), which contains the ground
state by SU(2) symmetry.

At fixed $N_\downarrow$, the minority spins can be arranged in two
competing ways.

\textit{Separate cages.} The singlet dimers form two regions of
lengths $\ell_1+\ell_2=N_\downarrow$, separated by triplet dimers,
with one hole in each. Each region is then a one-hole cage of
the main text. Up to corrections exponentially small in their
separation, the energy is $E_{\ell_1}+E_{\ell_2}$, which is minimized
by $\ell_1=\ell_2=N_\downarrow/2$ for even $N_\downarrow$ and by
$(\ell_1,\ell_2)=(k,k+1)$ for $N_\downarrow=2k+1$. 
Thus, the energy becomes independent of the distance
between the cages at large separation, consistent
with two unbound polarons in the thermodynamic limit.

\textit{Shared cage.} All $N_\downarrow$ singlet dimers form one
region that holds both holes, together with the one electron that
cannot be paired into a dimer. Both holes then reside in a region of
fixed size, and their separation is bounded by the cage length for
every $L$.

The one-body hole density (see Fig.~\ref{fig:2holes}) cannot distinguish these arrangements. For
example, a pair in a shared cage with a delocalized center of mass has
the same smooth density as two independent holes. We therefore use the
distribution of the relative separation,
\begin{equation}
P(d)=\sum_i\langle n^h_i\,n^h_{i+d}\rangle,\qquad
\sum_{d\geq1}P(d)=1,
\label{eq:Pd}
\end{equation}
where $n^h_i=1-n_i$ is the hole number on site $i$. In a shared cage of
fixed size, the standard ${\rm std}(d)$ of $P(d)$ does not depend on $L$.
If the holes occupy separate cages, or share a cage that spans the
chain, the width grows linearly with $N_{\rm s}=2L+1$.

\subsection{Numerical method}

We used the TeNPy implementation of DMRG~\cite{TenPy} with conserved
$N$ and $S^z_{\rm tot}$: two-site DMRG with a density-matrix mixer,
followed by single-site sweeps, with bond dimension $\chi\leq600$ and
singular values below $10^{-8}$ discarded. Converged states have maximum
energy variance $\lesssim9\times 10^{-12}\,t_1^2$. For example, the
zero-field state of Fig.~\ref{fig:2holes} ($L=100$) converged in 28
sweeps at $\chi=152$, with discarded weight below $4\times10^{-15}$.
$P(d)$ follows from
$\langle n^h_in^h_j\rangle
=1-\langle n_i\rangle-\langle n_j\rangle+\langle n_in_j\rangle$.

We simulated both magnetic field and zero magnetic field cases in chains up to $L=128$ triangles. Shared and separate cages are nearly degenerate, and DMRG can remain in the arrangement it starts from. In the field scan, every sector was
therefore started from two product states, one with both holes in one
singlet region and one with each hole in its own region, and the lower
energy was kept. This comparison also yields the energy difference
between the two arrangements quoted below.

\begin{figure*}[t]
\centering
\includegraphics[width=0.9\linewidth]{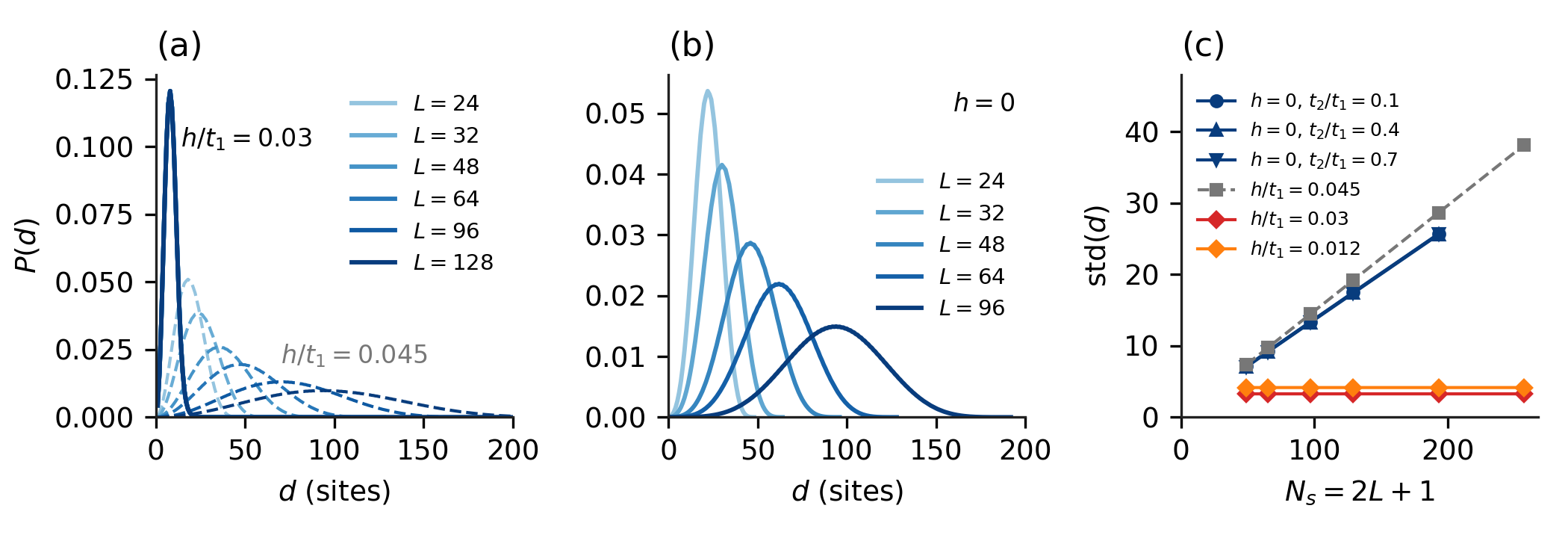}
\caption{Two holes, DMRG results.
(a) Relative-separation distribution $P(d)$,
Eq.~\eqref{eq:Pd}, at $t_2/t_1=0.1$. Solid: $h/t_1=0.03$, where
both holes share one singlet cage ($N_\downarrow=6$); the curves
for $L=24$--$128$ coincide. Dashed: $h/t_1=0.045$, above
saturation; the distribution spreads over the chain.
(b) $P(d)$ at $h=0$, $t_2/t_1=0.1$: the peak moves to
$d\simeq N_{\rm s}/2$ as $L$ grows.
(c) Standard deviation of the hole separation ${\rm std}(d)$ versus $N_{\rm s}=2L+1$, taken from calculated $P(d)$. At $h=0$
(blue, three couplings) and above saturation (grey) it grows
linearly with $N_{\rm s}$; in the field-selected shared-cage sectors
(red, orange) it is independent of $N_{\rm s}$.}
\label{fig:two_hole}
\end{figure*}

\subsection{Finite field}

At $t_2/t_1=0.1$ two fields divide the field axis into three
regimes [Fig.~\ref{fig:two_hole_field}(a)]. The lower one, $h_b$, is $L$-independent to $10^{-6}t_1$ for
$L\geq48$ and to $5\times10^{-5}t_1$ at $L=24$. The saturation field
$h_{\rm sat}$ approaches its $L\to\infty$ value from above, from
$0.0431\,t_1$ at $L=24$ to $0.0393\,t_1$ at $L=128$, because the
energy of the free holes above saturation has a finite-size
correction.

\begin{figure*}[t]
\centering
\includegraphics[width=0.9\linewidth]{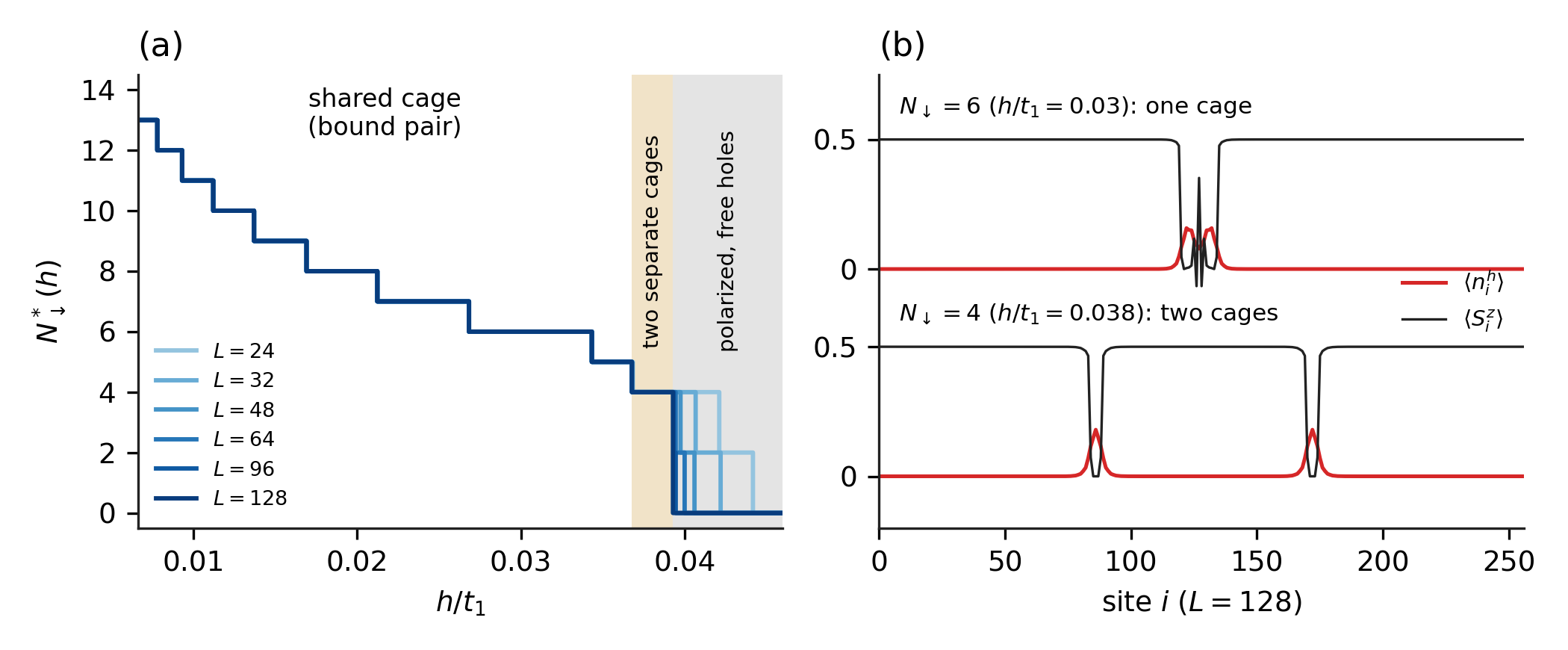}
\caption{Two holes in a field, $t_2/t_1=0.1$.
(a) Field-selected minority-spin number $N_\downarrow^*(h)$ for
$L=24$--$128$. The curves coincide except in a finite-size strip
near saturation, which is widest at $L=24$. Shaded: two separate
cages ($h_b<h<h_{\rm sat}$) and the polarized state
($h>h_{\rm sat}$).
(b) Hole density $\langle n^h_i\rangle$ and $\langle S^z_i\rangle$ at
$L=128$: one shared cage ($N_\downarrow=6$) and two separate cages
($N_\downarrow=4$) with a fully polarized region between them.}
\label{fig:two_hole_field}
\end{figure*}

(i) Above $h_{\rm sat}\simeq0.0392\,t_1$ the background is polarized
and the holes are free spinless fermions. $P(d)$ spreads over the
chain [dashed curves in Fig.~\ref{fig:two_hole}(a)], and its width
grows linearly with $N_{\rm s}$ [Fig.~\ref{fig:two_hole}(c)].

(ii) Between $h_b\simeq0.0368\,t_1$ and $h_{\rm sat}$,
$N_\downarrow^*=4$ and each hole occupies its own $\ell_{\rm s}=2$
singlet region [Fig.~\ref{fig:two_hole_field}(b)].  
The energy of two separated cages approaches
$2E_1(2)$ exponentially in their separation $d$, consistent with two unbound polarons in the
thermodynamic limit.
The saturation field coincides with the one-hole value.

(iii) Below $h_b$ one singlet region supporting both holes is lower than
two separate regions. 
$P(d)$ is independent of $L$ for $24\leq L\leq128$
[solid curves in Fig.~\ref{fig:two_hole}(a)], supporting a bound pair in the thermodynamic limit.
Its size follows the field-selected region and shrinks
in steps with increasing field. The shared singlet region
also holds the one electron that cannot be paired into a dimer; its
net spin $\langle S^z\rangle\simeq1/2$ sits between the two holes
[Fig.~\ref{fig:two_hole_field}(b)].

At $t_2/t_1=0.4$ and $0.7$ one shared singlet region is lower in energy than
two separate ones in every field-selected sector below saturation. 
There is no window of separate regions, and
the pair forms as soon as the field drops below saturation.

\subsection{Zero field}

At $h=0$, the finite-size results are consistent
with unbound holes for all three hopping ratios
studied (see Fig.~\ref{fig:two_hole}(b)).
$P(d)$ peaks at
$\langle d\rangle\simeq N_{\rm s}/2$, and 
the width grows linearly with the system size,
${\rm std}(d)\simeq0.13\,N_{\rm s}+0.8$ for all three couplings (fit to
$L=24$--$96$ from Fig.~\ref{fig:two_hole}(c)).

At $t_2=0$ the sawtooth chain is an open nearest-neighbor chain.
Because particles cannot pass one another at $U=\infty$, the spin
sequence does not affect the energy and the holes are free spinless
fermions~\cite{ogata-shiba}. In the ground state they occupy the
orbitals $\sin(qj)$ and $\sin(2qj)$, $q=\pi/(N_{\rm s}+1)$, and the one-body density is
\begin{equation}
n_h(x)
=\frac{2}{N_{\rm s}+1}
\left[\sin^2(qx)+\sin^2(2qx)\right],
\qquad x=1,\ldots,N_{\rm s}.
\label{eq:twoorbital}
\end{equation}
This density is already bimodal, with maxima at
$x/(N_{\rm s}+1)=(2\pi)^{-1}\arccos(-1/4)\simeq0.29$ and $0.71$, and a
central minimum equal to $16/25\simeq0.64$ of the maximum. Using the exact result of Eq.~\eqref{eq:twoorbital}, one can show that the respective average separation $\langle d\rangle$ is then $d\simeq0.36\,N_{\rm s}$, with
${\rm std}(d)\simeq0.15\,N_{\rm s}$ in the thermodynamic limit. By comparison to our results of Fig.~\ref{fig:two_hole}, one can conclude that the holes are therefore further apart and their separation is narrower
than for free fermions, i.e.\ they avoid each other more strongly. The
one-body density shows the same trend: its maxima are at
$0.25\,N_{\rm s}$ and $0.75\,N_{\rm s}$ instead of $0.29$ and $0.71$, and the
central value is a few percent of the maximum instead of $0.64$. The width grows with $N_{\rm s}$ because at $h=0$ the singlet region
spans the chain, so nothing confines the relative coordinate. Only when
a field makes the singlet region finite can the two holes share one
region.
\begin{figure}[t]
    \centering
    \includegraphics[width=\linewidth]{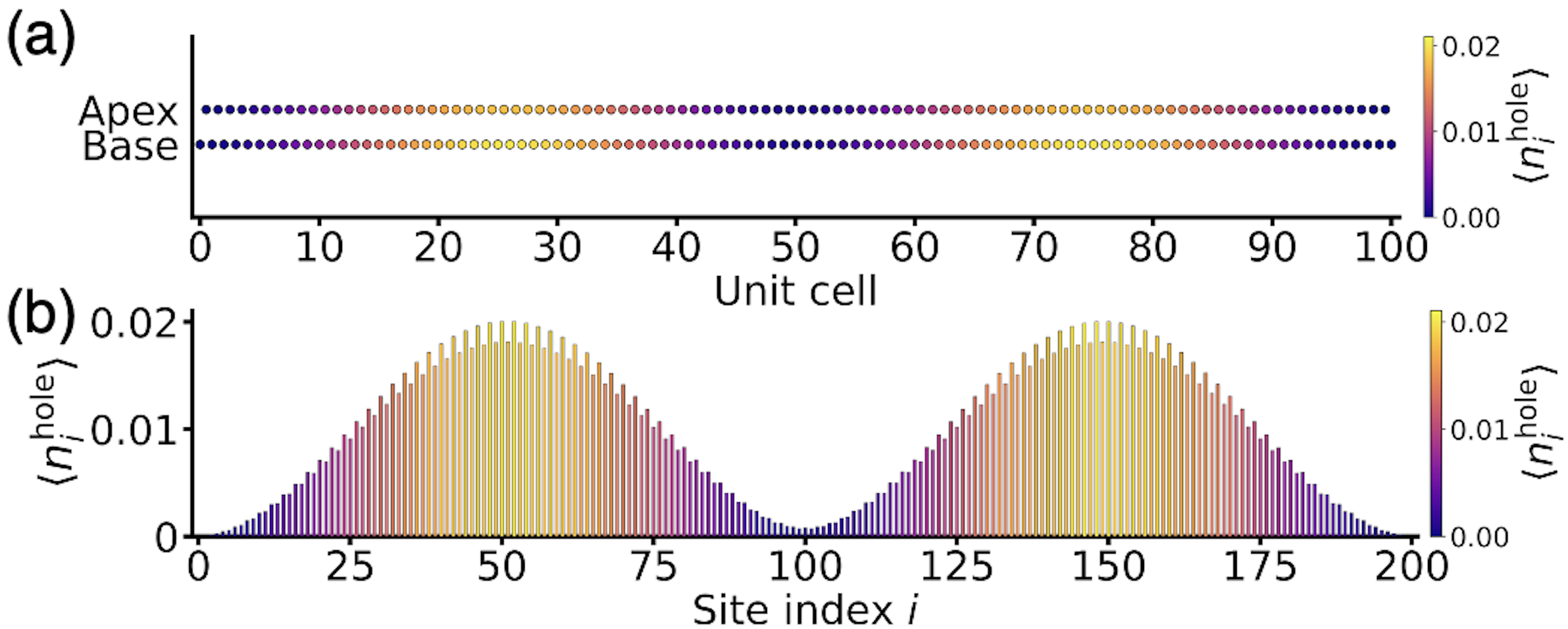}
    \caption{Zero-field two-hole ground-state hole density of the
    sawtooth $U=\infty$ Hubbard chain for $L=100$ triangles
    ($N_{\rm s}=201$), $t_2/t_1=0.1$, $S^z_{\rm tot}=1/2$.
    (a) Sawtooth geometry with color-coded hole density.
    (b) Hole density profile. The central minimum is far deeper than the
    ratio $16/25$ of free spinless fermions,
    Eq.~\eqref{eq:twoorbital}.}
    \label{fig:2holes}
\end{figure}

\subsection{Why one cage wins: a rough estimate}

At fixed $N_\downarrow$ both arrangements have the same Zeeman energy,
so the choice is made by the kinetic confinement energy alone. 
Following the one-hole result, Eq.~\eqref{eq:q_asymptotic},
we model each singlet region as a box with effective
length $\ell+2\delta$, where $\delta$ represents a
boundary extrapolation length. For a long one-hole
cage, $\delta=\coth\kappa_\infty$, which approaches
$L_{\rm loc}$ only when $L_{\rm loc}\gg1$ and remains
equal to one lattice spacing in the compact limit.
Applying a common $\delta$ to shared and separate
two-hole cages is a heuristic approximation.
Relative
to $2E_\infty$, a level $n$ in a box of length $\Lambda$ costs
$\propto n^2/\Lambda^2$. Two separate cages of $N_\downarrow/2$ dimers
each hold one hole in their lowest level,
\begin{equation}
\Delta E_{\rm sep}\propto
\frac{2}{(N_\downarrow/2+2\delta)^2}
=\frac{8}{(N_\downarrow+4\delta)^2}.
\label{eq:two_cages}
\end{equation}
In one shared cage of $N_\downarrow$ dimers, the Pauli principle
places the second hole in the $n=2$ level,
\begin{equation}
\Delta E_{\rm shared}\propto
\frac{1^2+2^2}{(N_\downarrow+2\delta)^2}
=\frac{5}{(N_\downarrow+2\delta)^2}.
\label{eq:shared_cages}
\end{equation}
Both costs scale as $N_\downarrow^{-2}$. For $N_\downarrow\gg\delta$
the shared cage wins by $\propto3/N_\downarrow^2$: halving the box
costs a factor of four for each hole, whereas the second level costs a
factor of four for one hole only. Separate cages have four leaky walls
instead of two, which favors them once the cages become short. From Eqs.~\eqref{eq:two_cages}, \eqref{eq:shared_cages}, separate cages win for
\begin{equation}
N_\downarrow<\frac{4\sqrt5-2\sqrt8}{\sqrt8-\sqrt5}\,\delta
\simeq5.6\,\delta,
\qquad\text{i.e.}\qquad
\ell_{\rm s}<\ell_c\simeq2.8\,\delta
\end{equation}
per cage. 
Within this box model, separate cages can become
competitive when their lengths are comparable to
the boundary extrapolation length. However, this
crossover lies in the short-cage regime, where the
long-wavelength approximation and the treatment
of the holes as free fermions are not controlled.
The estimate therefore provides a qualitative
interpretation, rather than a quantitative prediction
of the separate-cage window. Its location and
existence are determined by the DMRG energy comparison.

\subsection{Summary}


For the three hopping ratios studied, the numerical
results support an extension of the one-hole cage
to a two-hole bound state over a range of fields
below saturation. Both holes occupy a shared singlet
region whose size and relative-separation
distribution remain approximately independent of
system size at fixed positive field.
At $t_2/t_1=0.1$, two separate cages are favored
in a narrow window below saturation; no such
window is found at $t_2/t_1=0.4$ or $0.7$.
The box estimate offers a qualitative explanation
for why separate cages may compete with a shared
cage in the short-cage regime.
An exact solution of the two-hole problem,
which would require going beyond the unique
dimer covering, remains an open question.

\end{document}